\newif\ifsingle
\newtheorem{theorem}{Theorem}
\newtheorem{lemma}{Lemma}
\newtheorem{claim}{Claim}
\newtheorem{prop}{Proposition}
\newtheorem{remark}{Remark}
\newcommand{\off}[1]{}
\DeclarePairedDelimiter{\abs}{\lvert}{\rvert}
\DeclarePairedDelimiter\floor{\lfloor}{\rfloor}
\newcommand{\paren}[1]{\left( #1 \right)}
\newcommand{\cbrace}[1]{\left\{#1\right\}}
\newcommand{\sbrace}[1]{\left[#1\right]}
\newcommand{\calE}{\mathcal{E}}
\newcommand{\calF}{\mathcal{F}}
\newcommand{\calG}{\mathcal{G}}
\newcommand{\calI}{\mathcal{I}}
\newcommand{\calP}{\mathcal{P}}
\newcommand{\calT}{\mathcal{T}}
\newcommand{\allnchoosei}{\sbrace{\nchoosei}}
\newcommand{\allnchoosek}{\sbrace{\nchoosek}}
\newcommand{\allnkchoosei}{\sbrace{\nkchoosei}}
\newcommand{\der}[2]{\frac{\partial}{\partial #1}\paren{#2}}
\newcommand{\derrho}[1]{\der{\rho}{#1}}
\newcommand{\fact}[1]{\paren{#1}!}
\newcommand{\kchoosei}{\binom{K}{i}}
\newcommand{\nchoosei}{\binom{N}{i}}
\newcommand{\nchoosek}{\binom{N}{K}}
\newcommand{\nchooseki}{\binom{N}{K-i}}
\newcommand{\nkchoosei}{\binom{N-K}{i}}
\newcommand{\nkichoosei}{\binom{N-\paren{K-i}}{i}}
\newcommand{\pe}{P_{e}}
\newcommand{\pr}[1]{P\paren{#1}}
\newcommand{\x}{\mathbf{X}}
\newcommand{\yt}{Y^{T}}
\newcommand{\repsfact}{\frac{1}{\binom{K}{i}}}
\newcommand{\ftilde}{\tilde{f}}
\newcommand{\calItilde}{\tilde{\calI}}
\newcommand{\calItildejdiff}{\calItilde^{j}_{dif}}
\newcommand{\decode}{\decodersym\paren{\yt}}
\newcommand{\decodersym}{g}
\newcommand{\ei}{E_{i}}
\newcommand{\eij}{\ei^{j}}
\newcommand{\errevent}{\zeta_{\wprime}}
\newcommand{\esj}{E_{s,j}\paren{\rho, \pw{\cdot}}}
\newcommand{\fn}{FN}
\newcommand{\fnfull}[2]{\fn\paren{#1,#2}}
\newcommand{\fp}{FP}
\newcommand{\fpfull}[2]{\fp\paren{#1,#2}}
\newcommand{\fpset}[2]{\calF\calP_{#1}\paren{#2}}
\newcommand{\fpseti}[1]{\fpset{i}{#1}}
\newcommand{\fpsetis}{\fpseti{S_{1}}}
\newcommand{\fpsetijcounter}{S_{2}\in\fpsetis}
\newcommand{\fraconerho}{\frac{1}{1+\rho}}
\newcommand{\ftilderhos}{\ftilde\paren{\rho,\Tset}}
\newcommand{\gesym}{\calG\calE}
\newcommand{\gen}{\gesym_{N}}
\newcommand{\genk}{\gesym_{N,K}}
\newcommand{\ith}{U_{i}}
\newcommand{\locglobsym}{\Omega}
\newcommand{\locglob}[2]{\locglobsym_{#1}\paren{#2}}
\newcommand{\numstatesmc}{2\paren{N+1}}
\newcommand{\pw}[1]{P_{W}\paren{#1}}
\newcommand{\sw}{S_{\w}}
\newcommand{\swprime}{S_{\w^{\prime}}}
\newcommand{\swstar}{S_{\w^{\star}}}
\newcommand{\totalones}{\abs{\gen}}
\newcommand{\tp}{TP}
\newcommand{\tpfull}[2]{\tp\paren{#1,#2}}
\newcommand{\tpset}[1]{\calT\calP_{#1}}
\newcommand{\tpseti}{\tpset{i}}
\newcommand{\tpsetijcounter}{S_{1}\in\tpseti}
\newcommand{\w}{\omega}
\newcommand{\wprime}{\w^{\prime}}
\newcommand{\wstar}{\w^{*}}
\newcommand{\xfnwprime}{\x_{\fn}}
\newcommand{\xfpwprime}{\x_{\fp}}
\newcommand{\xfpwprimevec}{X_{\fp}}
\newcommand{\xsw}{\x_{\sw}}
\newcommand{\xswprime}{\x_{\swprime}}
\newcommand{\xswstar}{\x_{\swstar}}
\newcommand{\xtpwprime}{\x_{\tp}}
\newcommand{\xtpwprimevec}{X_{\tp}}
\newcommand{\Tset}{S_1}
\newcommand{\Fset}{S_2}
\acrodef{cs}[CS]{Compressed Sensing}
\acrodef{dd}[DD]{Definitely Defectives}
\acrodef{dnd}[DND]{Definitely Not Defective}
\acrodef{fn}[FN]{false negatives}
\acrodef{fp}[FP]{false positives}
\acrodef{ge}[GE]{Gilbert-Elliott}
\acrodef{gt}[GT]{Group Testing}
\acrodef{hmm}[HMM]{Hidden Markov Model}
\acrodef{lhs}[LHS]{left hand side}
\acrodef{lva}[LVA]{List Viterbi Algorithm}
\acrodef{mac}[MAC]{multiple access channel}
\acrodef{map}[MAP]{maximum a-posteriori}
\acrodef{ml}[ML]{maximum likelihood}
\acrodef{msgt}[2SDGT]{Two-Stage Decoding GT}
\acrodef{rhs}[RHS]{right hand side}
\acrodef{tn}[TN]{true negatives}
\acrodef{tp}[TP]{true positives}
\acrodef{va}[VA]{Viterbi Algorithm}
\def\BibTeX{{\rm B\kern-.05em{\sc i\kern-.025em b}\kern-.08em
    T\kern-.1667em\lower.7ex\hbox{E}\kern-.125emX}}
\begin{document}

\title{\textcolor{black}{Two-Stage Decoding Algorithm and Bounds\\ for Group Testing with Prior Statistics}\vspace{-0.2cm}}
\author{
   \IEEEauthorblockN{Ayelet C. Portnoy\IEEEauthorrefmark{1},
                     Amit Solomon\IEEEauthorrefmark{2},
                     and Alejandro Cohen\IEEEauthorrefmark{1}}\\
   \IEEEauthorblockA{\IEEEauthorrefmark{1}%
                      Faculty of ECE, Technion, Haifa, Israel, Emails: ayeletco@campus.technion.ac.il, and alecohen@technion.ac.il}
    \IEEEauthorblockA{\IEEEauthorrefmark{2}%
                     Princeton University, Princeton, NJ, USA, Email: as3993@princeton.edu}
\vspace{-0.57cm}}

\maketitle
\begin{abstract}
In this paper, we propose an efficient \textcolor{black}{two-stage decoding} algorithm for non-adaptive Group Testing (GT) with general correlated prior statistics. The proposed solution can be applied to any correlated statistical prior represented in trellis, e.g., finite state machines and Markov processes. We introduce a variation of List Viterbi Algorithm (LVA) to enable accurate recovery using much fewer tests than objectives, which efficiently gains from the correlated prior statistics structure. We also provide a sufficiency bound to the number of pooled tests required by any Maximum A Posteriori (MAP) decoder with an arbitrary correlation\textcolor{black}{, i.e., dependence between infected items}. Our numerical results demonstrate that the proposed \textcolor{black}{two-stage decoding GT (2SDGT)} algorithm can obtain the optimal MAP performance with feasible complexity in practical regimes, such as with COVID-19 and sparse signal recovery applications, and reduce in the scenarios tested the number of pooled tests by at least $25\%$ compared to existing classical low complexity GT algorithms. Moreover, we analytically characterize the complexity of the proposed \textcolor{black}{2SDGT} algorithm that guarantees its efficiency.

%Moreover, we provide analytical results for the proposed MSGT algorithm that guarantee the maximum complexity.
\end{abstract}

% \begin{IEEEkeywords}
% Algorithm design and analysis, group testing, sparse models.
% \end{IEEEkeywords}

\section{Introduction}

Classical \ac{gt} aims to detect a small number of ``defective" items within a large population by mixing samples into as few pooled tests as possible. The idea of \ac{gt} was first introduced during World War II when it was necessary to discover soldiers infected with Syphilis. Dorfman \cite{dorfman1943detection} showed that the required number of tests could be reduced if multiple blood samples were grouped into pools.
When the samples that participate in the next pool are selected iteratively based on the previous pool test results, the \ac{gt} algorithm is called adaptive. In contrast, in non-adaptive \ac{gt}, the whole process is designed in advance.
Since it was first suggested, the \ac{gt} problem has been investigated and generalized to many areas and applications, among them disease detection \cite{goenka2021contact, srinivasavaradhan2022dynamic,cohen2021multi,solomon2025one}, cyber security applications \cite{thai2008detection}, pattern matching algorithms \cite{indyk1997deterministic} \off{, image compression \cite{hong2002group}} and communication \cite{wolf1985born, cohen2019serial}.

All of these applications imply a strong connection between \ac{gt} and \textcolor{black}{Compressed Sensing} as two methods for sparse signal recovery that share common applications \cite{gilbert2008group,tan2014strong,aksoylar2016sparse,naderi2022sparsity}. The main difference between the two is that \textcolor{black}{Compressed Sensing} aims to recover a real-valued signal \cite{eldar2012theory}, while \ac{gt} recovers a binary signal \cite{aldridge2019group} or discrete-values \cite{cohen2020efficient,cohen2021serial,cohen2021multi}. Thus, one can consider \ac{gt} as a Boolean \textcolor{black}{Compressed Sensing} \cite{atia2012boolean, sejdinovic2010note}.

Traditional \ac{gt} and its performance (i.e., the tradeoff between the number of \textcolor{black}{pooled tests} and recovery algorithm complexity), focuses on the probabilistic model in which the items are identically distributed and independent \cite{aldridge2014group,aldridge2019group}. Recent research explores cases where prior information about the correlation \textcolor{black}{(i.e., dependence)} of objects is available \cite{nikolopoulos2021groupBITS,goenka2021contact, srinivasavaradhan2022dynamic, lau2022model,arasli2023group}. The motivation for this approach arises from the fact that correlated prior statistics have the potential to achieve higher recovery rates and reduce the number of required tests.
In disease detection, leveraging information about the proximity between individuals, represented by contact tracing information or graphs, can lead to significant savings in \textcolor{black}{pooled tests} \cite{nikolopoulos2021groupBITS, goenka2021contact, srinivasavaradhan2022dynamic, lau2022model}. However, previous \ac{gt} works presented solutions designed for specific models and applications and \textcolor{black}{may not extend easily} to other models and applications. In numerous signal processing applications, correlation between different frequencies, time signals, or among different sensors can also be utilized to achieve more precise estimations \cite{zhang2011sparse,eldar2012theory,7868430}. \textcolor{black}{Hidden Markov Model} is a common model for many physical signals, such as speech signals \cite{gales2008application}, human motion \cite{su2023latency}, and spectrum occupancy in communication systems \cite{ghosh2009markov}. Infections can be also modeled as \textcolor{black}{Hidden Markov Model}s \cite{ceres2020characterizing}. For example, \cite{cao2023group} presents a \ac{gt} solution for a specific \textcolor{black}{Hidden Markov Model} derived from contact tracing. To the best of our knowledge, no existing solution addresses the \ac{gt} problem with general Markovian priors and applicable to a wide range of diverse applications.

In this work, we \textcolor{black}{addresses the \ac{gt} problem with general prior statistics.  We adopt a more general formulation in which the number of defective items is not assumed to be fixed a priori, but is instead induced by the underlying correlated prior distribution. In particular, the number of defectives is treated as a random variable, and the system design is based on a typical cardinality interval of the number of defective items. We introduce} a practical non-adaptive \ac{msgt} algorithm for correlated items with prior statistics. The proposed \textcolor{black}{two-stage decoding} algorithm employs a new variation of the parallel \ac{lva} \cite{andrew13viterbi,lou1995implementing,seshadri1994list} we designed for \ac{gt} to enable accurate low complexity recovery using fewer tests. \textcolor{black}{The proposed algorithm can be applied to general Markovian correlated priors, which may be represented using trellis structures \cite{moon2020error}, e.g., finite-state machines and Markov processes.}  \textcolor{black}{While classical group testing assumes exchangeable items, many practical settings admit structured dependencies arising from spatial, temporal, or graph-based relationships. Rather than prescribing a specific physical model, this work adopts a general trellis-based prior as a unifying abstraction for correlated item statistics. This formulation subsumes the i.i.d.\ model as a special case and enables a single decoding framework to operate across diverse applications where correlation information is available or can be estimated.
}\textcolor{black}{ The use of the \ac{lva} is motivated by the observation that, under the Markovian prior model, the population sequence admits a natural trellis representation, and that approximate MAP decoding over this trellis can be performed efficiently by enumerating a small list of the most likely trajectories rather than a single path.} Using \ac{lva}, \ac{msgt} leverages those statistics to estimate the defective set efficiently, even in a regime below the \ac{ml} upper bound. Furthermore, we show how the algorithm's parameters can be tuned to achieve a maximum probability of success without exceeding the limitation of the available computational capacity. We derive a lower bound that considers the exact priors of the problem and provides analytical results that characterize the \ac{msgt} computational complexity efficiency.
We provide an analytical sufficiency bound for the number of pooled tests needed by any \ac{map} decoder. This bound holds for any \textcolor{black}{dependence} between infected items. It also applies to multiple access communication systems, where $K$ out of $N$ users transmit information simultaneously. This is considered in \cite{wu2015partition,wu2014achievable,cohen2020efficient} without prior statistical information. In contrast, existing analyses with prior statistics assume all $N$ users transmit at the same time \cite{slepian1973coding,zhong2006joint,zhong2007joint,campo2011random,campo2012achieving,rezazadeh2019joint,rezazadeh2019error}.
%We also provide an analytical sufficiency bound to the number of pooled tests required by any \ac{map} decoder. This bound holds for any arbitrary underlying correlation between infected items. Note that this analytical bound applies to multiple access communication systems, where $K$ out of $N$ users can transmit information simultaneously, as considered in \cite{varanasi1995group,wu2015partition,wu2014achievable,cohen2020efficient} without prior statistical information, unlike existing analyses with prior statistics where all the $N$ users are transmitting at the same time \cite{slepian1973coding,zhong2006joint,zhong2007joint,campo2011random,campo2012achieving,rezazadeh2019joint,rezazadeh2019error}.
Our numerical results demonstrate that in practical regimes for COVID-19 \cite{lucia2020ultrasensitive,ben2020large} and sparse signal recovery in signal processing \cite{gilbert2008group,tan2014strong,aksoylar2016sparse,eldar2012theory}, the low computational complexity \ac{msgt} algorithm proposed herein can reduce in the scenarios tested the number of \textcolor{black}{pooled tests} by at least $25\%$.
%\amit{Add that we do an analytical analysis for a general correlated channel.} \ale{also in the abstract, the paragraph below, and also update Section~\ref{sec_discussion}.}

The rest of this paper is organized as follows. Section~\ref{sec_problem_form} formally describes the \ac{gt} model with correlated prior statistics. Section~\ref{sec_main_res} presents the \ac{msgt} algorithm and the analytical results, and Section~\ref{sec:theoretical_analysis} presents a sufficiency bound for \ac{map} decoder. Section~\ref{sec_numerical_eval} details the simulation evaluation. Finally, Section~\ref{sec_discussion} provides concluding remarks and future directions.

\section{Problem Formulation}\label{sec_problem_form}
Given a set of individuals $\mathcal{N}$, the objective in GT is to discover a small \textcolor{black}{random} subset $\mathcal{K}$ of unknown defective items using the minimum number of measurements $T$. \textcolor{black}{Let $N=|\mathcal{N}|$ denote the total number of items}\off{, where $K=\mathcal{O}(N^\theta)$ for some $\theta\in[0,1)$}. Let $\mathcal{U}=\{0,1\}$ be the state alphabet, and let $\mathbf{U}=(U_1,\ldots,U_N)\in\mathcal{U}^N$ represent the \textcolor{black}{random population vector}, where $U_i=1$ indicates that the $i$-th item is defective.\off{We assume a sparse regime with $\theta\le 1/3$ \cite{atia2012boolean,aldridge2014group,scarlett2017little,aldridge2019group}, and each item is in one of two states: defective or non-defective.} \textcolor{black}{Under this model, the random defective set is $\mathcal{K} \triangleq \{ i\in\mathcal{N} : U_i=1 \}$  and the number of defective items is the random variable $K \triangleq |\mathcal{K}|=  \sum_{i=1}^N U_i$. We denote by $\mu \triangleq \mathbb{E}[K] = \sum_{i=1}^N \Pr(U_i=1)$ the expected number of defective items.}

\textcolor{black}{
We model the prior over the sequence as a $\tau$-order Markov process.
The joint law of the first $\tau$ variables is specified by an initial distribution $\pi(u_1,\ldots,u_\tau)\triangleq P(U_1=u_1,\ldots,U_\tau=u_\tau)$, for $(u_1,\ldots,u_\tau)\in\mathcal{U}^\tau$.
\textcolor{black}{This initial distribution generalizes the marginal parameters $\{\pi_i\}$ stated earlier: for $\tau=1$ it reduces to $\pi(u_1)=P(U_1=u_1)$ (equivalently, $\pi_1(u_1)$), whereas for $\tau>1$ the model requires specifying the joint initialization of the first $\tau$ states.}
For $i\ge \tau+1$, the dependence on the previous $\tau$ states is specified by the transition kernels $\{\Phi_i\}_{i=\tau+1}^N$, defined by
\[
\Phi_i\!\left[u_i \mid u_{i-\tau},\ldots,u_{i-1}\right]
\triangleq
P\!\left(U_i=u_i \,\big|\, U_{i-1}=u_{i-1},\ldots,U_{i-\tau}=u_{i-\tau}\right).
\]
}
\textcolor{black}{
Equivalently, the transition of the $i$-th item can be represented in matrix form by
\[
\boldsymbol{\Phi}_i \in [0,1]^{2^\tau\times 2},\quad \text{with transition probability} \quad
\boldsymbol{\Phi}_i[\ell,k]
= P\!\big(U_i=s_k \,\big|\, (U_{i-\tau},\ldots,U_{i-1})=s_\ell\big),
\]
where $s_\ell\in\{0,1\}^\tau$ is the $\tau$-bit binary representation of $\ell\in\{0,\ldots,2^\tau-1\}$ and $s_k\in\{0,1\}$ (see Fig.~\ref{fig:phi_example}).
The distribution of the entire chain is then
\[
P\left(U_1=u_1,\ldots,U_N=u_N\right)
= \pi(u_1,\ldots,u_\tau)\prod_{i=\tau+1}^{N}\Phi_i\!\left[u_i \mid u_{i-\tau},\ldots,u_{i-1}\right].
\]
When there is no memory ($\tau=0$), the process reduces to independent (but not necessarily identical) marginals, and the prior information is fully captured by the marginal probabilities
\[
P(U_1=u_1,\ldots,U_N=u_N)
=
\prod_{i=1}^N P(U_i=u_i).
\]
}

\begin{figure}%[b]{.31\linewidth}
    \centering
    \includegraphics[width=0.5\linewidth]{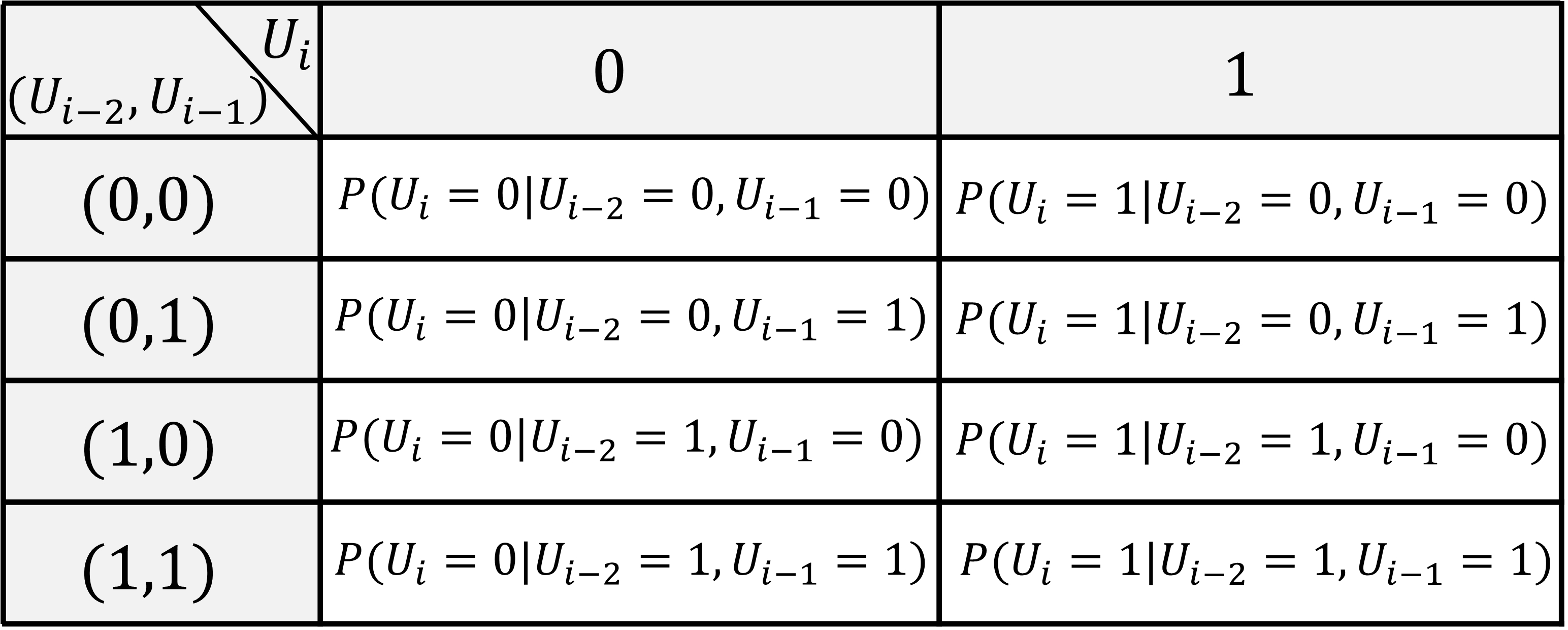}
    \caption{Correlation \textcolor{black}{(i.e., dependence of objects)} matrix $\mathbf{\Phi}_i$ for $\tau=2$.}
    \label{fig:phi_example}
\end{figure}

\textcolor{black}{
Instead of conditioning on a fixed realization of $K$ \cite{aldridge2019group}, we introduce \textcolor{black}{design sparsity bounds $K_{\min}$ and $K_{\max}$} that \textcolor{black}{are used} for test-matrix design and algorithmic thresholds, and \textcolor{black}{are} not assumed to equal the realized value of $K$.
\textcolor{black}{A convenient choice is $K_{\min} \triangleq \lfloor(1-\varepsilon)\mu\rfloor$ and $K_{\max} \triangleq \lceil (1+\varepsilon)\mu\rceil$, for some $\varepsilon\in(0,1)$.}
In subsequent sections, the recovery performance is analyzed by conditioning on the typical event $\textcolor{black}{\{K_{\min}\le K\le K_{\max}\}}$, while accounting separately for the tail probability \textcolor{black}{$\Pr(K<K_{\min})+\Pr(K>K_{\max})$} via concentration bounds on $K$. However, it is important to emphasize that the random variable $K$ is not restricted a priori and may take any value in $\{0,\ldots,N\}$. We focus on sparse regimes with $\mu=\mathcal{O}(N^\theta)$ for some $\theta\in[0,1)$, and in particular $\theta\le 1/3$ \cite{atia2012boolean,aldridge2014group,scarlett2017little,aldridge2019group}.
}

For the non-adaptive GT, the testing matrix $\mathbf{X}\in\left\{ 0,1\right\} ^{T\times N}$ is defined such that each row corresponds to a single pool test, and each column corresponds to a single item. That is, the $i$-th pool test is represented as a binary row vector:
$X_{i}=\left[X_{i}(1),...,X_{i}(N)\right],\,i\in\left\{ 1,...,T\right\}$ whose elements are defined: $X_{i}(j)=1$ if the item with an index $j\in\left\{ 1,...,N\right\}$ is included in the $i$-th pool test, and otherwise $X_{i}(j)=0$. Then, the outcome of the $i$-th pool test is given by
\[
Y_{i}=\bigvee_{j\in\mathcal{K}}X_{i}(j)=\bigvee_{j=1}^{N}X_{i}(j)U_{j},
\]
where $\bigvee$ is the Boolean OR function.

Given $\mathbf{X}$ and the outcome vector $Y$, the recovery success criterion in \ac{gt} can be measured using various metrics \cite{aldridge2019group}. The main metrics we will use herein are exact recovery and partial recovery.
\textcolor{black}{We note that the trellis (Markov) prior adopted herein should be interpreted as a modeling abstraction rather than as an assumption of a fixed physical ordering of items. The index $i$ does not necessarily correspond to spatial or temporal adjacency; instead, it represents an ordering induced by latent structure such as clustering, spatial proximity, temporal evolution, graph embeddings, or other side information available to the designer. When the items are exchangeable and no correlation information is available, the proposed model naturally reduces to the independent prior as a special case ($\tau=0$).
When correlations are present, an appropriate permutation of item indices allows the dependency structure to be represented using a low-order Markov model. Under such correlated priors, the test matrix need not be permutation-invariant, as exploiting prior information inherently breaks symmetry across items. This behavior is not a limitation of the method but rather a consequence of leveraging structured side information to improve recovery performance. The proposed framework does not assume that such priors are always observable in practice; instead, it provides a general recovery architecture capable of exploiting correlated priors whenever they are available or can be inferred from data.}

\textcolor{black}{
We consider the prior probability of a defective set $\mathcal{K}$ through the population vector $\mathbf{U}=\mathbf{1}_{\mathcal{K}}$: $P(\mathcal{K}) \ \triangleq\ P(\mathbf{U}=\mathbf{1}_{\mathcal{K}})$,
where $P(\mathbf{U})$ is induced by the $\tau$-order Markov model described above in this section. Here $P(\mathcal{K})$ denotes the probability of the defective set induced by the prior distribution on the population vector $\mathbf{U}$, without conditioning on a fixed sparsity level. Note that this probability is defined over all subsets $\mathcal K\subseteq\mathcal N$ and does not restrict the cardinality $|\mathcal K|$.
}

In terms of exact recovery, the goal is to detect the precise subset of defective items $\mathcal{K}$. Therefore, given the estimated defective set \textcolor{black}{$\hat{\mathcal{K}}=\hat{\mathcal{K}}\!\left(N,K_{\min}, K_{\max},\mathbf{X},Y,\mathbf{\pi}, \{\mathbf{\Phi}_i \}_{i=\tau+1}^N\right)$}, we define the average exact error probability by\footnote{For simplicity of notation, $P_s$ and $P_e=1-P_s$ refer to success and error probabilities in the exact recovery analysis.}

\textcolor{black}{
\[
% exact error (unconditioned)
P_e^{\mathrm{exact}} \triangleq
\sum_{\mathcal K \subseteq \mathcal N} P(\mathcal K)\;
\Pr\!\left(\hat{\mathcal K} \neq \mathcal K \,\big|\, \mathcal K\right)
=
\Pr\!\left(\hat{\mathcal K} \neq \mathcal K\right).
\]
Moreover, for any design sparsity bounds $(K_{\min},K_{\max})$,
\begin{equation}\label{eq:p_exact}
P_e^{\mathrm{exact}}
\le
\Pr\!\left(\hat{\mathcal K} \neq \mathcal K \mid \textcolor{black}{K_{\min}\le K \le K_{\max}}\right)
\textcolor{black}{+
\Pr(K<K_{\min}) + \Pr(K>K_{\max}).}
\end{equation}
}
\textcolor{black}{Here $(K_{\min},K_{\max})$ are design parameters rather than the realized sparsity $K$.
}

In partial recovery, we allow a false positive (i.e., $|\hat{\mathcal{K}}\setminus \mathcal{K}|$) and false negative (i.e., $|\mathcal{K}\setminus \hat{\mathcal{K}}|$) detection rate.
\textcolor{black}{
Since \( \hat{\mathcal{K}} \) is a random variable, the overlap term \( {|\hat{\mathcal{K}} \cap \mathcal{K}|}/{|\mathcal{K}|} \) is also a random variable.
To define a deterministic performance metric, we take the expectation over the randomness of \( \hat{\mathcal{K}} \), conditioned on the true defective set \( \mathcal{K} \). }
Thus, the average partial success rate is defined as
\textcolor{black}{
\[
% partial success (unconditioned)
P_s^{\mathrm{partial}} \triangleq
\sum_{\mathcal K \subseteq \mathcal N} P(\mathcal K)\;
\mathbb{E} \left[\left.{|\hat{\mathcal K} \cap \mathcal K|}/{|\mathcal{K}|}\,\right|\mathcal K\right].
\]
}
To conclude, \textcolor{black}{for a sparse subset of infected items with random size $K$ (with $\mathbb{E}[K]=\mu$) out of $N$,} the goal in non-adaptive \ac{gt} with correlated prior data items is to design a $T \times N$ testing matrix and an efficient and practical recovery algorithm that can exploit correlated priors, such that by observing $Y^T$ we can identify the subset of infected items with high probability and with feasible computational complexity. Thus, given the knowledge of the correlated prior data items and the available computational resources, the test designer could design the testing matrix and a recovery algorithm to maximize the success probability.

\section{Main Results}\label{sec_main_res}

In this section, we introduce the efficient two-stage\footnote{\textcolor{black}{Although we refer to the decoding procedure as consisting of two stages, the testing process itself is fully non-adaptive: the considered testing matrix is fixed in advance (see Section~\ref{subsec:algo_description}), and no additional tests are designed based on intermediate decoding results. The term ``stage'' is used here only to distinguish between successive computational steps of the proposed decoder herein.}} recovery algorithm for any statistical prior represented in a trellis diagram \cite{moon2020error}, detailed in Algorithm~\ref{alg:cap}. In the first stage, standard low-complexity algorithms \cite{aldridge2014group} reduce the search space independently of prior correlations. This reduction is guaranteed by new analytical results we derive. In the second stage, the algorithm employs a novel adaptation of the \ac{lva} \cite{seshadri1994list}, designed for \ac{gt} to enable accurate low-complexity recovery using fewer tests by exploiting the correlated prior information. Additionally, we derive a bound to ensure the low complexity of the entire algorithm. \textcolor{black}{
Under the correlated prior (See Section~\ref{sec_problem_form}), the number of defectives
$K=\sum_{i=1}^N U_i$ is a random variable. The algorithm is \emph{not} assumed to know the realized $K$.
Whenever a step requires a cardinality (filtering the \ac{lva} output and enumerating candidate supports), we use \textcolor{black}{$(K_{\min},K_{\max})$} as the design \textcolor{black}{bounds}. Performance under random $K$ is handled by conditioning on the typical event \textcolor{black}{$\{K_{\min}\le K \le K_{\max}\}$} and adding the tail probability \textcolor{black}{$\Pr(K<K_{\min})+\Pr(K>K_{\max})$}, as defined in \eqref{eq:p_exact}  and elaborated in Section~\ref{subsec:main_results_discussion}.
} \textcolor{black}{ A key design parameter of the proposed 2SDGT algorithm presented in this section is the constant $\gamma \geq 1$, which controls the maximal candidate support size admitted from the LVA stage to the final MAP decoding stage. Specifically, only trajectories whose estimated number of defective items lies within the interval $[K_{\min}, \lfloor \gamma K_{\max} \rfloor]$ are retained for candidate generation. The parameter $\gamma$ therefore directly governs the tradeoff between computational complexity and recovery performance: larger values of $\gamma$ increase robustness to model mismatch and estimation errors at the cost of increased candidate enumeration, while smaller values enforce stricter complexity constraints.} Section~\ref{subsec:algo_description} describes the proposed algorithm. Section~\ref{subsec:analytical_results} provides analytical results, followed by a discussion in Section ~\ref{subsec:main_results_discussion}. \ifthenelse{\boolean{full_version}}
{
For a detailed explanation of all the algorithms used as integral components of Algorithm~\ref{alg:cap}, see Appendix~\ref{appendix:recovery_algo}.
}{
We refer the reader to \cite[Appendix A]{portnoy2024multi} for a detailed explanation of all the algorithms used as integral components of Algorithm~\ref{alg:cap}.% (\ac{dnd}, \ac{dd}, \ac{lva} for GT, and MAP).
}

\subsection{Pool-Testing Algorithm} \label{subsec:algo_description}
\subsubsection{Testing Matrix and Pooling}
The proposed \textcolor{black}{two-stage} recovery algorithm is intended to work with any non-adaptive testing matrix, e.g., as given in \cite{johnson2018performance}. To simplify the technical aspects and focus on the key methods, the testing matrix is generated randomly under \textcolor{black}{design sparsity bounds $(K_{\min},K_{\max})$ (as defined in Section~\ref{sec_problem_form})} with Bernoulli distribution of \textcolor{black}{$p=\ln(2)/K_{\max}$} \cite{aldridge2017almost}, using classical GT methods. The pooling and its outcome are given by the process elaborated in Section~\ref{sec_problem_form} and illustrated in Fig.~\ref{fig:MSGT_exmaple:a}.
\textcolor{black}{ We emphasize that the testing matrix is intentionally chosen to be a standard Bernoulli design, so that the impact of exploiting correlated prior information is isolated to the decoding stage\footnote{\textcolor{black}{An interesting direction for future work is the design of non-adaptive testing matrices that explicitly exploit non-uniform or correlated priors, for example via item-dependent testing probabilities.}}.}

\begin{figure}[ht]
    \centering
    \ifsingle
    \includegraphics[width=0.35\linewidth]{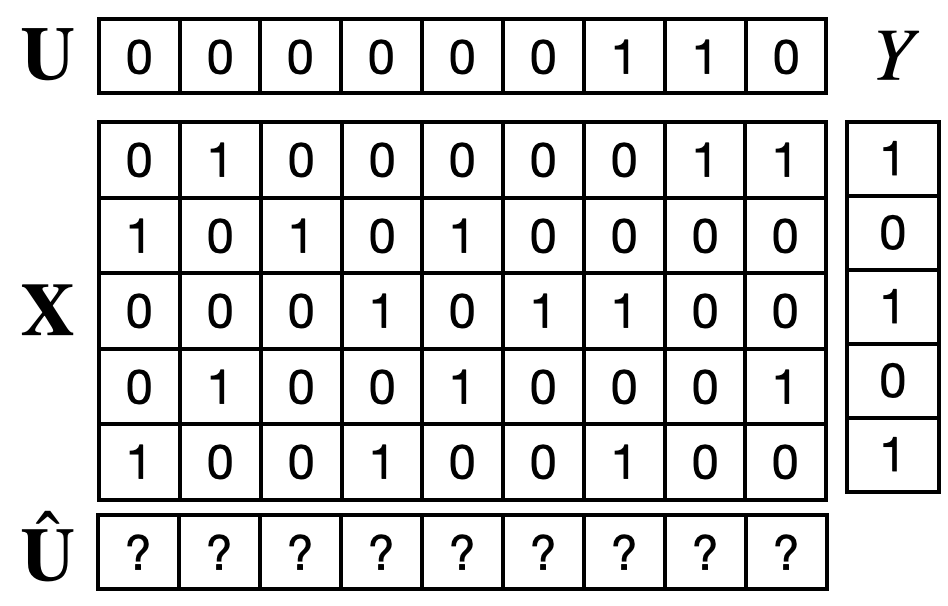}
    \else
    \includegraphics[width=0.49\linewidth]{MSGT9_pooling.png}
    \fi
    \caption{\small For an unknown population $\mathbf{U} \in \left\{0,1\right\}^9 $ \textcolor{black}{a realized defective set of size $K=2$ and design interval $[K_{\min},K_{\max}]=[2,2]$}, a random testing matrix is sampled and the test result $Y$ is calculated.}
    \label{fig:MSGT_exmaple:a}
\end{figure}

%\vspace{-0.2cm}
\begin{algorithm}\small
\caption{\textcolor{black}{Two-Stage} Recovery Algorithm}\label{alg:cap}
\hspace*{\algorithmicindent} \textbf{Input: $\mathbf{X},Y, \textcolor{black}{\mathbf{\pi}, \{\mathbf{\Phi}_i \}_{i=\tau+1}^N } , \textcolor{black}{K_{\min},K_{\max}}, L, \tau, \gamma$} \\
\hspace*{\algorithmicindent} \textbf{Output: $\hat{\mathcal{K}}$}
\begin{algorithmic}[1]
    \Statex\textbf{\underline{Stage 1}: Reduction of space search}
    \vspace{0.1cm}
    \State $\mathcal{P}^{(DND)} \gets \mathbf{DND}(\mathbf{X}, Y)$  \label{alg:line:coma}
    \State $\mathcal{P}^{(DD)} \gets \mathbf{DD}\left(\mathbf{X}, Y, \mathcal{P}^{(DND)} \right)$ \label{alg:line:dd}
    \State $\mathcal{P}^{(S_{1,2})} \gets \mathcal{N} \setminus \left( \mathcal{P}^{(DND)} \cup \mathcal{P}^{(DD)} \right) $
    \Statex\textbf{\underline{Stage 2}: Recovery exploiting prior info}
    \vspace{0.1cm}
    \State $\textcolor{black}{\mathbf{\pi}, \{\mathbf{\Phi}_i \}_{i=\tau+1}^N }  \gets  \mathbf{{updatePriors}}\left(\textcolor{black}{\mathbf{\pi}, \{\mathbf{\Phi}_i \}_{i=\tau+1}^N } , \mathcal{P}^{(S_{1,2})}, \mathcal{P}^{(DD)}\right)$ \label{alg:line:updateprior}
    \State $\mathbf{Z} \gets \mathbf{LVA}\left(L, \tau, \textcolor{black}{\mathbf{\pi}, \{\mathbf{\Phi}_i \}_{i=\tau+1}^N } \right)$ \label{alg:line:lva}
    \State $\mathbf{C} \gets \{\}$

    \For{$l \gets 1$ to $L$} \label{alg:line:fil1}
        \If{$\textcolor{black}{K_{\min} \leq} \sum_i \left( \mathbf{Z}_{l,i} \right) \leq \textcolor{black}{\lfloor} \gamma \textcolor{black}{K_{\max}}\textcolor{black}{\rfloor} $}\label{alg:line:fil2}
            \State $\mathbf{V}^{(l)} \gets \{i \mid \mathbf{Z}_{l,i} = 1\}$ \label{alg:line:fil00}
            \State $\mathbf{C} \gets \mathbf{C} \cup \mathbf{getAllCombinations}\left( \mathbf{V}^{(l)}, \textcolor{black}{K_{\min}}, \textcolor{black}{K_{\max}} \right)$\label{alg:line:fil7}
        \EndIf\label{alg:line:fil9}
    \EndFor\label{alg:line:fil10}

    \State $\hat{\mathcal{K}} \gets \mathbf{MAP}\left( \mathbf{X}, Y, \mathbf{C}, \textcolor{black}{\mathbf{\pi}, \{\mathbf{\Phi}_i \}_{i=\tau+1}^N }  \right)$ \label{alg:line:map}

\end{algorithmic}
\end{algorithm}

\subsubsection{Recovery Process} The suggested recovery algorithm operates in two main stages. In the first stage (Stage 1), to reduce the space of search (i.e., the possible defective items), the algorithm efficiently identifies non-defective and definitely defective items without considering the prior correlated information. In the first step of this stage, we use the \ac{dnd} algorithm \cite{kautz1964nonrandom,chen2008survey, chan2011non,aldridge2014group} (line \ref{alg:line:coma}, Fig.~\ref{fig:MSGT_exmaple:stage1}.(a), and Algorithm~\ref {alg:cap_coma}). \ac{dnd} compares the columns of the testing matrix, $\mathbf{X}$, with the outcome vector, $Y^T$. If $Y(i) = 0$ for some $i \in \{1, \ldots, T\}$, the algorithm eliminates all items participating in the $i$-th test from being defective, and outputs them as the set $\mathcal{P}^{(DND)} \subset \mathcal{N}$.

In the second step of Stage 1, we use the \ac{dd} algorithm \cite{aldridge2014group} (line~\ref{alg:line:dd}, Fig.~\ref{fig:MSGT_exmaple:stage1}.(b), and Algorithm~\ref {alg:cap_dd}), which goes over the testing matrix and the test result, looking for positive \textcolor{black}{pooled tests} that include only one possibly defective item. \ac{dd} denotes those items as definitely defective items and outputs them as the set $\mathcal{P}^{(DD)}$.

Let $\mathcal{P}^{(S_{1,1})} = \mathcal{N}\setminus \mathcal{P}^{(DND)}$ and $\mathcal{P}^{(S_{1,2})} = \mathcal{N} \setminus \left( \mathcal{P}^{(DND)} \cup \mathcal{P}^{(DD)} \right)$ denote the set of items that their status is still unclear after the first step and the second step, respectively\textcolor{black}{, where we use the notation $S_{i,j}$ with the first index $i$ indicating the stage and the second index $j$ indicating the step within that stage}.
$\mathcal{P}^{(S_{1,2})}$ holds a new space search, and $\mathcal{P}^{(DD)}$ holds the already known defectives. This knowledge is acquired without utilizing any prior data, which we reserve for the second stage.

% \vspace{-0.4cm}

\begin{figure}[ht]
    \vspace{-0.1cm}
    \centering
    % \begin{subfigure}[b]{.49\linewidth}
    \ifsingle
    \includegraphics[width=0.35\linewidth]{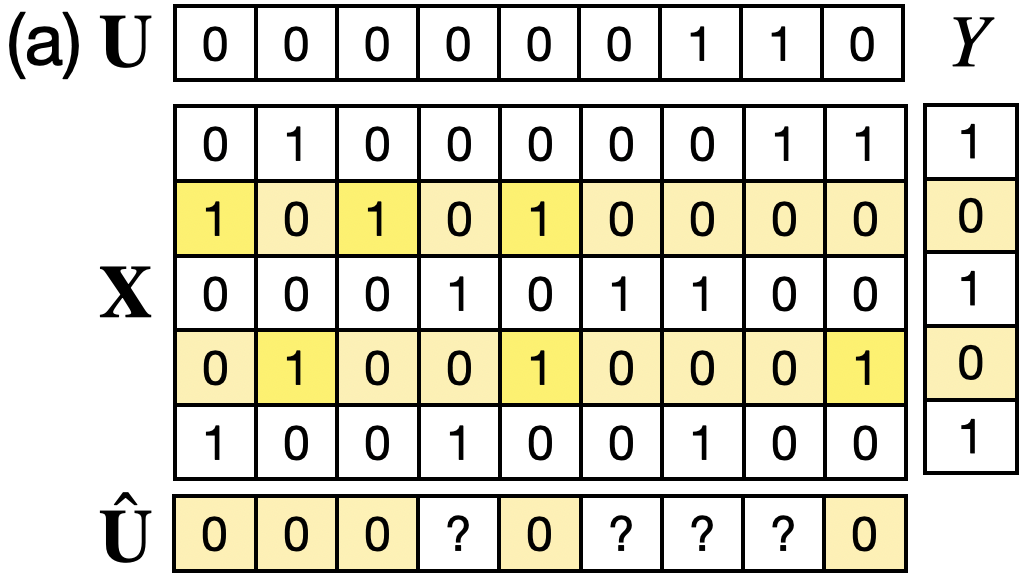}
    \includegraphics[width=.35\linewidth]{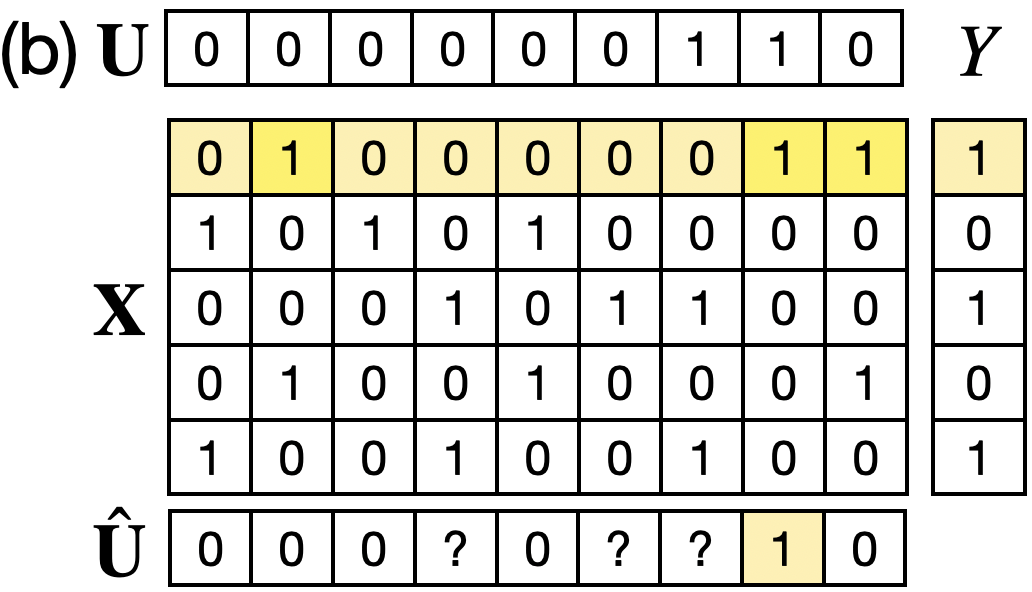}
    \else
    \includegraphics[width=0.49\linewidth]{MSGT9_DND.png}
    \includegraphics[width=.49\linewidth]{MSGT9_DD.png}
    \fi
    \caption{\small First stage of \textcolor{black}{\ac{msgt}}. (a) The first step of Stage 1, the DND algorithm, reveals 5 DND items in $\mathbf{U}$, forming $\mathcal{P}^{(DND)}$. Since items participating in negative tests must be non-defective, we mark all the participants in the two negative test results as non-defective. (b) The second step of Stage 1, the DD algorithm, outputs $\mathcal{P}^{(DD)}$ that includes a single DD item, based on the first test result, as it is the only possibly defective item participating in this test. The two other positive test results do not contribute to our knowledge here because there is more than one possibly defective item participating in them.}
    \label{fig:MSGT_exmaple:stage1}
    \ifsingle\else
    \vspace{-0.4cm}
    \fi
\end{figure}
% \vspace{-0.5cm}

% \off{Then, the prior data could be employed to execute the optimal \ac{map} estimator \cite{gallager1968information}, but that might be computationally expensive and even infeasible, depending on the sizes of $\mathcal{P}^{(S_{1,2})}$ and $\mathcal{P}^{(DD)}$. Instead, we propose adding another step to further reduce the space search before the final estimation of the defective set using \ac{ge}. This step, where we first utilize the prior data, is a variation of the parallel \ac{lva} \cite{seshadri1994list} we designed for \ac{gt}. }

In the first step of Stage 2, we translate the data we obtained in \ac{dnd} and \ac{dd}, $\mathcal{P}^{(S_{1,2})}$ and $\mathcal{P}^{(DD)}$, into the state space in terms of transition probabilities, $\textcolor{black}{\mathbf{\pi}}$, and initial probabilities, $\left\{\mathbf{\pi}_i\right\}_{i=1}^N$, so we can employ all the gathered information in the next steps (line~\ref{alg:line:updateprior}, Fig.~\ref{fig:MSGT_example:stage2}.(a), and Algorithm~\ref{alg:cap_update_priors}). \textcolor{black}{In the state space, the population sequence, $\mathbf{U}$, considered herein, is replaced with time steps considered traditionally in VA for communications problems \cite{seshadri1994list}, in which we have in GT two possible states per item, the first indicates ``non-defective'' and the second indicates ``defective''.}
%In the state space, the population sequence, $\mathbf{U}$, is parallel to time steps considered traditionally in \ac{va} for communications problems \cite{seshadri1994list}, and there are two possible states per item, the first indicates ``non-defective'' and the second indicates ``defective''. \off{For instance, if the first item in the population was detected as definitely non-defective, this step will set $\pi_0=0$.}

\begin{remark}
\textcolor{black}{
Under the Markovian prior model described in Section~II, the population vector
$\mathbf{U}=(U_1,\ldots,U_N)$ admits a natural representation as a path in a trellis,
where each state corresponds to the defectivity status of an item (or a $\tau$-length
state for $\tau$-order memory), and the transition probabilities are given by the
prior parameters $(\pi,\{\Phi_i\})$.
In this representation, exact MAP decoding corresponds to
finding the most likely path through the trellis given the observed test outcomes.
However, exact MAP decoding over this trellis requires evaluating an exponential
number of paths and is therefore computationally infeasible for large $N$.
The LVA provides an efficient approximation by returning a
\emph{list} of the $L$ most likely trellis trajectories, rather than committing to a
single path.
This is particularly well suited to the proposed framework, since after the Stage~1
reduction (DND and DD), multiple high-probability trajectories may remain consistent
with the observations.
By producing a small set of candidate trajectories, LVA enables systematic
exploitation of the correlated prior structure while maintaining feasible
computational complexity.
}
\end{remark}

In the second step, the suggested \ac{lva} for \ac{gt} goes over the sequence of items and outputs $\mathbf{Z} \in \{0,1\}^{L\times N}$, a list of the $L$ most likely trajectories in the state space (line~\ref{alg:line:lva}, Fig.~\ref{fig:MSGT_example:stage2}.(b), and Algorithm~\ref{alg:cap_lva}) according to \ac{map} decision based on the given prior information. Each trajectory is a sequence of $N$ states, representing items classified as either defective or non-defective. Thus, $\mathbf{Z}$ provides $L$ estimations of $\mathbf{U}$. In practice, the $L$ estimations may include any number of defective items and require further processing. \textcolor{black}{ The list size parameter $L$ controls the tradeoff between recovery performance and
computational complexity. Larger values of $L$ increase the probability that the true defective pattern is contained in the list of candidate trajectories, at the cost of increased computation in subsequent candidate enumeration and MAP selection. This tunable tradeoff is a key advantage of using LVA in the proposed decoding architecture.
}

In the third step, we extract candidates for the defective set out of the $L$ estimated sequences $\mathbf{Z}$ (lines~\ref{alg:line:fil1} to \ref{alg:line:fil10}). For some $l\in\{1,\ldots, L\}$, let $\mathbf{V}^{(l)}$ denote the set of items estimated as defective in $\mathbf{Z}_l$, the $l$-th row of $\mathbf{Z}$ (line~\ref{alg:line:fil00}). We ignore sequences that contain \textcolor{black}{fewer than $K_{\min}$ defective items} or more than \textcolor{black}{$\textcolor{black}{\lfloor} \gamma K_{\max}\textcolor{black}{\rfloor} $} defective items, for some $\gamma \geq1 $, and consider only $\mathbf{Z}_l$ in which $\textcolor{black}{K_{\min}\leq} \left| \mathbf{V}^{(l)}\right|<\textcolor{black}{\lfloor} \gamma\textcolor{black}{K_{\max}\textcolor{black}{\rfloor}}$ as valid sequences. For each one of the valid sequences, \textcolor{black}{we refer to all the combinations of size $k$ in $\mathbf{V}^{(l)}$, for every $k\in\{K_{\min},\ldots,\min(K_{\max},|\mathbf{V}^{(l)}|)\}$, as candidates for $\hat{\mathcal{K}}$} and add them to the candidates list $\mathbf{C}$ (line~\ref{alg:line:fil7}). \textcolor{black}{In particular, for a given set of indices $\mathbf{V}^{(l)}\subseteq\mathcal{N}$ and integers
$1\le K_{\min}\le K_{\max}$, we define
$\mathbf{getAllCombinations}(\mathbf{V}^{(l)},K_{\min},K_{\max})
\;\triangleq\;
\bigcup_{k=K_{\min}}^{\min(K_{\max},|\mathbf{V}^{(l)}|)}
\left\{
\mathcal{S}\subseteq \mathbf{V}^{(l)} \;:\; |\mathcal{S}|=k
\right\}$.}
\textcolor{black}{Thus, $\mathbf{getAllCombinations}$ $(\mathbf{V}^{(l)},K_{\min},K_{\max})$ returns the collection of all subsets of $\mathbf{V}^{(l)}$ whose cardinality lies in the range $K_{\min}$ and $K_{\max}$. The operator $\mathbf{getAllCombinations}(\cdot)$ as detailed in Algorithm~\ref{alg:getAllCombinations} is used here to generate candidate defective sets without assuming knowledge of the exact sparsity level.}

At this point, we have in $\mathbf{C}$ a list of sets of size \textcolor{black}{$k\in\{K_{\min},\ldots,K_{\max}\}$}, that are candidates to be our final estimation $\hat{\mathcal{K}}$, and we can calculate the probability of each one of them using $\textcolor{black}{\mathbf{\pi}, \{\mathbf{\Phi}_i \}_{i=\tau+1}^N}$ (see Algorithm~\ref{alg:cap_update_priors}). Then, in the fourth step, the estimated defective set, $\hat{\mathcal{K}}$, is finally chosen using \ac{map} estimator out of the $\mathbf{C}$ (line~\ref{alg:line:map}), i.e., $\hat{\mathcal{K}}=\arg\max_{c\in \mathbf{C}}{ P\left( Y \lvert \mathbf{X},c\right) P\left( c\right) }$ (see Algorithm~\ref{alg:cap_map}).

\textcolor{black}{If there are no valid sequences in $\mathbf{Z}$, we consider trajectories with fewer than $K_{\min}$ detections for partial recovery. We select the trajectory with the most detections and randomly complete it to form a set of size $K_{\min}$ for our final estimation $\hat{\mathcal{K}}$.}

\vspace{-0.25cm}
\begin{figure}[ht]
    \centering
    \ifsingle
    \includegraphics[width=0.60\linewidth]{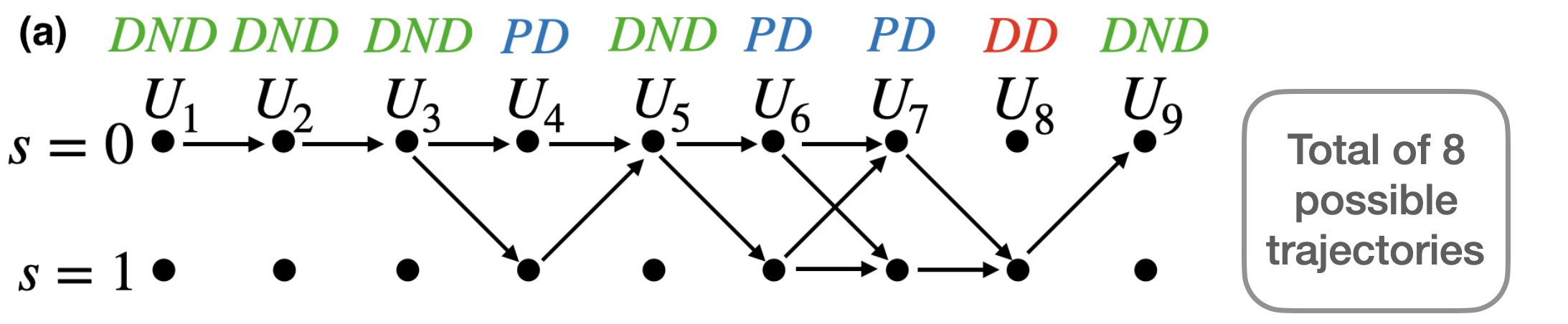}
    \vspace{0.1cm}
    \includegraphics[width=0.65
    \linewidth]{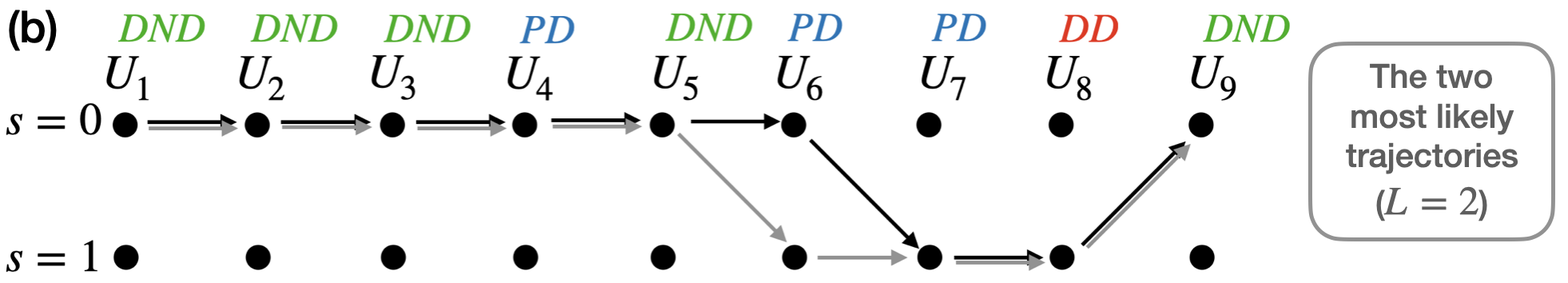}
    \vspace{0.1cm}
    \includegraphics[width=.35\linewidth]{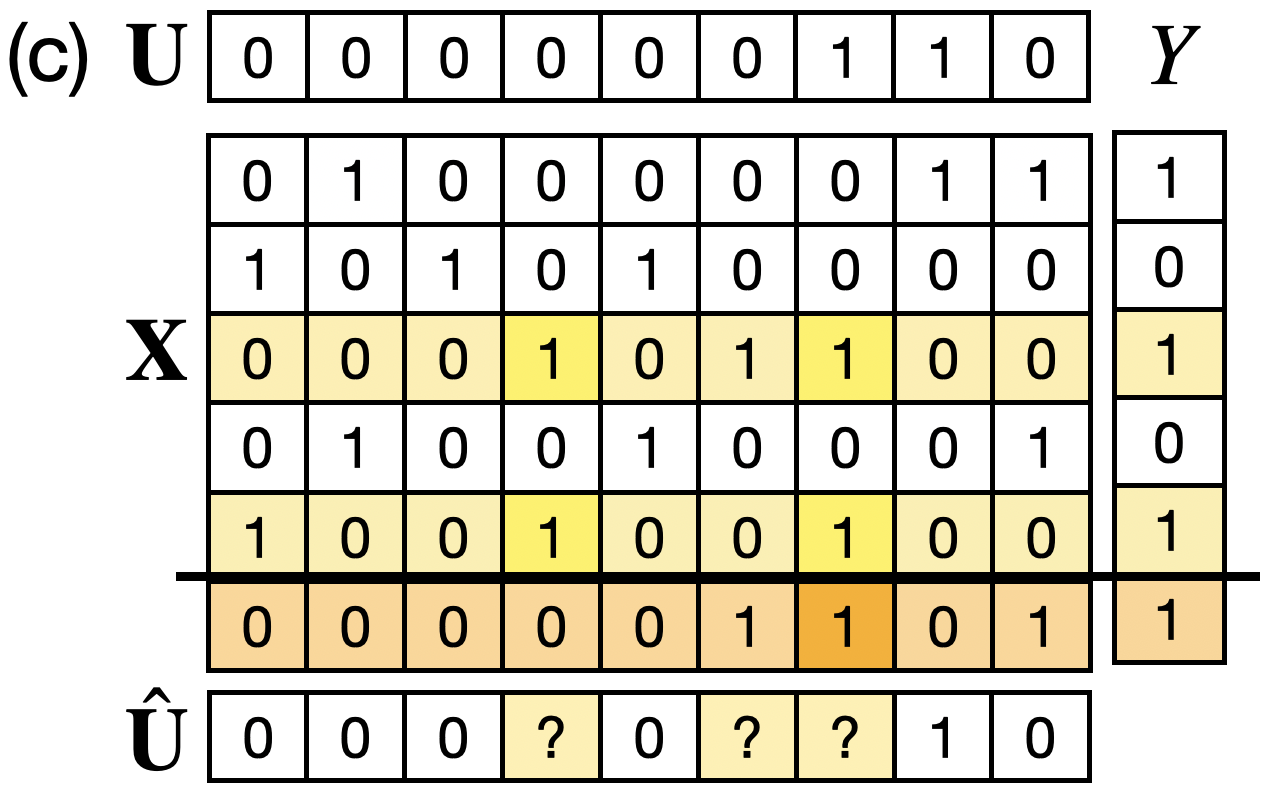}
    \else
    \includegraphics[width=0.85\linewidth]{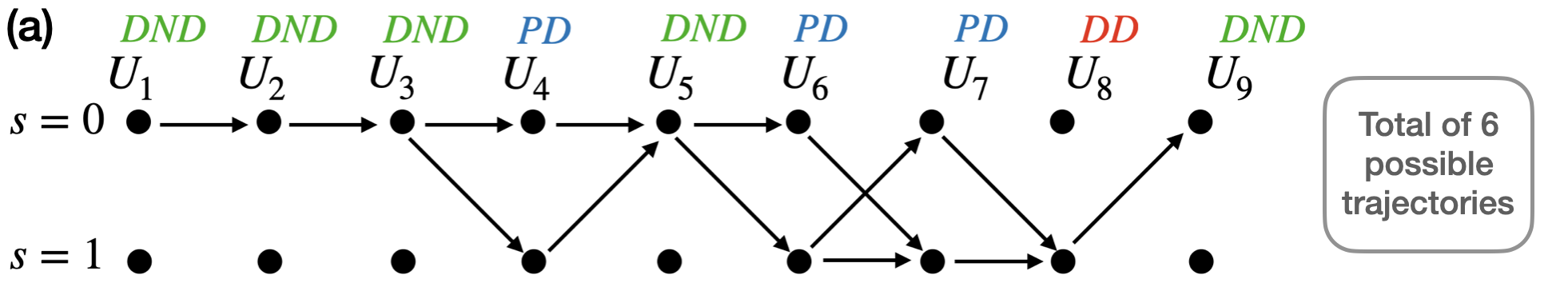}
    \vspace{0.1cm}
    \includegraphics[width=0.85
    \linewidth]{MSGT9_LVA.png}
    \vspace{0.1cm}
    \includegraphics[width=.47\linewidth]{MSGT9_vsML.png}
    \fi
    \caption{\small Stage 2 of \textcolor{black}{\ac{msgt}}. (a) All the possible transitions in the state space that we consider in the LVA step, following the insights obtained in Stage 1. These transitions aggregate to a total of 6 trajectories. (b) The two most likely trajectories returned by LVA (assuming $L=2$). Given $K=2$, the black trajectory corresponds to a valid population vector $\mathbf{U}$ with 2 defective items, while the gray trajectory indicates an invalid population with 3 defective items instead. Consequently, in the subsequent step, \ac{msgt} will extract two optional defective sets: $\left\{U_6,U_8\right\}$ and $\left\{U_7,U_8\right\}$, and will finally choose the most likely one using \ac{map} estimator. (c) Comparison of Stage 2 to ML. With $T=5$, we use the first 5 rows of the testing matrix, ignoring the last test result. This leaves 3 possibly defective items, forming two potentially defective sets of size $K=2$. Using ML, one set is chosen randomly, leading to an error probability of 0.5. With $T=6$, based on the third and sixth test results, there is only one set of size $K=2$ that matches the outcome $Y$, resulting in successful decoding with the ML decoder. As shown above, \ac{msgt}'s Stage 2 can successfully decode $\mathbf{U}$ with just $T=5$, as using the LVA step it narrows down to only 2 possible trajectories, and then the final estimation is selected based on the given prior information and the insights gained in Stage 1.}
    \label{fig:MSGT_example:stage2}
\end{figure}

\ifsingle\else
\begin{figure*}				
	\centering	
	\begin{subfigure}[b]{.32\linewidth}
		\includegraphics[trim={1cm 0 0 0 }, width=\linewidth, height=1.2in]{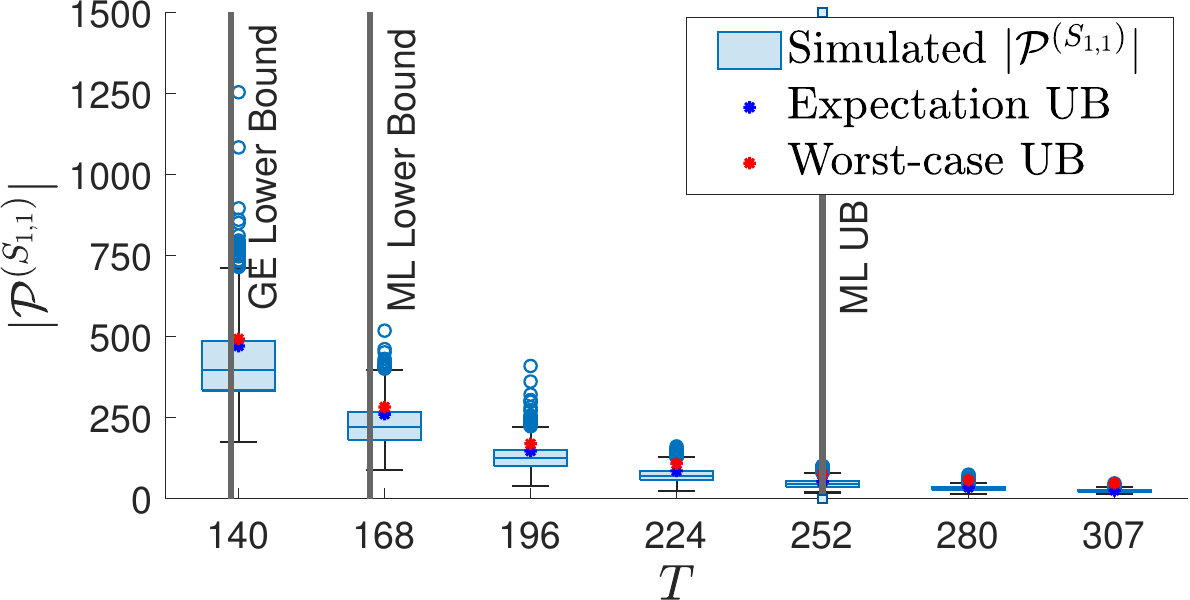}	
        \captionsetup{width=0.97\linewidth}
	\caption{\small Upper bound for possibly defective items after DND for $N=10000$, and $K=15$.}\label{fig:pd_after_coma_bounds}
	\vspace{-0.15cm}
	\end{subfigure}
	%$\quad$
	\begin{subfigure}[b]{.32\linewidth}
		\includegraphics[width=\linewidth, height=1.2in]{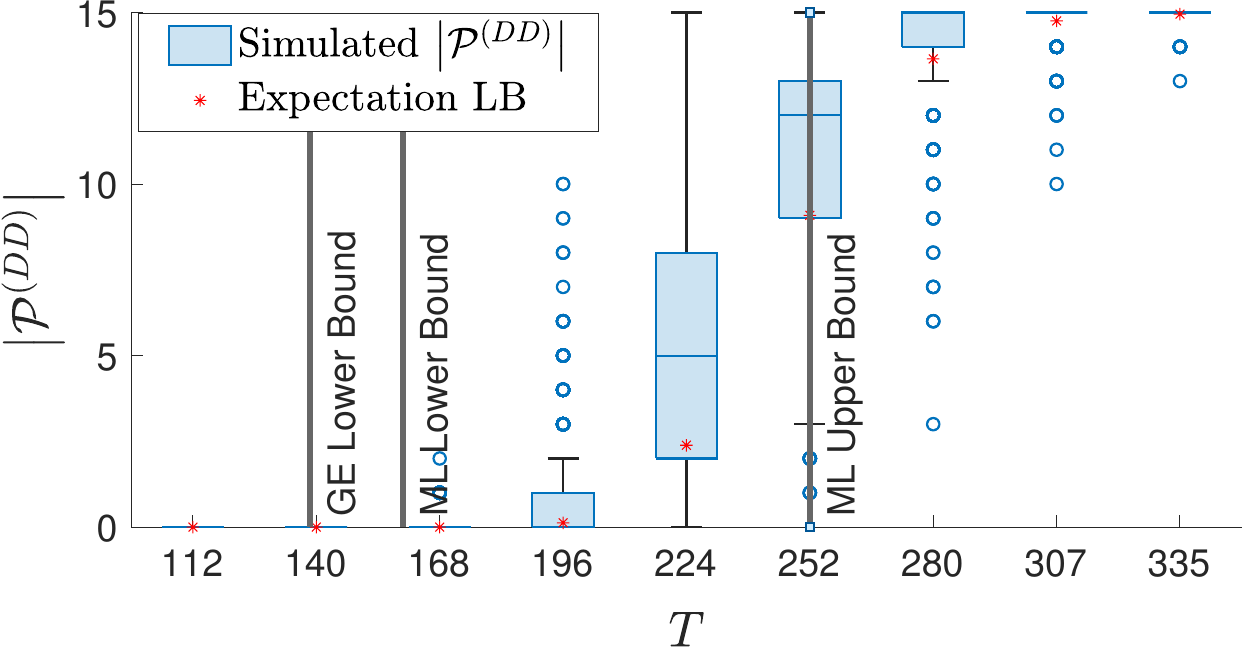}
        \captionsetup{width=0.99\linewidth}
		\caption{\small Lower bound for definitely defective items after DD for $N=10000$, and $K=15$.}\label{fig:dd_after_dd_bounds}
        \vspace{-0.15cm}
	\end{subfigure}
	%$\quad$
	\begin{subfigure}[b]{.33\linewidth}
        %\vspace{0.2cm}
		\includegraphics[width=\linewidth, height=1.2in]{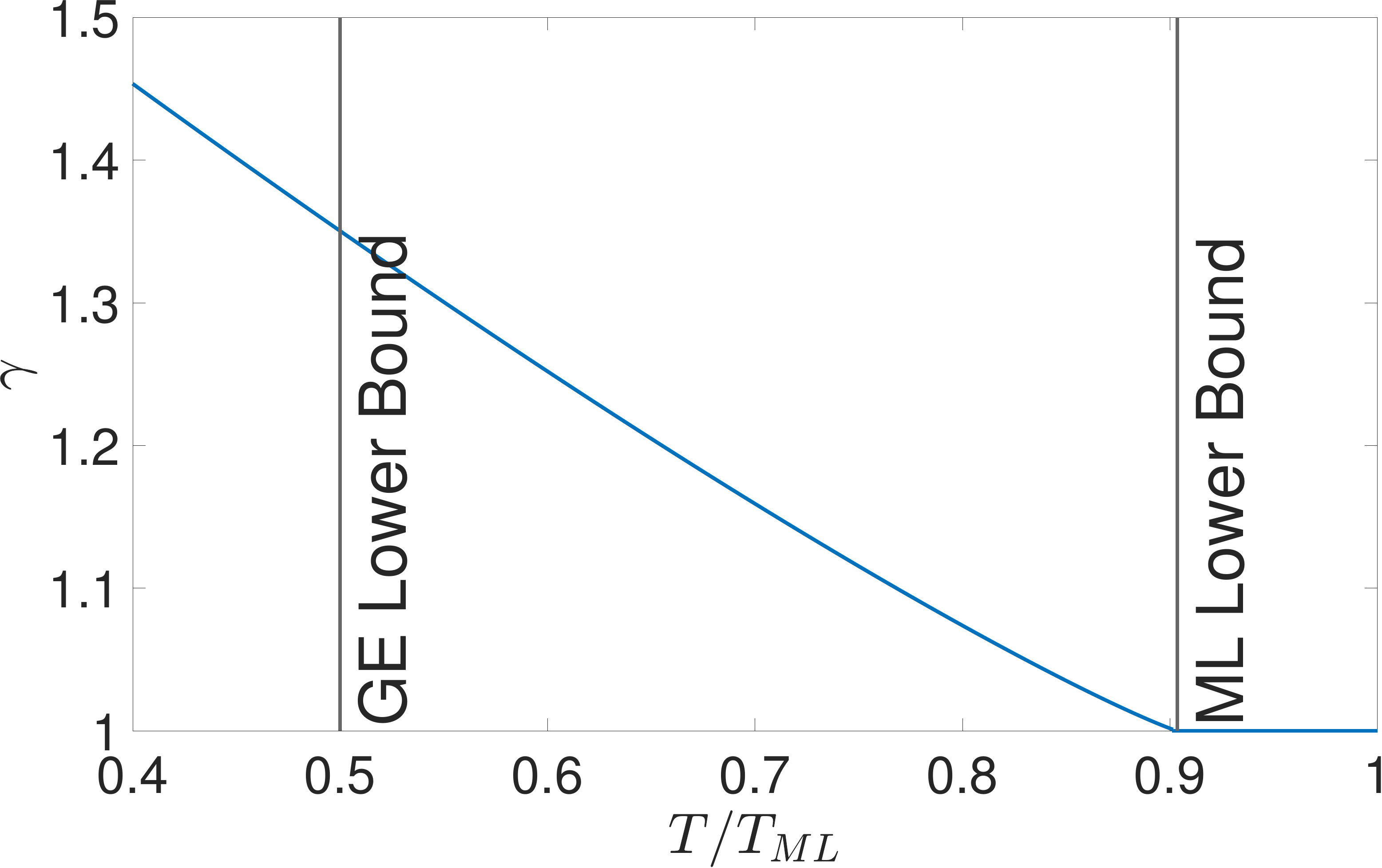}
        \captionsetup{width=0.97\linewidth}
		\caption{\small Minimum $\gamma$ parameter to satisfies~\eqref{eq:gamma_cond} for $N=500$, and $K=3$.}
		\label{fig:gamma_vs_T}
    \vspace{-0.15cm}
	\end{subfigure}
	\caption{\small Numerical evaluation for theoretical results and bounds. The results in (a), (b), and (c) are over 1000 iterations. For ML Upper Bound (UB), $T_{ML} = (1+\epsilon)K\log_{2}N$, for any $\epsilon>0$ \cite{atia2012boolean}. In particular, $\epsilon=0.25$ in the results presented herein.}
	\vspace{-0.5cm}
	\label{fig:ConfNew}
\end{figure*}
\fi

\off{
\begin{figure}
    \begin{subfigure}[b]{.49\linewidth}
        \includegraphics[width=\linewidth]{figures/algo_example/algo_example_by_steps7/MSGT_7a.png}
        \caption{A random testing matrix is sampled and the test result $Y$ is calculated following the procedure outlined in  Section~\ref{subsec:algo_description}.} \label{fig:MSGT_exmaple:a}
        \vspace{1.07cm}
    \end{subfigure}
    \begin{subfigure}[b]{.49\linewidth}
        \includegraphics[width=\linewidth]{figures/algo_example/algo_example_by_steps7/MSGT_7b.png}
        \caption{The first step of Stage 1, the DND algorithm, reveals 4 DND items in $\mathbf{U}$. Since items participating in negative tests must be non-defective, we mark all the participants in the two negative test results as non-defective.} \label{fig:MSGT_exmaple:b}
    \end{subfigure}

    \begin{subfigure}[b]{.49\linewidth}
        \includegraphics[width=\linewidth]{figures/algo_example/algo_example_by_steps7/MSGT_7c.png}
        \caption{The second step of Stage 1, the DD algorithm, detects a single DD item based on the first test result, as it is the only possibly defective item participating in this particular test. The two other positive test results do not contribute to our knowledge here because there is more than one possibly defective item participating in them.} \label{fig:MSGT_example:c}
        \vspace{3.13cm}
    \end{subfigure}
    \begin{subfigure}[b]{.49\linewidth}
        \includegraphics[width=\linewidth]{figures/algo_example/algo_example_by_steps7/MSGT_7d.png}
        \caption{ML vs. \ac{msgt}: If $T=5$, we refer only to the first 5 rows of the testing matrix and ignore the last test result, having left with 3 possibly defective items that constitute two possible defective sets of size $K=2$. Then, one of the two sets is chosen randomly by the ML decoder with a probability of $0.5$, leading to an error probability of $0.5$. In contrast, if $T=6$, based on the third and sixth test results, there is only one set of size $K=2$ for which the outcome $Y$ is obtained, and for this reason, $T=6$ will result in successful decoding using ML decoder. As we further show, \ac{msgt} can successfully decode $\mathbf{U}$ in this scenario with only $T=5$.} \label{fig:MSGT_exmaple:d}
    \end{subfigure}

    \begin{subfigure}[b]{1.\linewidth}
        \includegraphics[width=\linewidth]{figures/algo_example/algo_example_by_steps7/MSGT_7e.png}
        \caption{All the possible transitions in the state space that we consider in the LVA step, following the information obtained until (c). These transitions aggregate to a total of 6 trajectories.} \label{fig:MSGT_exmaple:e}
    \end{subfigure}

    \begin{subfigure}[b]{\linewidth}
        \hspace{0.235cm}
        \includegraphics[width=\linewidth]{figures/algo_example/algo_example_by_steps7/MSGT_7f.png}
        \caption{The two most likely trajectories returned by LVA (assuming $L=2$). Given $K=2$, the black trajectory corresponds to a valid population vector $\mathbf{U}$ with 2 defective items, while the gray trajectory indicates an invalid population with 3 defective items instead. Consequently, in the subsequent step, \ac{msgt} will extract two optional defective sets: $\left\{U_6,U_8\right\}$ and $\left\{U_7,U_8\right\}$, and will finally choose the most likely one using \ac{map} estimator.} \label{fig:MSGT_exmaple:f}
    \end{subfigure}
    \caption{The decoding using \ac{msgt} for an unknown population $\mathbf{U} \in \left\{0,1\right\}^9 $ with $K=2$}
    \label{fig:MSGT_exmaple}
\end{figure}}

\subsection{Analytical Results}\label{subsec:analytical_results}

In this section, we provide analytical results related to the proposed \ac{msgt}. To support our theoretical results, we demonstrate the derived bounds and compare them to simulation results in Fig.~\ref{fig:ConfNew}.

\textcolor{black}{
In the correlated-prior model, the number of defectives $K=\sum_{i=1}^N U_i$ is random but is assumed to satisfy the known bound $K_{\min}\le K\le K_{\max}$ (see Section~\ref{sec_problem_form}). Accordingly, throughout this section we analyze Stage~1 under a Bernoulli design with parameter $p=\ln(2)/K_{\max}$ and $T$ tests, where $K_{\max}$ is the design upper bound. Theorem~\ref{thm:avg_coma_bounds} bounds the expected number of items whose status remains unknown after the \ac{dnd} step (equivalently, it bounds the expected number of non-defectives that survive as ``possibly defective''), as a function of the \emph{realized} $K=k$, and we denote the corresponding average contribution by $P_{e,a}^{(DND)}$. Complementarily, Theorem~\ref{thm:upper_bound_PD_coma_bernoulli} upper bounds the probability that this number exceeds its expectation by a margin, and we denote this upper-tail deviation term by $P_{e,u}^{(DND)}$. These bounds control the size of the remaining search space passed to Stage~2, and in particular ensure that the Stage~2 list-size budget (parameterized by $\gamma$ and the design upper bound $K_{\max}$) is met.
}

\begin{theorem}[\cite{solomon2025one}]\label{thm:avg_coma_bounds}
    \textcolor{black}{Consider a group test with a Bernoulli testing matrix with $p=\ln{2}/K_{\max}$ and $T$ tests, and suppose the (random) number of defectives $K$ satisfies $K_{\min}\le K\le K_{\max}$\off{, where $K_{\min}$ and  $K_{\max}$ are the design lower and upper bounds, respectively}. \textcolor{black}{Fix any realization $K=k$ and}} \textcolor{black}{let $P_{e,a}^{(DND)} \triangleq N^{-\alpha \left( 1-\ln{2}/K_{\max}\right) /2}$ for $\alpha \triangleq T/(K_{\max} \log_2 N)$.}
         The expected number of possibly defective items is bounded by
         \begin{equation*}\label{eq_expected_pd}
            \mathbb{E}\left[ \left| \mathcal{P}^{(S_{1,1})}\right| \,\big|\, K=\textcolor{black}{k}\right] \leq \textcolor{black}{k+\left( N-k\right)} P_{e,a}^{(DND)}.
         \end{equation*}
\end{theorem}

\textcolor{black}{The proof of Theorem~\ref{thm:avg_coma_bounds} follows directly from the definition of the \ac{dnd} decoding rule and the Bernoulli test design \cite{cohen2021multi,solomon2025one}. A non-defective item is eliminated unless it appears only in tests whose outcomes are positive; equivalently, it survives the \ac{dnd} step if and only if all tests in which it participates are occluded by at least one defective item. Since each item participates in tests independently with probability $p$, the survival event can be characterized by conditioning on the number of tests involving the item and on whether those tests include at least one defective. In particular, under a Bernoulli$(p)$ design with $k$ defectives, $(1-p)^{k}$ denotes the probability that a test is negative, and $p(1-p)^{k-1}$ denotes the probability that a test contains defective $i$ and no other defectives. This leads to the stated expression for the probability of a non-defective item surviving the \ac{dnd} step, which is then used in the subsequent expectation and concentration analysis.}

\begin{proof}
According to \cite[Lemma~2]{bharadwaja2022approximate}, conditioning on $\{K=\textcolor{black}{k}\}$ yields
\begin{equation}\label{eq_pd}
\mathbb{E}\!\left[\left|\mathcal{P}^{(S_{1,1})}\right| \,\big|\, K=\textcolor{black}{k}\right]
= \textcolor{black}{k+\left(N-k\right)}P_{e}^{(\mathrm{DND})},
\quad \text{where} \quad
P_{e}^{(\mathrm{DND})}=\bigl(1-p(1-p)^{\textcolor{black}{k}}\bigr)^T .
\end{equation}
\textcolor{black}{By substituting the design choice $p=\ln 2/K_{\max}$, we bound $P_e^{(\mathrm{DND})}$ as follows}
\begin{align*}
    P_e^{(\mathrm{DND})}
    &= \left(1 - p(1-p)^{\textcolor{black}{k}} \right)^{T} \\
    &= \left(1 - p(1-p)^{\textcolor{black}{k-1}}(1-p) \right)^{T} \\
    &\le \left(1 - p\,e^{-\ln 2}\,(1-p) \right)^{T},
\end{align*}
\textcolor{black}{where the last step follows since $e^{-\ln 2} \leq (1-\ln2/m)^{m-1}$ for any integer $m \geq 0$, such that, together with the bound $\textcolor{black}{k}\le K_{\max}$ and
with $p=\ln 2/K_{\max}$, we have
\[
(1-p)^{\textcolor{black}{k-1}} \;\ge\; (1-p)^{K_{\max}-1}
\;\ge\; e^{-\ln 2}
\;=\;\frac{1}{2},
\]
as the first inequality holds since $0<1-p<1$ and $\textcolor{black}{k-1}\le K_{\max}-1$, and the second from the exponential bound.
}

\textcolor{black}{
Now, applying $(1-z)^T \le e^{-zT}$ with
\[
z \triangleq p\cdot \frac{1}{2}\cdot(1-p)
= \frac{\ln 2}{K_{\max}}\cdot \frac{1}{2}\cdot\left(1-\frac{\ln 2}{K_{\max}}\right),
\quad \text{and} \quad
T=\alpha K_{\max}\log_2 N,
\]
we obtain
\[
P_e^{(\mathrm{DND})}
\le \exp\!\left(- z\cdot \alpha K_{\max}\log_2 N\right)
= N^{-\frac{\alpha}{2}\left(1-\frac{\ln 2}{K_{\max}}\right)}.
\]
}

Finally, substituting \textcolor{black}{$P_{e,a}^{(\mathrm{DND})} \triangleq N^{ - \frac{\alpha}{2} \left(1 - \frac{\ln 2}{K_{\max}} \right)}$}
into \eqref{eq_pd} yields
\[
\mathbb{E}\!\left[\left|\mathcal{P}^{(S_{1,1})}\right| \,\big|\, K=\textcolor{black}{k}\right]
\le \textcolor{black}{k+(N-k)}\,P_{e,a}^{(\mathrm{DND})}.
\]
\end{proof}

\textcolor{black}{
\begin{theorem}\label{thm:upper_bound_PD_coma_bernoulli}
\textcolor{black}{Consider a group test with $T$ tests, a Bernoulli testing matrix with the design parameter $p=\ln(2)/K_{\max}$, and the \ac{dnd} decoder. Fix any realization $K=k\in[K_{\min},K_{\max}]$ and} let $G$ denote the number of non-defective items marked as possibly defective by \ac{dnd}.
Then, for any threshold $g>0$,
\begin{equation} \label{eq:main_peu_bound}
P_{e,u}^{(\mathrm{DND})}(g)
\;\triangleq\;
P\!\left( G - \mathbb{E}[G] > g \,\big|\, K=\textcolor{black}{k}\right)
\;\le\;
\frac{\mathrm{Var}(G\mid K=\textcolor{black}{k})}{\mathrm{Var}(G\mid K=\textcolor{black}{k})+g^{2}},
\end{equation}
where
\begin{equation}\label{eq:VarG}
\mathrm{Var}(G\mid K=\textcolor{black}{k})
= (N-\textcolor{black}{k})(\bar r_1 - \bar r_2)
\;+\;
(N-\textcolor{black}{k})^2 (\bar r_2 - \bar r_1^{\,2}),
\end{equation}
with
$\bar r_1 = \big(1 - p q_0\big)^{T}$ and
$\bar r_2 = \big( 1 - (2p-p^{2})q_0 \big)^{T}$ for $q_0 \triangleq (1-p)^{\textcolor{black}{k}}$.
\end{theorem}
}

\begin{proof}
\textcolor{black}{
Let $G$ denote the number of non-defective items that were hidden and not detected in \ac{dnd} (i.e., the number of false positive items).
We introduce the following upper-tail deviation probability for the number of false positives
\[
P_{e,u}^{(\mathrm{DND})} \;\triangleq\; P\!\left( G-\mathbb{E}[G]>g \,\big|\, K=\textcolor{black}{k}\right),\quad \text{for} \quad g>0.
\]
\textcolor{black}{Under the defective model given in Section~\ref{sec_problem_form} with $K_{\min}\le K\le K_{\max}$, Stage~2 enforces only an \emph{upper} budget $\textcolor{black}{\lfloor} \gamma K_{\max}\textcolor{black}{\rfloor} $ on the support size (Algorithm~\ref{alg:cap}). Accordingly, we choose $g$ to control the event that the number of possibly defective items after \ac{dnd} exceeds this budget, e.g.,
\[
g \;=\; \textcolor{black}{\lfloor} \gamma K_{\max} \textcolor{black}{\rfloor} \;-\; \mathbb{E}\!\left[\left|\mathcal{P}^{(S_{1,1})}\right|\mid K=\textcolor{black}{k}\right]
\;=\; \textcolor{black}{\lfloor} \gamma K_{\max} \textcolor{black}{\rfloor} \;-\; \Big(\textcolor{black}{k+(N-k)}P_{e,a}^{(\mathrm{DND})}\Big),
\]
whenever this quantity is positive.}
}
\textcolor{black}{
Applying the one-sided Chebyshev's inequality \cite{feller1991introduction} to bound the concentration term gives
\begin{equation}\label{eq:cantelli_for_g}
P\!\big(G-\mathbb{E}[G]>g \,\big|\, K=\textcolor{black}{k}\big) \;\le\; \frac{\mathrm{Var}(G\mid K=\textcolor{black}{k})}{\mathrm{Var}(G\mid K=\textcolor{black}{k})+g^2}.
\end{equation}
Let $M_0$ denote the number of negative tests, $q_0 \triangleq (1-p)^{\textcolor{black}{k}}$, and $n'\triangleq N-\textcolor{black}{k}$. Each test is negative independently with probability $q_0$, hence $M_0 \sim Bin(T,q_0)$.
Conditioned on $M_0=m_0$, a fixed non-defective item appears in none of the $m_0$ negative tests with probability $(1-p)^{m_0}$. Consequently, $G\,\big|\,M_0=m_0,\ K=\textcolor{black}{k} \;\sim\; Bin\big(n',\ (1-p)^{m_0}\big)$ \cite{aldridge2014group}.
}
\textcolor{black}{
For $M\sim Bin(T,q)$ and any $a\in\mathbb{R}$,
\begin{equation}
\mathbb{E}\!\left[a^{M}\right] \;=\; \big((1-q)+q\,a\big)^{T}.
\label{eq:bin-moment-gamma}
\end{equation}
Using the law of total variance and $r\triangleq \left(1-p\right)^{M_0}$,
\begin{align}
\mathrm{Var}(G\mid K=\textcolor{black}{k}) &= \mathbb{E}\!\left[\mathrm{Var}(G\mid M_0, K=\textcolor{black}{k})\right] + \mathrm{Var}\!\left(\mathbb{E}[G\mid M_0, K=\textcolor{black}{k}]\right) \nonumber\\
&= \mathbb{E}\!\left[n' r(1-r)\right] + \mathrm{Var}(n' r) \nonumber\\
&= n'\big(\mathbb{E}[r] - \mathbb{E}[r^2]\big) + n'^2\big(\mathbb{E}[r^2] - (\mathbb{E}[r])^2\big).
\label{eq:var-decomp-gamma}
\end{align}
Using \eqref{eq:bin-moment-gamma}, we obtain
\[
\bar r_1 = \mathbb{E}[r] = (1-pq_0)^T,
\quad \text{and} \quad
\bar r_2 = \mathbb{E}[r^2]
= \big( 1 - (2p-p^{2})q_0 \big)^{T}.
\]
Substituting these expressions into the variance decomposition
\eqref{eq:var-decomp-gamma} yields the variance formula stated in
\eqref{eq:VarG}. Finally, combining \eqref{eq:cantelli_for_g} with
\eqref{eq:VarG} establishes the upper-tail bound in
\eqref{eq:main_peu_bound}, completing the proof.
}
\end{proof}

The following theorem provides the expected defective items detected by \ac{dd} in the second step of Stage~1.

\begin{theorem} \label{thm:avg_dd_bound}
\textcolor{black}{Consider noiseless nonadaptive group testing with $N$ items, a random number of defectives $K$ (conditioned on its realization), and $T$ tests, under a Bernoulli$(p)$ design with design parameter $p=\ln(2)/K_{\max}$. Fix any realization $k\in[K_{\min},K_{\max}]$ and consider performance conditioned on $\{K=k\}$.}
\textcolor{black}{
The number of defective items detected by the \ac{dd} decoder is bounded by
\begin{equation}
\mathbb{E}\big[\,|\mathcal{P}^{(\mathrm{DD})}|\,\big|\, K=\textcolor{black}{k}\big]
\;\ge\;
\textcolor{black}{k}\!\left\{
a\,T(1-q_0)\,(1-p)^{N-\textcolor{black}{k}}
-\frac{a^2}{2}\,T^2
\right\}^{+},
\label{eq:E_DD_bound_raw}
\end{equation}
where $\{x\}^{+}\triangleq \max\{x,0\}$, \textcolor{black}{all expectations are taken conditionally on the realized $K$, and}
$q_0 \triangleq (1-p)^{\textcolor{black}{k}}$, $q_1 \triangleq p(1-p)^{\textcolor{black}{k-1}}$, and
$a \triangleq q_1/(1-q_0)$.
}
\end{theorem}

\begin{proof}[Proof:]
The proof of Theorem~\ref{thm:avg_dd_bound} in given in Appendix~\ref{sec:proof:avg_dd_bound}.
\end{proof}

\begin{remark}
\textcolor{black}{
Under the Bernoulli design used in this work, the testing matrix is parameterized by a design sparsity range $[K_{\min},K_{\max}]$, and the realized number of defectives $K$ is random and unknown to the decoder. Accordingly, the lower bound in~\eqref{eq:E_DD_bound_raw}, which is stated
conditionally on a fixed realization of $K$, may evaluate to zero for some values of $K\in[K_{\min},K_{\max}]$ after applying the positive-part operator. As a result, the bound is not uniformly informative over the entire admissible
sparsity range, but remains valid for all realizations of $K$.} Since we always have the trivial bound
$\mathbb{E}[\,|\mathcal{P}^{(\mathrm{DD})}|\,]\ge 0$, this behavior is consistent with correctness and does not indicate a failure of the analysis. \textcolor{black}{Moreover, the proposed \ac{msgt} algorithm does not rely on the magnitude of $|\mathcal{P}^{(\mathrm{DD})}|$ for correctness or feasibility. In particular, Stage~2 applies a deterministic filtering rule that retains only candidate supports whose cardinality lies in the range $K_{\min}\le |\mathcal{V}^{(l)}|\le \lfloor \gamma K_{\max}\rfloor$, independently of the quantitative DD bound.} For this reason, the complexity analysis of \ac{msgt} in Theorem~\ref{thm:complexity} relies only on the elementary inequalities $0 \le |\mathcal{P}^{(\mathrm{DD})}| \le |\mathcal{P}^{(S_{1,1})}|
\le N$, rather than on the quantitative lower bound~\eqref{eq:E_DD_bound_raw}.
\end{remark}

\textcolor{black}{The proposition below states a \emph{prior-free counting bound} relating the Stage~2 candidate-set budget and the best achievable \emph{uniformly averaged} probability of exact support recovery at a fixed sparsity level $k$, where the averaging is uniform over all supports of cardinality $k$ (see Remark~\ref{rem:UA}). It is used to motivate how the Stage~2 parameter $\gamma$ should scale with $T$ and $(N,K_{\max})$. Although the \ac{msgt} and MAP decoders explicitly depend on the prior $P(\mathcal{K})$, this proposition is a prior-free counting bound that applies to any decoding rule. In particular, the bound remains valid when MAP is used with an arbitrary correlated prior, since it constrains only the number of candidate supports consistent with a given outcome.}

\textcolor{black}{
\begin{prop}\label{prop:gamma_cond}
Fix any $k\in\{K_{\min},\ldots,K_{\max}\}$. Let $\eta_k \triangleq \left({1}/{\binom{N}{k}}\right) \sum_{\mathcal{K}:\,|\mathcal{K}|=k} P\!\left(\hat{\mathcal{K}}=\mathcal{K}\right)$ denote the uniformly averaged (over all $k$-subsets) probability of exact support
recovery at sparsity level $k$. Consider any decoding procedure that, given $Y$,
outputs a candidate item set $\mathcal{L}(Y)\subseteq[N]$ satisfying
$|\mathcal{L}(Y)|\le \textcolor{black}{\lfloor} \gamma K_{\max}\textcolor{black}{\rfloor} $. Then, achieving $\eta_k \ge \bar\eta_k$
requires
\begin{equation}\label{eq:gamma_cond}
T \;\ge\;
\log_2 \bar\eta_k
+ \log_2 \binom{N}{k}
- \log_2 \binom{\textcolor{black}{\lfloor} \gamma K_{\max}\textcolor{black}{\rfloor}}{k}.
\end{equation}
\end{prop}
}

\begin{proof}
\textcolor{black}{
Fix an arbitrary $k\in\{K_{\min},\ldots,K_{\max}\}$.
We follow a standard counting argument in the spirit of list-decoding converses
(cf.~\cite{scarlett2017little}), adapted here to decoders that output a
bounded-size item set. For each outcome vector $Y\in\{0,1\}^T$, define
\[
N(Y)\triangleq
\sum_{\mathcal{K}:\,|\mathcal{K}|=k}
\mathbb{1}\{\mathcal{K}\subseteq \mathcal{L}(Y)\},
\]
namely, the number of $k$-supports fully contained in the candidate set
$\mathcal{L}(Y)$.}

\textcolor{black}{Since \off{$|\mathcal{L}(Y)|\le \gamma K_{\max}$ and} $|\mathcal{L}(Y)|$ is integer, we
have $|\mathcal{L}(Y)|\le \lfloor \gamma K_{\max}\rfloor$, and hence
\[
N(Y)=\binom{|\mathcal{L}(Y)|}{k}
\le \binom{\lfloor \gamma K_{\max}\rfloor}{k}.
\]}

\textcolor{black}{Moreover, for any true support $\mathcal{K}$ with $|\mathcal{K}|=k$, the event
$\{\hat{\mathcal{K}}=\mathcal{K}\}$ necessarily implies that all defective items
must survive Stage~2, and hence $\mathcal{K}\subseteq \mathcal{L}(Y)$.
Averaging uniformly over all such supports and summing over all outcomes yields
\[
\eta_k \le
\frac{1}{\binom{N}{k}}
\sum_Y N(Y),
\]
where the inequality follows since
$\mathbb{1}\{\hat{\mathcal K}=\mathcal K\}\le
\mathbb{1}\{\mathcal K\subseteq \mathcal L(Y)\}$ and
$\Pr(Y\mid\mathcal K)\le 1$.}

\textcolor{black}{Using the bound on $N(Y)$ and the fact that there are $2^T$ possible outcomes,
we obtain
\begin{equation}\label{eq:b_eta_k}
\eta_k
\le
\frac{2^T}{\binom{N}{k}} \binom{\lfloor \gamma K_{\max}\rfloor}{k}.
\end{equation}
Rearranging \eqref{eq:b_eta_k} completes the proof.}
\end{proof}

\textcolor{black}{
\begin{remark}\label{rem:UA}
The success probability $\eta_k$ in Proposition~\ref{prop:gamma_cond} is defined as the probability of exact support recovery, uniformly averaged over all supports of cardinality $k$. This metric is \emph{not} the prior-dependent success probability of the MAP/LVA decoder under the correlated prior considered in this work. It is introduced solely to enable a prior-free counting argument.
Importantly, this proposition does not characterize the performance of MAP decoding under the correlated prior. Rather, it provides a necessary combinatorial limitation that applies to \emph{any} decoding rule whose Stage~2 output is restricted to a candidate item set of size at most $\textcolor{black}{\lfloor} \gamma K_{\max}\textcolor{black}{\rfloor} $. Since this
limitation is independent of the prior distribution over supports, it remains valid when MAP is employed with an arbitrary correlated prior. The bound is therefore used solely to motivate the scaling of $\gamma$ and $T$, and not as a direct performance guarantee.
\end{remark}
}

One of the key features of the proposed \ac{msgt} algorithm is its low and feasible complexity in practical regimes compared to \ac{ml} or MAP-based GT decoders. Both \ac{ml} and \ac{map} involve exhaustive searches, resulting in a complexity of $\mathcal{O}\left(\binom{N}{K} K N \log_2 N\right)$ operations \cite{atia2012boolean}. The theorem below and the subsequent Remark characterize the computational complexity of \ac{msgt}.

\begin{theorem}\label{thm:complexity}
    \textcolor{black}{Consider a group test for a population of $N$ items, with an unknown number of defectives $K$ satisfying $K_{\min}\le K\le K_{\max}$, and a Bernoulli testing matrix designed using $K_{\max}$.}
    The computational complexity of the \ac{msgt} algorithm is bounded by
    $\textcolor{black}{\mathcal{O}\bigl(L \gamma^{K_{\max}} K_{\max} N \log_2 N\bigr)}$ operations.
\end{theorem}

\begin{proof}
We begin by analyzing the complexity of each step of the proposed \ac{msgt} solution given in Algorithm~\ref{alg:cap}, and finally sum everything up to determine the total complexity.

% Stage 1
The complexity of \ac{dnd} is $\textcolor{black}{\mathcal{O}\bigl(K_{\max} N\log_2{N} \bigr)}$, as analyzed in \cite[Remark~6]{cohen2020secure}.
Then, for each positive entry of the test result vector $Y^T$, the \ac{dd} algorithm counts the number of possibly defective items that participate in the corresponding pool test. This can be implemented with at most
$\textcolor{black}{K_{\max}} \bigl|\mathcal{P}^{(S_{1,1})}\bigr| \log_2{N}$ operations, which we bound by the \ac{dnd} complexity, i.e., by $\textcolor{black}{\mathcal{O}\bigl(K_{\max}N\log_2 N\bigr)}$.
Hence, the overall complexity of Stage~1 is $\textcolor{black}{\mathcal{O}\bigl(K_{\max}N\log_2 N\bigr)}$.

% Stage 2 -  LVA step
Parallel \ac{lva} requires $L$ times more computations than the \ac{va} \cite{seshadri1994list}. The \ac{va} calculates all the possible transition probabilities for each step in the sequence. In GT, this sequence is the items’ ordering, and with the proposed algorithm it is enough to consider only the $\bigl|\mathcal{P}^{(S_{1,2})}\bigr|$ items as the sequence length.

The possible states are “non-defective’’ and “defective’’, so there are four transitions in each step of the GT trellis. More generally, the algorithm can be implemented to leverage additional memory and decide the state of each item based on the preceding $\tau$ items. Consequently, \ac{lva} takes at most
\[
    2^{2 \tau} L \bigl|\mathcal{P}^{(S_{1,2})}\bigr|
\]
elementary operations.
\textcolor{black}{
For the purpose of an upper bound on the complexity, it suffices to note that for every realization
\[
    0 \,\le\, \bigl|\mathcal{P}^{(DD)}\bigr|
    \,\le\, \bigl|\mathcal{P}^{(S_{1,1})}\bigr|
    \,\le\, N,
\]
and hence
\[
    \bigl|\mathcal{P}^{(S_{1,2})}\bigr|
    \;=\;
    \bigl|\mathcal{P}^{(S_{1,1})}\bigr| - \bigl|\mathcal{P}^{(DD)}\bigr|
    \;\le\;
    \bigl|\mathcal{P}^{(S_{1,1})}\bigr|
    \;\le\; N.
\]
Therefore, we have
\[
    2^{2\tau} L \bigl|\mathcal{P}^{(S_{1,2})}\bigr|
    \;\le\;
    2^{2\tau} L N
    \;=\;
    \mathcal{O}(L N),
\]
so the complexity of the \ac{lva} step in \ac{msgt} is $\mathcal{O}(L N)$.
}

% Stage 2 - filtering LVA output
In the next step, we filter the \ac{lva} results.
We sum each sequence $\mathbf{Z}_\ell$ with complexity $\mathcal{O}(N)$ per sequence. \textcolor{black}{If the sum lies in the range $[K_{\min},\textcolor{black}{\lfloor} \gamma K_{\max}\textcolor{black}{\rfloor} ]$, we extract all combinations of sizes $k-\bigl|\mathcal{P}^{(DD)}\bigr|$ for all $k\in[K_{\min},K_{\max}]$.}
Hence, this step requires at most
\[
    \textcolor{black}{
    \mathcal{O}\!\left(
        L\left(
        N + \sum_{k=K_{\min}}^{K_{\max}}
        \binom{\textcolor{black}{\lfloor} \gamma K_{\max}\textcolor{black}{\rfloor} }{k-\left|\mathcal{P}^{(DD)}\right|}
        \right)
    \right).
    }
\]
This term will be dominated by the subsequent \ac{map} step.

% Stage 2 - MAP
Finally, in the \ac{map} step of Stage~2, the algorithm considers at most $L$ sequences. In each sequence, there are at most $\textcolor{black}{\lfloor} \textcolor{black}{\gamma K_{\max}}\textcolor{black}{\rfloor}$ possibly defective items, and we examine all combinations of sizes $k-\bigl|\mathcal{P}^{(DD)}\bigr|$ for $k\in[K_{\min},K_{\max}]$. Thus, the number of candidate supports per sequence is at most
\[
    \textcolor{black}{
    \sum_{k=K_{\min}}^{K_{\max}}
    \binom{\textcolor{black}{\lfloor} \gamma K_{\max}\textcolor{black}{\rfloor}}{k-\left|\mathcal{P}^{(DD)}\right|}
    \;\le\;
    \sum_{k=K_{\min}}^{K_{\max}}
    \binom{\textcolor{black}{\lfloor} \gamma K_{\max}\textcolor{black}{\rfloor}}{k}.
    }
\]
For each candidate support, we apply the group test metric, whose computation can be implemented with $\textcolor{black}{\mathcal{O}\bigl(K_{\max}N\log_2 N\bigr)}$ operations. Therefore, the complexity of the \ac{map} stage is bounded by
\[
    \textcolor{black}{
    L \sum_{k=K_{\min}}^{K_{\max}} \binom{\textcolor{black}{\lfloor} \gamma K_{\max}\textcolor{black}{\rfloor} }{k}
    K_{\max} N \log_2 N.
    }
\]
Using the standard bounds
\[
    \Big(\tfrac{\textcolor{black}{\lfloor} \gamma K_{\max}\textcolor{black}{\rfloor}}{K_{\max}}\Big)^{K_{\max}}
    \;\le\;
    \binom{\textcolor{black}{\lfloor} \gamma K_{\max}\textcolor{black}{\rfloor} }{K_{\max}}
    \;\le\;
    \Big(\tfrac{e\textcolor{black}{\lfloor} \gamma K_{\max}\textcolor{black}{\rfloor}}{K_{\max}}\Big)^{K_{\max}},
\]
%and noting that $K_{\max}-K_{\min}=o(K_{\max})$ in typical regimes,
we obtain
\[
    L \sum_{k=K_{\min}}^{K_{\max}} \binom{\textcolor{black}{\lfloor} \gamma K_{\max}\textcolor{black}{\rfloor} }{k}
    K_{\max} N \log_2 N
    \;=\;
    \mathcal{O}\bigl( L \gamma^{K_{\max}} K_{\max} N \log_2 N \bigr).
\]

% summary
Taking all stages together, the overall complexity of \ac{msgt} is
\[
    \mathcal{O}\Big(
        K_{\max} N \log_2 N
        + L N
        + L\!\sum_{k=K_{\min}}^{K_{\max}}
        \binom{\textcolor{black}{\lfloor} \gamma K_{\max}\textcolor{black}{\rfloor} }{k-\left|\mathcal{P}^{(DD)}\right|}
        + L \gamma^{K_{\max}} K_{\max} N \log_2 N
    \Big).
\]
As $N$ grows, this sum is dominated by the \ac{map} term, and we conclude that the complexity of the \ac{msgt} algorithm is bounded by
\[
    \textcolor{black}{\mathcal{O}\bigl(L \gamma^{K_{\max}} K_{\max} N \log_2 N\bigr)},
\]
which completes the proof.
\end{proof}

\begin{remark}\label{re:complexity}
        \textcolor{black}{
        Under the interval-sparsity formulation $K_{\min}\le K\le K_{\max}$, skipping the \ac{lva} step results in an exhaustive MAP search over all defective sets with cardinality in $[K_{\min},K_{\max}]$, whose complexity scales as
        \[
        \mathcal{O}\!\left(
        \sum_{k=K_{\min}}^{K_{\max}}
        \binom{N}{k}\, K_{\max} N \log_2 N
        \right),
        \]
        which remains combinatorial in the effective sparsity level.
        }
\end{remark}

\textcolor{black}{
Note that from Theorem~\ref{thm:complexity} and Remark~\ref{re:complexity}, it follows that the proposed \ac{msgt} algorithm performs
\[
\mathcal{O}\!\left(
({1}/{L})
\left({N}/{\textcolor{black}{\lfloor} \gamma K_{\max}\textcolor{black}{\rfloor} }\right)^{K_{\max}}
\right),
\]
times fewer computational operations compared to exhaustive MAP search over $[K_{\min},K_{\max}]$.} \textcolor{black}{The MAP baseline reported herein corresponds to MAP decoding performed over the reduced candidate set obtained after the Stage~1 pruning step, rather than exhaustive MAP decoding over all $\binom{N}{K}$ supports. This choice reflects the fact that exhaustive MAP decoding is computationally infeasible for the problem dimensions considered and would not provide a meaningful practical benchmark. The reported complexity, therefore, accounts only for the enumeration and likelihood evaluation over the retained candidate supports, enabling a fair comparison with the proposed 2SDGT algorithm.}

\textcolor{black}{The Bernoulli test-design parameter $p=\ln(2)/K_{\max}$ and the subsequent complexity and performance guarantees in this section are derived under a \emph{design sparsity interval} $(K_{\min},K_{\max})$, rather than assuming knowledge of the realized sparsity $K$. This distinction is crucial: although the realized number of defectives enters the analysis of Stage~1 through concentration bounds, Stage~2 and the associated list-decoding and counting guarantees depend \emph{only} on the design interval. Specifically, the LVA filtering rule in Algorithm~\ref{alg:cap} retains trajectories whose support size satisfies $ K_{\min} \le |V^{(\ell)}| \le \lfloor \gamma K_{\max} \rfloor$, independently of the true realization of $K$. As a result, a moderate mismatch between the realized sparsity and the design upper bound affects performance only through the tail probability $\Pr(K<K_{\min})+\Pr(K>K_{\max})$, which is explicitly separated in the overall error bound in~\eqref{eq:p_exact}. Importantly, the counting argument in Proposition~\ref{prop:gamma_cond} and the complexity bound in Theorem~\ref{thm:complexity} remain valid regardless of the exact value of $K$, as they depend solely on the candidate-set budget $\lfloor\gamma K_{\max}\rfloor$ and not on the decoder’s knowledge of the realized sparsity. This formulation enables \ac{msgt} to operate reliably under random or imperfectly known sparsity levels without redesigning the decoding pipeline, while preserving provable complexity control.}

\subsection{Discussion} \label{subsec:main_results_discussion}
To the best of our knowledge, \ac{msgt} is the first GT algorithm to effectively leverage Markovian prior statistics. Unlike numerous previous approaches, \ac{msgt} \textcolor{black}{uses the same decoding flow for any prior model, without requiring problem-specific redesign. In contrast, prior non-adaptive GT algorithms that incorporate structural assumptions—such as graph-constrained GT \cite{cheraghchi2012graph}, model-based and graph-based GT \cite{lau2022model}, and contact-based GT \cite{goenka2021contact}—must often adapt their test design or graph structure when the population model changes. \ac{msgt} instead relies only on the given initial distribution and transition matrices, while the algorithmic pipeline remains unchanged}. The algorithm offers the flexibility to be fine-tuned to optimize its performance in accordance with the available computational resources and the number of tests, $T$. The simple reduction of the search space in Stage~1 enables \ac{msgt} to handle challenging regimes with a small number of tests. Stage~2, particularly the \ac{lva} step, contributes to its high success probability. Additionally, using the parallel implementation of \ac{lva}, rather than the iterative one, keeps the complexity low \cite{seshadri1994list}. It is important to note that, as explained in \cite{seshadri1994list}, achieving optimal results is ensured only with a very large $L$, inevitably leading to complexity equivalent to MAP's. However, as we empirically demonstrate in the following section, results equivalent to MAP's can be achieved with reasonable complexity. Moreover, it is shown that \ac{msgt} addresses practical regimes, e.g., in COVID-19 detection \cite{shental2020efficient} ($T=48$ for $(N,K)=(384,4)$), in communications \cite{luo2008neighbor} ($(N,K)=(10^5, 6)$), and in GT quantizer \cite{cohen2019serial} ($(N,K)=(1024,16)$ \cite{jacques2013quantized}).

Another aspect of novelty \textcolor{black}{in this research} is the integration of the Viterbi Algorithm into the GT problem. In the context of Markovian priors, one can think of the \textcolor{black}{population sequence of items} as a sequence of observations stemming from a hidden Markov process within a given Markov model over $N$ steps. In that case, the selection of a Viterbi decoder becomes natural, offering an optimal and efficient decoding solution. However, the most likely sequence of items does not necessarily include \textcolor{black}{$K_{\max}$} defective items. Particularly, in sparse signal scenarios, which is the focus of GT, the most likely sequence typically involves the minimum number of defectives that explain the observations. As a result, \ac{va} may not necessarily detect more defective items than already known and may detect even more than \textcolor{black}{$K_{\max}$}. To address this, we employ \ac{lva}, which produces a list of the $L$ most likely sequences, such that choosing an appropriate value for $L$ guarantees a successful recovery.

\textcolor{black}{
Unlike works that assume a deterministic or perfectly known $K$, the model adopted in this work allows the realized number of defectives $K$ to be random and unknown to the decoder. To accommodate this uncertainty without redesigning the algorithmic pipeline, \ac{msgt} operates with a \emph{design interval} $(K_{\min},K_{\max})$, which is used consistently for test design, Stage~2 filtering, and candidate enumeration.
}

\textcolor{black}{
Specifically, in Stage~2, the decoder retains only trajectories whose estimated support size satisfies $K_{\min} \le |\mathbf{V}^{(l)}| \le \textcolor{black}{\lfloor} \gamma K_{\max}\textcolor{black}{\rfloor} $,
and enumerates candidate subsets of all cardinalities $k\in[K_{\min},K_{\max}]$ within these supports. This modification eliminates the need to know the exact realization of $K$ while preserving the counting-based guarantees and complexity bounds derived in this Section. The overall recovery error probability under random $K$ is controlled by conditioning on the \emph{typical event} $
\mathcal{E}_{\mathrm{typ}} \triangleq \{K_{\min} \le K \le K_{\max}\}$, and adding the probability of its complement. In particular,
\[
P_e \;\le\; P_e \big| \mathcal{E}_{\mathrm{typ}} \;+\; \Pr(K<K_{\min}) \;+\; \Pr(K>K_{\max}).
\]
}

\textcolor{black}{
When $\{U_i\}_{i=1}^N$ are generated by an ergodic finite-state Markov source (or, more generally, a strongly mixing process) \cite{cover2006elements,gallager1968information}, the random variable $K=\sum_{i=1}^N U_i$ satisfies concentration bounds analogous to Chernoff inequalities \cite{lezaud1998chernoff,paulin2015concentration}. \textcolor{black}{That is, the concentration bounds invoked below hold under the additional assumption that the $\tau$-order Markov prior is ergodic (irreducible and aperiodic), or more generally strongly mixing; this assumption is standard in correlated-source models and is used solely to control the tail behavior of $K$.} In particular, letting $\mu=\mathbb{E}[K]$, there exist constants $c_1,c_2>0$ such that for any $0<\varepsilon\le 1$, $
\Pr\!\left(K>(1+\varepsilon)\mu\right)\le c_1 \exp\!\left(-c_2\,\varepsilon^2\,\mu\right)$, and $\Pr\!\left(K<(1-\varepsilon)\mu\right)\le c_1 \exp\!\left(-c_2\,\varepsilon^2\,\mu\right)$. Therefore, choosing $K_{\min}=\lfloor(1-\varepsilon)\mu\rfloor$, and $K_{\max}=\lceil (1+\varepsilon)\mu\rceil$, one can obtain that both tail probabilities decay exponentially in $\mu$. As a result, for sufficiently sparse regimes ($\mu=\mathcal{O}(N^\theta)$ with $\theta<1$), the probability that $K$ falls outside the design interval $(K_{\min},K_{\max})$ is negligible, while \ac{msgt} retains its correctness and feasible computational complexity guarantees. \textcolor{black}{If the above ergodicity or strong mixing assumptions do not hold, alternative concentration bounds or tail estimates for dependent random variables may be employed to control $\Pr(K<K_{\min})$ and $\Pr(K>K_{\max})$, albeit potentially with weaker decay rates.}
}

\ifsingle\else
% option 1 - unified figure for complexity
\begin{figure*}				
	\centering	
    \begin{subfigure}[b]{.31\linewidth}
		\includegraphics[width=\linewidth, height=0.53\linewidth]{overall_complexity_comparison_with_zoom_v1}
		\vspace{-0.3cm}
        \caption{\vspace{-0.15cm}Number of computational operations.}
        \label{fig:complexity:number_of_computations_2plots}
        \vspace{0.5cm}
	\end{subfigure}	
    % $\quad$
        \begin{subfigure}[b]{.31\linewidth}
		\includegraphics[width=\linewidth, height=0.458\linewidth]{phi_example_tau2.jpg}
        \vspace{-0.09cm}
        \caption{Correlation matrix $\mathbf{\Phi}_i$ for $\tau=2$.}
        \label{fig:phi_example}
        \vspace{0.35cm}
	\end{subfigure}
	\begin{subfigure}[b]{.31\linewidth}
        % \vspace{-5cm}
		\includegraphics[width=\linewidth, height=0.53\linewidth]{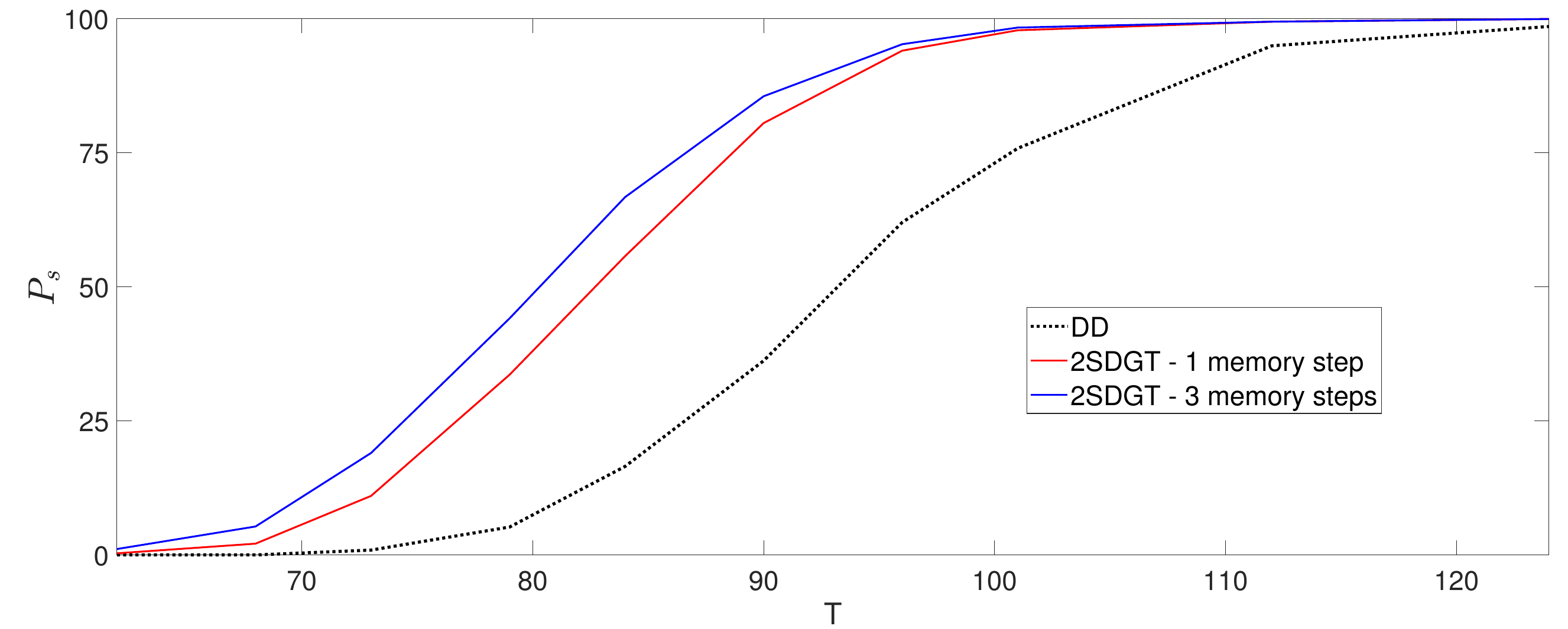}
        % \caption{Probability of success in terms of exact recovery.}
        \caption{Long memory priors - exact recovery.}
        \label{fig:long_memory_exact}
        \vspace{0.35cm}
	\end{subfigure}
	% $\quad$
        \vspace{-0.4cm}
	\caption{(a) Number of computational operations in \ac{msgt} and \ac{map} as given in Theorem~\ref{thm:complexity} and Remark~\ref{re:complexity}, respectively. (b) Example of correlation matrix $\mathbf{\Phi}_i$. (c) Probability of success of \ac{msgt} with exact prior statistics of 3-memory-steps Markov process, and with limited prior statistics assuming the Markov process has only one memory step. $N = 1024, K=8$, 1000 iterations.}
    \vspace{-0.5cm}
	\label{fig:computations_and_complex_models}
\end{figure*}
\fi

%%%%%%%%%%%%%%%%%%%%%%%%%%%%%%%%%%%%%%
%\section{Definitions for MAP Decoder in GT}\label{sec:def_for_analysis}
%%%%%%%%%%%%%%%%%%%%%%%%%%%%%%%%%%%%%%

%In the next section, we analyze MAP decoder for \ac{gt} with general correlated prior statistics.

%%%%%%%%%%%%%%%%%%%%%%%%%%%%%%%%%%%%%%
\section{MAP Analytical Bound for GT with General Correlated Prior Statistics}\label{sec:theoretical_analysis}
%%%%%%%%%%%%%%%%%%%%%%%%%%%%%%%%%%%%%%
\textcolor{black}{In this section, we revisit the classical information-theoretic sufficiency condition for \ac{map} decoding in group testing and adapt it to the setting considered here, which involves general correlated prior statistics and a constrained typical pattern in which only $K_{\min}\leq K \leq K_{\max}$ out of $N$ items are defective.}
This bound also applies to the \ac{lva}-based \ac{msgt} algorithm, which can approach \ac{map} decoder performance for a sufficiently large list \cite{seshadri1994list}. We calculate this bound by extending the known result of \ac{ml} decoding for \ac{gt} problems \cite{atia2012boolean}, which bounds the probability of error of a \ac{ml} decoder using a Gallager-type bound \cite{gallager1968information} for \ac{mac} channels \cite{gallager1985perspective}. Recall that a noiseless \ac{gt} setup can be considered as a noiseless \ac{mac} setting with $N$ ``users", each having a single codeword. While these type of bounds are well explored in the literature for \ac{mac} channels with correlated prior statistics \cite{slepian1973coding,zhong2006joint,zhong2007joint,campo2011random,campo2012achieving,rezazadeh2019joint,rezazadeh2019error}, they do not expand naturally to \ac{gt}, since \ac{gt} can be thought as a special case where only ``$K$" users transmit \textcolor{black}{simultaneously} at any given point. This key difference necessitates dedicated analysis as follows in this section. In particular, to gain from the correlated \textcolor{black}{prior statistical information}, unlike \ac{ml} decoders and bounds e.g., in \cite{atia2012boolean,7541823,aldridge2019group}. \textcolor{black}{That is, unlike prior works that analyze \ac{ml} decoding under a uniform prior over defective sets of fixed cardinality, our analysis considers \ac{map} decoding with arbitrary, possibly correlated priors, requiring explicit weighting of competing defective sets according to their prior probabilities.} We first introduce new notations and definitions in Subsection~\ref{subsec:def}, then provide the probability of error to bound the \ac{map} performance in Subsection~\ref{subsec:Perror}, characterizing in Subsection~\ref{subsec:bound} the maximal number of tests required to guarantee reliability as given in Theorem~\ref{direct_theorem}. In Subsection~\ref{subsec:GE}, we evaluate this bound in \textcolor{black}{Gilbert-Elliott} model \cite{gilbert1960capacity} and illustrate it for a practical regime in Section~\ref{sec_numerical_eval}. %show in Section~\ref{sec_numerical_eval} that our proposed \ac{lva}-based \ac{msgt} algorithm achieves this bound for a sufficient large list.}
%\ale{If we don't have the simulation results until the submission, the last sentence should be removed.}
\textcolor{black}{Throughout those subsection, for essay of notation and to follow the existing \ac{gt} literature, we first analyze the performance of a \ac{map} decoder
\emph{conditioned on a fixed realization} $K=k$. Accordingly, all quantities below (including the index set $\allnchoosek$, the random variable $W$, and its distribution $P_W$)
are interpreted under the conditional prior $P(\cdot \mid K=k)$, and the
resulting bound applies to the conditional error probability $P_e(k)\triangleq \Pr\!\big(\widehat{\mathcal K}\neq \mathcal K \,\big|\, K=k\big)$.  Finally, in Subsection~\ref{subses:map_randomK}, we extend the analysis to support the random-$K$ model (see Section~\ref{sec_problem_form}) with tail probability (See Section~\ref{subsec:main_results_discussion}).}

\subsection{Definitions and notations}\label{subsec:def}
Let $\sbrace{n}=\cbrace{1,\ldots,n}$. Let $H( \cdot)$ and $I(\cdot;\cdot)$ denote Shannon's entropy and mutual information, respectively \cite{CovThom06}.
%
%\amit{Some of these definitions may overlap with previous definitions. This section should ideally be combined with the definitions at the beginning of the paper - will have to do that later.}
Let $\w\in\allnchoosek$ denote the index of a specific set of $K$ items out of the $N$ items. Throughout the paper, $\wstar$ denotes the index of the set of the $K$ actual infected items. Let $W$ denote the random variable which points to the index of the infected items, and has a probability distribution of $\pw{\cdot}:\allnchoosek\rightarrow\sbrace{0,1}$. In the following, $W$ might be omitted for convenience of read.
The $\w$-th set is denoted by $\sw$\footnote{\textcolor{black}{Note that there is a one-to-one mapping from $\w$ to $\sw$, so in the MAP bound analysis herein, we use both $\pw{\sw}$ and $\pw{\w}$ interchangeably.}}. Note that $\abs{\sw}=K$. The rows of $\x$ that correspond to the items in $\sw$ are denoted by $\xsw$. For a given $\w\neq\wstar$, the \ac{tp}/\ac{fn}/\ac{fp} are denoted by $\tpfull{\w}{\wstar},\fnfull{\w}{\wstar},\fpfull{\w}{\wstar}$. Note that $\abs{\fnfull{\w}{\wprime}}=\abs{\fpfull{\w}{\wprime}}$. In the following, we omit the dependency on $\w,\wstar$ and write $\xtpwprime, \xfpwprime,\xfnwprime$ instead of $\x_{TP\paren{\w,\wstar}}, \x_{FP\paren{\w,\wstar}}, \x_{FN\paren{\w,\wstar}}$ for notational simplicity. We say that an error of size $i$ happens if an output set of items $\w\neq\wstar$ is chosen and $\abs{\fpfull{\w}{\wstar}}=i$. This event is denoted by $\ei$. Assuming the event $\ei$ happened, the index of the \ac{fn} items is denoted by $j$, i.e. $j\in\allnkchoosei$. Let $\locglobsym$ denote a function that converts a $j\in\allnkchoosei$ index to a set of size $i$ of the \ac{fn} items (the dependence on $\wstar$ is omitted for notational simplicity). A decoder is denoted by $\decodersym$ and outputs some index $\w=\decode$, where the dependency on $\x$ is omitted in favor of notational simplicity. These definitions are illustrated in the following example. %\amit{I think working both with $\w$ and with $\sw$ creates unintuitive notation. I know we do it because this is what Atia did. I wonder if we should ditch all $\w$ entirely and work only with sets.}

\subsubsection*{Example}\label{sub:example}
Consider a case with $N=7$ items, where $K=3$ are defective. Suppose the first $3$ items are defective, i.e. items $\cbrace{1,2,3}$ are defective. In this case, $\swstar=\cbrace{1,2,3}$. Suppose there exists another set that is more likely: $\swprime=\cbrace{1,4,5}$. In this case, $\tpfull{\wprime}{\wstar}=\cbrace{1},\fpfull{\wprime}{\wstar}=\cbrace{4,5},\locglob{2}{j}=\fnfull{\wprime}{\wstar}=\cbrace{2,3}$, where $j$ is the index of the \ac{fn} set. In this example, an error of size $i=\abs{\fpfull{\wprime}{\wstar}}=2$ occurred, i.e. the event $E_2$ happened.

%%%%%%%%%%%%%%%%%%%%%%%%%%%%%%%%%%%%%%%%%%
\subsection{Probability of Error}\label{subsec:Perror}
%%%%%%%%%%%%%%%%%%%%%%%%%%%%%%%%%%%%%%%%%%
The optimal \ac{map} decoding rule that minimizes the decoding error probability selects an index $\w$ that satisfies
\begin{equation}\label{eq:map_c}
\pw{\sw}P(\yt|\x_{\sw})\geq \pw{\swprime}\pr{\yt|\x_{\swprime}}; \hspace{0.2cm}\forall \wprime\neq \w.
\end{equation}
Recall that the pooling matrix $\x$ is chosen according to a Bernoulli distribution with parameter $p$ in an i.i.d. fashion. An error occurs if there exists $\wprime\neq\wstar$ such that $\pw{\wprime}\pr{\yt\lvert \xswprime}\geq\pw{\wstar}\pr{\yt\lvert\xswstar}$. This event is denoted by $\errevent$. The average probability of error, in terms of exact recovery as defined in Section~\ref{sec_problem_form}, is defined as
\begin{equation}
    \pe = \sum_{\w} \pw{\sw}\pr{\decode\neq\w\lvert \w}.\label{eq:avg_err}
\end{equation}
Let $\eij$ denote the event where an error of size $i$ at index $j\in\allnchoosei$ happens. The error probability for this event is
\begin{equation}\label{eq:P(Ei)}
\pr{\eij}=\sum_{\w\in \allnchoosek} P\left(w\right) P\left(E_{i}^{j}\lvert w\right).
\end{equation}
By plugging in \eqref{eq:P(Ei)} into \eqref{eq:avg_err}, we get
\begin{align*}
    \pe & = \sum_{i=1}^K \sum_{j = 1}^{\nkchoosei} \pr{\eij} \\
    &= \sum_{i=1}^K \sum_{j = 1}^{\nkchoosei} \sum_{\w=1}^{\nchoosek} \pw{\w} \pr{\eij\lvert W=\w}.
\end{align*}

%%%%%%%%%%%%%%%%%%%%%%%%%%%%%%%%%%%%%%%
\subsection{Sufficiency GT MAP Bound}\label{subsec:bound}
%%%%%%%%%%%%%%%%%%%%%%%%%%%%%%%%%%%%%%%

Let $(\Fset,\Tset)$ denote a partition of the defective set $S$ into disjoint sets $\Fset$ and $\Tset$, with cardinalities $i$ and $K-i$, respectively. This partition allows for categorizing error occurrences into distinct groups, where in group $i$, $K - i$ defective items are already identified correctly, and the primary error scenario arises from trying to detect the remaining $i$ defective items.
\begin{theorem}\label{direct_theorem}
    Consider a group test with a \ac{map} decoder for the population with correlated prior statistics. For $K=O(1)$, if the number of the tests satisfies
    \begin{equation*}
        T \geq (1+\varepsilon)\max_{i=1,\ldots ,K} \off{\max_{j\in\allnkchoosei}}\frac{K}{i}H\left(P_{\Fset|\Tset}\right)
    \end{equation*}
    then, as $N\rightarrow \infty$ the average error probability approaches zero.
\end{theorem}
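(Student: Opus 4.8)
The plan is to adapt the Gallager-type random-coding argument that yields the \ac{ml} sufficiency bound for \ac{gt} in \cite{atia2012boolean}, modifying the step that merely \emph{counts} confusable error patterns into one that weights them by the prior $\pw{\cdot}$, so that the counting exponent $\log\binom{N-K}{i}$ is upgraded to the conditional entropy $H(P_{\Fset\mid\Tset})$. Starting from the union bound of Subsection~\ref{subsec:Perror}, $\pe=\sum_{i=1}^{K}\sum_{j}\sum_{\w}\pw{\w}\,\pr{\eij\mid W=\w}$, I would fix an error size $i$ and a true index $\w=\wstar$ and note that, by the \ac{map} rule \eqref{eq:map_c}, the event $\eij$ given $\wstar$ is contained in the union, over all competitors $\wprime$ that agree with $\wstar$ on exactly $K-i$ items (so $\swprime=\Tset\cup\fpfull{\wprime}{\wstar}$ with $\Tset\subset\swstar$, $\abs{\Tset}=K-i$, and $\swstar=\Tset\cup\Fset$ where $\Fset$ is the set of $i$ missed items), of the events $\{\pw{\wprime}\pr{\yt\mid\xswprime}\ge\pw{\wstar}\pr{\yt\mid\xswstar}\}$. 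Since the model is noiseless, $\pr{\yt\mid\xswstar}=1$ and $\pr{\yt\mid\xswprime}\in\{0,1\}$ is just the indicator that $\xswprime$ is consistent with $\yt$, so a size-$i$ error requires some competitor that is consistent with the outcomes and carries prior at least $\pw{\wstar}$.

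Next I would bound the indicator of $\eij$ given $\wstar$ by the standard Gallager bound $\bigl(\sum_{\wprime}(\pw{\wprime}/\pw{\wstar})^{s}\,\pr{\yt\mid\xswprime}^{s}\bigr)^{\rho}$ with $\rho\in[0,1]$ and $s=\fraconerho$, and take the expectation over the i.i.d. Bernoulli$(p)$ matrix $\x$ with $p=\ln 2/K$ and over $\yt$; by the usual random-coding logic this also yields a deterministic matrix meeting the bound. Grouping competitors by error size, the channel-dependent factor factorizes over the $T$ independent tests — the map $X\mapsto Y_t$ acts one test at a time — contributing a per-test quantity raised to the $T$-th power, which as $\rho\to 0$ decays like $2^{-T I_i}$, where $I_i$ is the per-test information distinguishing $\wstar$ from a size-$i$ competitor. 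A direct per-test computation with $p=\ln 2/K$ shows $I_i\ge\frac{i}{K}(1-o(1))$ (in particular $I_i\to 1$ bit for $i=K$, matching $\frac{K}{i}\big|_{i=K}=1$), which is exactly what produces the factor $\frac{K}{i}$. The prior-dependent factor, after summing $\pw{\wstar}$ over $\wstar$, grouping competitors by their shared part $\Tset$, and factoring each set's prior through $P(\Tset\subseteq S)$ and the conditional law $P_{\Fset\mid\Tset}$, reduces to a R\'enyi-type moment of $P_{\Fset\mid\Tset}$; this is the term I would show equals $2^{(1+\varepsilon)H(P_{\Fset\mid\Tset})}$ up to sub-exponential factors absorbed by the $(1+\varepsilon)$ slack.

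Putting the two factors together, the size-$i$ contribution to $\pe$ is at most $2^{-T I_i + (1+\varepsilon)H(P_{\Fset\mid\Tset})}$ up to sub-exponential terms, which tends to $0$ as $N\to\infty$ whenever $T I_i > (1+\varepsilon)H(P_{\Fset\mid\Tset})$, i.e. whenever $T>(1+\varepsilon)\frac{K}{i}H(P_{\Fset\mid\Tset})$ after using $I_i\ge i/K$. Because $K=O(1)$, the sum over $i\in\{1,\dots,K\}$ is a sum of finitely many such terms and costs only a constant, so the uniform hypothesis $T\ge(1+\varepsilon)\max_{1\le i\le K}\frac{K}{i}H(P_{\Fset\mid\Tset})$ drives $\pe\to 0$.

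The step I expect to be the main obstacle is forcing the prior-weighted competitor sum to collapse to exactly the \emph{Shannon} conditional entropy $H(P_{\Fset\mid\Tset})$: the crude route of applying $(\sum_k a_k)^{\rho}\le\sum_k a_k^{\rho}$ and then sending $\rho\to 0$ returns the prior-blind count $\log\binom{N-K}{i}$, discarding all the gain from the statistics, whereas a careless choice of $\rho$ returns a strictly larger R\'enyi entropy. Obtaining the Shannon entropy requires either carrying $\rho$ carefully through the optimization or, more transparently, a dedicated typicality argument on $P_{\Fset\mid\Tset}$ — restricting to the $\approx 2^{H(P_{\Fset\mid\Tset})}$ competitors whose conditional weight is in the typical range of $\pw{\wstar}$, since under \ac{map} only those can outscore the true set — and it is precisely here that the correlated statistics are exploited and the analysis departs from \cite{atia2012boolean}. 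A secondary, routine but slightly delicate point is verifying $I_i\ge\frac{i}{K}(1-o(1))$ for the Bernoulli design and checking that the sub-exponential counting slack is absorbed into $(1+\varepsilon)$ uniformly over the $O(1)$ values of $i$.
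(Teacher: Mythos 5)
Your skeleton --- union bound over error sizes, a Chernoff/Gallager bound on each competitor event with $s=\fraconerho$, averaging over the i.i.d.\ Bernoulli design so that the channel factor tensorizes over the $T$ tests, the per-test bound $I(\xfpwprimevec;\xtpwprimevec,Y)\geq i/K$, and a finite sum over $i$ since $K=O(1)$ --- is exactly the route the paper takes (Lemma~\ref{lemma:Ei_prob_ub} together with Claim~\ref{InformationClaim}). The difficulty is that the one step you flag as ``the main obstacle'' is precisely the step the proof must supply, and you leave it unresolved: in your second paragraph you assert that the prior-weighted competitor sum ``equals $2^{(1+\varepsilon)H(P_{\Fset|\Tset})}$ up to sub-exponential factors absorbed by the $(1+\varepsilon)$ slack,'' and in your last paragraph you concede that the obvious manipulations return either the prior-blind count $\log\binom{N-K}{i}$ or a strictly larger R\'enyi entropy. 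As written, the argument does not close.

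The paper closes it with the classical Gallager device, and no typicality argument is needed. Lemma~\ref{lemma:Ei_prob_ub} gives $P(\eij)\le 2^{-T\left(E_0(\rho)-\repsfact\esj/T\right)}$, where $\esj$ is exactly the R\'enyi-type moment $\log \sum_{\Tset}\bigl[\sum_{\Fset}\pw{\wstar}^{1/(1+\rho)}\bigr]^{1+\rho}$ you describe. One does \emph{not} evaluate this moment at a fixed $\rho$. Instead, both $E_0(0)=0$ and $E_{s,j}(0,P_W)=0$, and their derivatives at $\rho=0$ are $I(\xfpwprimevec;\xtpwprimevec,Y)$ and $H(P_{\Fset|\Tset})$ respectively (the latter is the standard source-coding-exponent identity, computed in Appendix~\ref{appendix:divE_s}). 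A first-order Taylor expansion of the exponent about $\rho=0$ yields $T\rho\left(I(\xfpwprimevec;\xtpwprimevec,Y)-H(P_{\Fset|\Tset})/T\right)+\log\binom{K}{i}+\mathcal{O}(T\rho^2)$; under the hypothesis the linear term is at least $\rho T I\cdot\varepsilon/(1+\varepsilon)$, and choosing $\rho$ positive but small --- below $2I\frac{\varepsilon}{1+\varepsilon}/\lvert E_0''(\psi)-E_{s,j}''(\psi)\rvert$ --- makes the exponent grow linearly in $T$. Claim~\ref{InformationClaim} then converts $H/I$ into $\frac{K}{i}H(P_{\Fset|\Tset})$. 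So the Shannon entropy emerges from the $\rho\to 0$ derivative, not from a typicality restriction, and there is no sub-exponential slack to absorb into $(1+\varepsilon)$: the only correction is the constant $\log\binom{K}{i}$, harmless since $K=O(1)$. To repair your write-up you would need either to carry out this expansion (including checking that the second derivatives are bounded independently of $T$) or to actually execute the typicality alternative you sketch; neither is done.
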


\textcolor{black}{
\begin{remark}
While Theorem~\ref{direct_theorem} is stated for a MAP decoder conditioned on a fixed realization $K=k$, it forms the analytical foundation for the random-$K$ model adopted throughout the paper. As discussed in Section~\ref{subsec:main_results_discussion} and presented in Section~\ref{subses:map_randomK}, the overall error probability under random $K$ is controlled by conditioning on a typical sparsity interval and adding exponentially decaying tail probabilities. Finally, since the proposed \ac{msgt} algorithm approaches MAP performance for sufficiently large list sizes, this bound also characterizes the fundamental test-complexity regime in which \ac{msgt} operates effectively.
\end{remark}
}

The sufficiency proof of Theorem~\ref{direct_theorem} is twofold. We first bound in the following lemma the error of probability of the optimal \ac{map} decoder as defined in \eqref{eq:map_c}. We then use this error of probability analysis to bound the required number of tests required by any decoder. %We empirically evaluate it in Subsection~\ref{subsec:GE} and show that our proposed  \ac{lva}-based \ac{msgt} algorithm achieves this bound for sufficient large list. \amit{need to revisit this claim.}\ale{If we don't have the simulation result ready, the last sentence should be removed.}

\begin{lemma}\label{lemma:Ei_prob_ub}
Consider a group test with a \ac{map} decoder for a population with correlated prior statistics. The error probability $P(E_{i}^{j})$ in \eqref{eq:P(Ei)} is bounded by
\begin{equation*}
    P(E_{i}^{j}) \leq 2^{-T\Big(E_{0}(\rho) - \textcolor{black}{ \left(1/\binom{K}{i}\right) E_{s,j}(\rho,P_W)}/{T}\Big)},
\end{equation*}
%%%%%%%%%%%%%%%%%%%%%%%%%%%%%
where the error exponent $E_0(\rho)$ is given by
\begin{small}
\begin{equation*}
\begin{split}
&E_0(\rho) =
-\log \sum_{\yt}\sum_{\xtpwprimevec}\paren{ \sum_{\xfpwprimevec}  \pr{\xfpwprimevec} \pr{\yt, \xtpwprimevec\lvert \xfpwprimevec}^{\frac{1}{1+\rho}}
}^{1+\rho},
\end{split}
\end{equation*}
\end{small}
\noindent the correlated prior statistics function of the items $\esj$ is given by
\begin{small}
\begin{equation*}
\begin{split}
\esj = \log \sum_{\tpsetijcounter}{\sbrace{\sum_{\fpsetijcounter} \pw{\wstar}^{\frac{1}{\paren{1+\rho}}}
}^{1+\rho}},
\end{split}
\end{equation*}
\end{small}
for $0\leq\rho\leq1$, where $\tpseti=\cbrace{S_{1}:\abs{S_{1}}=K-i}$,
$\fpseti{S_1}=\cbrace{S_{2}:\abs{S_{2}}=i,\;S_{2}\cap S_{1}=\emptyset}$, and
$ S_{\w}=S_{1}\cup S_{2}$.
\end{lemma}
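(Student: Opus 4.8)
## Proof Proposal for Lemma~\ref{lemma:Ei_prob_ub}

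The plan is to adapt the classical Gallager-type bounding technique for \ac{mac} channels \cite{gallager1985perspective} to the \ac{gt} setting, tracking carefully the correlated prior $\pw{\cdot}$ as a separate factor. First I would write $P(E_i^j)$ as an average over the realizations of the actual infected set $\wstar$ and over the test outcomes $\yt$ induced by $\x$. Conditioned on $\wstar$, the event $E_i^j$ is the event that there exists some competing $\w$ with $\abs{\fpfull{\w}{\wstar}}=i$ and $\abs{\fnfull{\w}{\wstar}}$ indexed by $j$, such that $\pw{\sw}\pr{\yt\lvert\xsw}\geq\pw{\swstar}\pr{\yt\lvert\xswstar}$. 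Apply a union bound over the competing sets $\w$ compatible with the $(i,j)$ structure, then apply the standard Gallager ``raise to the power $\rho/(1+\rho)$ and then to the power $1+\rho$'' trick: for each fixed $\yt$ and true-set rows, bound the indicator of the pairwise \ac{map} comparison by $\paren{\pw{\sw}\pr{\yt\lvert\xsw}/(\pw{\swstar}\pr{\yt\lvert\xswstar})}^{1/(1+\rho)}$ for $0\le\rho\le1$.

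The key structural observation — the one that separates this from the \ac{ml} analysis in \cite{atia2012boolean} — is that the rows of $\x$ split into three independent groups: $\xtpwprime$ (rows common to $\sw$ and $\swstar$), $\xfpwprime$ (rows in $\sw$ only), and $\xfnwprime$ (rows in $\swstar$ only). Because the pooling matrix is i.i.d.\ Bernoulli$(p)$, these three blocks are mutually independent. The outcome $\yt$ depends on $(\xtpwprime,\xfnwprime)$ through $\swstar$, and the competing likelihood $\pr{\yt\lvert\xsw}$ depends on $(\xtpwprime,\xfpwprime)$. After taking the expectation over $\x$ and over $\yt$, the $\xfnwprime$ block can be marginalized out (it only affects the true likelihood and integrates to a factor that is absorbed), and what remains factorizes into (i) a channel term involving only $\yt$, $\xtpwprimevec$, $\xfpwprimevec$ — this is exactly $E_0(\rho)$ as written — and (ii) a combinatorial/prior term coming from the union bound over all $\w$ with the prescribed \ac{tp}/\ac{fp} structure, weighted by $\pw{\cdot}^{1/(1+\rho)}$ and then raised to the $1+\rho$. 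Summing over the choices of $\Tset=\tpsetijcounter$ (the $K-i$ correctly-retained items) and $\Fset=\fpsetijcounter$ (the $i$ falsely-added items, disjoint from $\Tset$) produces precisely $\esj$. Collecting exponents gives the stated bound $P(E_i^j)\le 2^{-T(E_0(\rho)-\repsfact E_{s,j}(\rho,P_W)/T)}$, where the $\repsfact=1/\binom{K}{i}$ factor arises from the normalization of the average error probability over the $\binom{N}{K}$ choices of $\wstar$ restricted to the sub-block combinatorics (this bookkeeping should be checked carefully).

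The main obstacle I anticipate is the careful marginalization over the $\xfnwprime$ block together with the outcome $\yt$: one must verify that after summing over $\yt$ and $\xfnwprime$ the true-set likelihood contributes only through the normalization that makes $E_0(\rho)$ well-defined, and that no cross terms between the prior weights and the channel terms survive. A secondary subtlety is justifying the interchange of the union bound with the Gallager power trick uniformly in $\rho\in[0,1]$, and ensuring that the sum over $\w$ is correctly decomposed so that the dependence on $j$ (the index of the \ac{fn} set) enters $\esj$ only through which items are available to be falsely added — i.e.\ that $S_2\cap S_1=\emptyset$ and that $S_2$ ranges over item indices outside the true set. Once the independence of the three row blocks is exploited and the combinatorial sum is organized by the $(\Tset,\Fset)$ partition, the remaining steps are routine exponent manipulations of the Gallager type.
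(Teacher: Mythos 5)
Your proposal follows essentially the same route as the paper's proof: union bound over the competitors in $\calItildejdiff$, the Gallager tilting/power trick with parameter $\rho$, exploitation of the independence of the $\xtpwprime,\xfpwprime,\xfnwprime$ blocks, and reorganization of the prior sum over $(\Tset,\Fset)$ partitions to produce $\esj$. The two points you flag are resolved exactly as you anticipate: the $\xfnwprime$ marginalization works by choosing $s=1/(1+\rho)$ and using that $\xfnwprime$ and $\xfpwprime$ are identically distributed (so the sum over $\xfnwprime$ supplies the missing factor that completes the $(1+\rho)$-th power in $E_0$), and the $\repsfact$ factor is established by a separate counting lemma showing that each $\w$ is over-counted exactly $\binom{K}{i}$ times when the sum over $\w$ is rewritten as a double sum over $\tpsetijcounter$ and $\fpsetijcounter$.
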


%\amit{$\tpsetijcounter$ and $\fpsetijcounter$ are not yet defined - they are defined in the proof. Is it okay? I think it will be hard to define them here.}
\noindent\textit{Proof of Lemma~\ref{lemma:Ei_prob_ub}:} The proof is given in Appendix~\ref{appendix:Ei_prob_ub}. \qed

%\amit{TODO: Need to make sure the entire paper uses the acronym package.}

\off{\begin{lemma}\label{direct_lemma}
    For $K=O(1)$, if the number of the tests satisfies
    \begin{equation*}
        T \geq (1+\varepsilon)\max_{i=1,\ldots ,K} \max_{j\in\allnkchoosei}\frac{H\left(P_{\Fset|\Tset}\right)}{I(\xfpwprimevec;\xtpwprimevec,Y)}
    \end{equation*}
    then, as $N\rightarrow \infty$ the average error probability approaches zero.
\end{lemma}}

\begin{proof}[Proof of Theorm~\ref{direct_theorem}]
Define
\begin{equation*}\label{eq:exp_bound}
\emph{f}(\rho) = E_{o}(\rho)-\frac{E_{s,j}(\rho,P_W)}{T}.
\end{equation*}
%Since $0 \leq \rho \leq 1$ can be optimized, we wish to show that  $T\emph{f}(\rho)\rightarrow \infty$ as $N \rightarrow \infty$ for some $\rho$ in this range. If this is done for all $E_{i}^{j}$, then due to the resulting exponential decay of $P(E_{i}^{j})$, using a simple union bound will show that the error probability is small in total as well, completing the proof of Theorem~\ref{direct_theorem}. Since the function $\emph{f}(\rho)$ is differentiable and has a power series expansion, using a Taylor series expansion in the neighborhood of $\rho=0$ we have
%\begin{equation*}
%\emph{f}(\rho) = \emph{f}(0) + \rho f'(0) + \frac{\rho^2}{2}f''(\psi)
%\end{equation*}
%for some $\psi\in (0,\rho)$.
%
Since \( 0 \leq \rho \leq 1 \) can be optimized, we aim to show that \( T f(\rho) \to \infty \) as \( N \to \infty \) for some \( \rho \) within this range. If this holds for all \( E_{i}^{j} \), then, due to the resulting exponential decay of \( P(E_{i}^{j}) \), applying a simple union bound will demonstrate that the total error probability remains small, thus completing the proof of Theorem~\ref{direct_theorem}. Since the function \( f(\rho) \) is differentiable and admits a power series expansion, we use a Taylor series expansion in the neighborhood of \( \rho = 0 \), yielding
\begin{equation*}
f(\rho) = f(0) + \rho f'(0) + \frac{\rho^2}{2} f''(\psi)
\end{equation*}
for some \( \psi \in (0, \rho) \).

%\amit{TODO: This is hard to read. Need to write it as subsubsections: calculate $f\paren{0},\rho f'\paren{0}, \rho^2/2 f''\paren{\psi}$}

For completeness, we show as in \cite{atia2012boolean} that
\ifsingle
\begin{small}
\begin{eqnarray}
&& \hspace{-0.85cm} \frac{\partial E_{o}}{\partial\rho} |_{\rho = 0}\nonumber\\
&\hspace{-0.7cm}  = &\hspace{-0.5cm} \sum_Y \sum_{\xtpwprimevec}[\sum_{\xfpwprimevec}\pr{\xfpwprimevec}\pr{Y,\xtpwprimevec|\xfpwprimevec}\log \pr{Y,\xtpwprimevec|\xfpwprimevec} - \sum_{\xfpwprimevec}\pr{\xfpwprimevec}\pr{Y,\xtpwprimevec|\xfpwprimevec} \sum_{\xfpwprimevec}P(\xfpwprimevec)p(Y,\xtpwprimevec|\xfpwprimevec)]\nonumber\\
&\hspace{-0.7cm}  = &\hspace{-0.5cm} \sum_Y \sum_{\xtpwprimevec}\sum_{\xfpwprimevec}P(\xfpwprimevec)p(Y,\xtpwprimevec|\xfpwprimevec) \log\frac{p(Y,\xtpwprimevec|\xfpwprimevec)}{\sum_{\xfpwprimevec}P(\xfpwprimevec)p(Y,\xtpwprimevec|\xfpwprimevec)}\nonumber\\
&\hspace{-0.7cm}  = &\hspace{-0.5cm} I(\xfpwprimevec;\xtpwprimevec,Y).\nonumber
\end{eqnarray}
\end{small}
\else
\begin{small}
\begin{eqnarray}
&& \hspace{-0.85cm} \frac{\partial E_{o}}{\partial\rho} |_{\rho = 0}\nonumber\\
&\hspace{-0.7cm}  = &\hspace{-0.5cm} \sum_Y \sum_{\xtpwprimevec}[\sum_{\xfpwprimevec}\pr{\xfpwprimevec}\pr{Y,\xtpwprimevec|\xfpwprimevec}\log \pr{Y,\xtpwprimevec|\xfpwprimevec}\nonumber\\
&\hspace{-0.7cm}  - &\hspace{-0.5cm} \sum_{\xfpwprimevec}\pr{\xfpwprimevec}\pr{Y,\xtpwprimevec|\xfpwprimevec} \sum_{\xfpwprimevec}P(\xfpwprimevec)p(Y,\xtpwprimevec|\xfpwprimevec)]\nonumber\\
&\hspace{-0.7cm}  = &\hspace{-0.5cm} \sum_Y \sum_{\xtpwprimevec}\sum_{\xfpwprimevec}P(\xfpwprimevec)p(Y,\xtpwprimevec|\xfpwprimevec)\nonumber\\
&& \hspace{-0.85cm} \log\frac{p(Y,\xtpwprimevec|\xfpwprimevec)}{\sum_{\xfpwprimevec}P(\xfpwprimevec)p(Y,\xtpwprimevec|\xfpwprimevec)}%\nonumber\\
= I(\xfpwprimevec;\xtpwprimevec,Y).\nonumber
\end{eqnarray}
\end{small}
\fi
And also, in Appedix~\ref{appendix:divE_s}, that
\begin{eqnarray*}
&& \hspace{-0.85cm} \frac{\partial E_{s,j}(\rho,P_W)}{\partial\rho} |_{\rho = 0}  = H(P_{\Fset|\Tset}).
\end{eqnarray*}
This is obtained directly from \cite{slepian1973noiseless} as also elaborated in \cite[Chap. 1]{rezazadeh2019error}.
Note that $E_o(0)=0$ and $E_{s,j}(0,P_W)=0$, hence we have
\ifsingle
\begin{eqnarray*}
    T\emph{f}(\rho) =& T\rho\Bigg(I\paren{\xfpwprimevec;\xtpwprimevec,Y} - \frac{H\paren{P_{\Fset|\Tset}}}{T}\Bigg)
    - \log\repsfact + T\frac{\rho^2}{2}E_o''\paren{\psi} - T\frac{\rho^2}{2}E_{s,j}''\paren{\psi},
\end{eqnarray*}
\else
\begin{eqnarray*}
    T\emph{f}(\rho) =& T\rho\Bigg(I\paren{\xfpwprimevec;\xtpwprimevec,Y} - \frac{H\paren{P_{\Fset|\Tset}}}{T}\Bigg) \nonumber\\
    -& \log\repsfact + T\frac{\rho^2}{2}E_o''\paren{\psi} - T\frac{\rho^2}{2}E_{s,j}''\paren{\psi},
\end{eqnarray*}
\fi
and, if
\begin{equation}\label{eq:T_MAP_bound}
 T \geq \paren{1+\varepsilon}\max_{i=1,\ldots ,K} \off{\max_{j\in\allnkchoosei}}\frac{H\paren{P_{\Fset|\Tset}}}{I\paren{\xfpwprimevec;\xtpwprimevec,Y}}
\end{equation}
for some $\varepsilon >0$, we have %\amit{Where does this come from?}
\ifsingle
\begin{small}
\begin{eqnarray*}
    &&\hspace{-0.5cm}T\emph{f}(\rho) \ge -\log\repsfact +T\rho\Bigg(I\paren{\xfpwprimevec;\xtpwprimevec,Y} \paren{\frac{\varepsilon}{1+\varepsilon}} + \frac{\rho}{2}\paren{E_o''(\psi) - E_{s,j}''\paren{\psi}}\Bigg) \nonumber
\end{eqnarray*}
\end{small}
\else
\begin{small}
\begin{eqnarray*}
    &&\hspace{-0.5cm}T\emph{f}(\rho) \ge -\log\repsfact \nonumber\\
    &&\hspace{-0.5cm}+T\rho\Bigg(I\paren{\xfpwprimevec;\xtpwprimevec,Y} \paren{\frac{\varepsilon}{1+\varepsilon}} + \frac{\rho}{2}\paren{E_o''(\psi) - E_{s,j}''\paren{\psi}}\Bigg) \nonumber
\end{eqnarray*}
\end{small}
\fi
Moreovver, note that $E_o''\paren{\psi}$ is negative \cite{gallager1968information}. However, it is independent of the other constants and $T$, hence choosing
\[
0 < \rho < \frac{2I\paren{\xfpwprimevec;\xtpwprimevec,Y}\paren{\frac{\varepsilon}{1+\varepsilon}}}{|E_o''\paren{\psi}-E_{s,j}''\paren{\psi}|},
\]
and as $\repsfact$ is fixed and independent of $T$, we have $T\emph{f}(\rho) \rightarrow \infty$ as $N\rightarrow \infty$. %\ale{Need to check if the part with $E_{s,j}''(\psi)$ is correct, probably same arguments as in Gallager(page 142). } \amit{I looked at your citation - it gives this claim without a proof, and claims it's non-positive (not negative). This is probably not a problem for our case and we should be able to use it.}

In practice, the expression $I(\xfpwprimevec;\xtpwprimevec,Y)$ in \eqref{eq:T_MAP_bound} is important to understand how many tests are required, yet it is not a function of the problem parameters and is bounded in \cite{atia2012boolean} and \cite{cohen2020secure} to get a better handle on $T$.
\begin{claim}[\hspace{-0.01cm}\cite{atia2012boolean}]\label{InformationClaim}
\textcolor{black}{\off{For large $K$, and }Under} a fixed input distribution for the testing matrix $\left(\frac{\ln(2)}{K},1-\frac{\ln(2)}{K}\right)$, the mutual information between $\xfpwprimevec$ and $(\xtpwprimevec,Y)$ is bounded by
\begin{equation*}
\textstyle I(\xfpwprimevec;\xtpwprimevec,Y)\geq \frac{i}{K}.
\end{equation*}
\end{claim}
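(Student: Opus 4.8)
\textbf{Proof plan for Claim~\ref{InformationClaim}.}
The goal is to lower bound $I(\xfpwprimevec;\xtpwprimevec,Y)$ under the Bernoulli$(p)$ test design with $p=\ln(2)/K$, where $\xfpwprimevec$ corresponds to the columns of the $i$ false-positive items and $\xtpwprimevec$ to the columns of the $K-i$ true-positive items, and $Y$ is the pooled outcome. First I would use the chain rule together with the independence of the columns of $\x$: since $\xfpwprimevec$ and $\xtpwprimevec$ are drawn independently (i.i.d. Bernoulli entries), $I(\xfpwprimevec;\xtpwprimevec,Y) = I(\xfpwprimevec;Y\mid\xtpwprimevec)$. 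Conditioning on $\xtpwprimevec$ fixes which tests are already ``explained'' by a true-positive item; on the remaining tests (those with no true-positive participant), $Y$ behaves like the OR of the $i$ false-positive columns, which is Bernoulli with parameter $1-(1-p)^i$. On these tests $Y$ is informative about $\xfpwprimevec$, while on the already-positive tests it contributes nothing.

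The next step is to reduce the per-test analysis to a single scalar mutual information. By the memorylessness of the test matrix across the $T$ rows, $I(\xfpwprimevec;Y\mid\xtpwprimevec)$ decomposes into a sum over tests, and the contribution of a test is nonzero only when that test contains no true-positive item, an event of probability $(1-p)^{K-i}$. For such a test, the relevant quantity is $I\big(X_{FP}^{(\text{row})};\,\bigvee_{\text{FP items}}X^{(\text{row})}\big)$ where the row vector has $i$ Bernoulli$(p)$ entries; one computes $Y_{\text{row}}\sim\text{Bernoulli}(q)$ with $q=1-(1-p)^i$, and the conditional entropy $H(X_{FP}^{(\text{row})}\mid Y_{\text{row}})$ collapses because $Y_{\text{row}}=0$ pins down all $i$ entries to zero, so the single-test mutual information equals $H_b(q)$ minus the residual entropy of the FP entries given $Y_{\text{row}}=1$. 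Plugging $p=\ln(2)/K$ gives $(1-p)^i\to 2^{-i/K}$ and $(1-p)^{K-i}\to 2^{-(K-i)/K}$ in the large-$K$ regime, and expanding the logarithms one should recover a leading term of order $i/K$.

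Then I would assemble: the total mutual information is (number of tests) $\times$ (per-test positive-probability) $\times$ (per-test information), i.e. roughly $T\cdot 2^{-(K-i)/K}\cdot \Theta(i/K)$ on a per-test-normalized basis; normalizing out $T$ (the claim is stated per test, matching how it is used in \eqref{eq:T_MAP_bound} where $T$ already multiplies both sides) leaves a bound of the form $\ge i/K$ after the constants from the $\ln 2$ choice of $p$ are tracked — this is exactly the computation carried out in \cite{atia2012boolean} and reproduced in \cite{cohen2020secure}, so I would cite those for the routine estimates and only reproduce the key identity $I(\xfpwprimevec;\xtpwprimevec,Y)=I(\xfpwprimevec;Y\mid\xtpwprimevec)$ and the single-test reduction. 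The main obstacle is bounding the residual term $H(X_{FP}^{(\text{row})}\mid Y_{\text{row}}=1)$ — showing it is small enough that the negative contribution does not erode the $i/K$ leading order; this requires the $p=\ln(2)/K$ scaling (which makes each test ``balanced'' so that a positive outcome is still quite informative) and a careful second-order expansion, and it is precisely why the claim is stated for large $K$ and this particular input distribution rather than for arbitrary $p$.
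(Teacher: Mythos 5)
First, note that the paper does not prove this claim at all: it is imported verbatim from \cite{atia2012boolean} (see also \cite{cohen2020secure}), so there is no internal proof to compare against --- your sketch has to stand on its own against the standard argument. Your skeleton is the right one and matches that argument in outline: by independence of the columns, $I(\xfpwprimevec;\xtpwprimevec,Y)=I(\xfpwprimevec;Y\mid\xtpwprimevec)$; the test is informative only on the event (probability $(1-p)^{K-i}$) that no true-positive item participates; and on that event $Y$ is the OR of the $i$ Bernoulli$(p)$ false-positive entries. However, your formula for the single-test information is garbled. You compute via $H(\xfpwprimevec)-H(\xfpwprimevec\mid Y)$ and assert the result is ``$H_b(q)$ minus the residual entropy of the FP entries given $Y=1$''; that identity is false ($H(\xfpwprimevec)=i\,h(p)$, not $h(q)$), and the ``main obstacle'' you then flag --- controlling $H(\xfpwprimevec\mid Y=1)$ --- is a phantom. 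In the noiseless model $Y$ is a deterministic function of $\xfpwprimevec$, so the clean computation is simply $I(\xfpwprimevec;Y)=H(Y)-H(Y\mid\xfpwprimevec)=H(Y)=h\bigl(1-(1-p)^i\bigr)$, with no residual term at all. The whole per-test quantity is exactly
\begin{equation*}
I(\xfpwprimevec;\xtpwprimevec,Y)=(1-p)^{K-i}\,h\bigl(1-(1-p)^i\bigr),
\end{equation*}
where $h$ is the binary entropy in bits.

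The second, and more serious, gap is that the actual quantitative content of the claim --- that this expression is at least $i/K$ when $p=\ln(2)/K$ --- is never established. Writing $u=i/K$ and using $(1-p)^m\to 2^{-m/K}$, the quantity becomes $2^{u-1}h(2^{-u})$, and expanding $h$ gives
\begin{equation*}
2^{u-1}h(2^{-u})=\frac{u}{2}+\frac{2^{u}-1}{2}\,\log_2\!\left(\frac{2^{u}}{2^{u}-1}\right).
\end{equation*}
Your ``leading term of order $i/K$'' is the first summand, and its constant is $1/2$, not $1$: the naive expansion only yields $i/(2K)$. To reach the stated bound you must additionally show $(2^{u}-1)\log_2\bigl(2^{u}/(2^{u}-1)\bigr)\ge u$ for all $u\in(0,1]$ (with equality at $u=1$), which is a genuine, if elementary, inequality that needs to be argued --- e.g.\ by convexity or by checking the endpoints and monotonicity of the ratio. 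As written, ``after the constants from the $\ln 2$ choice of $p$ are tracked'' is doing all the work of the claim without any justification, so the proposal does not yet prove the bound $\ge i/K$.
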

%\textit{Proof:} See GST and Atia papers.

\textcolor{black}{Claim~\ref{InformationClaim} relies exclusively on the independence of the Bernoulli test design used to construct the testing matrix $X$. No independence is assumed for the defective-item prior. In particular, the result holds for arbitrary (possibly correlated) priors on $\mathbf U$, including the trellis-based model considered in this paper. All prior dependence is accounted for in the MAP decoding step through the enumeration and weighting of competing defective supports, while Claim~\ref{InformationClaim} characterizes only the conditional likelihood of the test outcomes given a fixed support.}
Hence, by applying Claim~\ref{InformationClaim} in Eq.~\eqref{eq:T_MAP_bound}, we achieves the sufficiency bound on $T$ as provided in Theorem~\ref{direct_theorem},
\begin{equation*}
     (1+\varepsilon)\max_{i=1,\ldots ,K} \off{\max_{j\in\allnkchoosei}}\frac{K}{i}H\left(P_{\Fset|\Tset}\right),
\end{equation*}
such that, as $N\rightarrow \infty$ the average error probability approaches zero, and reliability is established.

%\ale{Now it is required to 1) test the bound in Theorem~\ref{direct_theorem} in the scenarios tested in Allerton's paper to see how good the bound is... 2) to test if it is required the $\max$ on $j$ in Theorem~\ref{direct_theorem}. I think that for the obtained bound we can remove this maximization.}

%\amit{This might sound trivial, but how do you calculate $H\paren{P_{S_2|S_1}}$? Also, what model is used for the infected items?}

\end{proof}

\ifsingle
\begin{figure*}				
	\centering	
	\begin{subfigure}[b]{.4\linewidth}
		\includegraphics[trim={1cm 0 0 0 }, width=\linewidth]{coma_average_upper_bound_vs_empirical_N10000_K15_with_lbs_v5}	
        \captionsetup{width=0.97\linewidth}
	\caption{\small Upper bound for possibly defective items after DND for $N=10000$, and $K=15$.}\label{fig:pd_after_coma_bounds}
	%\vspace{-0.15cm}
	\end{subfigure}
	$\qquad$
	\begin{subfigure}[b]{.4\linewidth}
		\includegraphics[width=\linewidth]{boxplot_N10000_K15_v3}
        \captionsetup{width=0.99\linewidth}
		\caption{\small Lower bound for definitely defective items after DD for $N=10000$, and $K=15$.}\label{fig:dd_after_dd_bounds}
        %\vspace{-0.15cm}
	\end{subfigure}
	%$\quad$
	\begin{subfigure}[b]{.4\linewidth}
        \vspace{0.2cm}
		\includegraphics[width=\linewidth]{gamma_vs_t_N500_K3_v6}
        \captionsetup{width=0.97\linewidth}
		\caption{\small Minimum $\gamma$ parameter to satisfies~\eqref{eq:gamma_cond} for $N=500$, and $K=3$.}
		\label{fig:gamma_vs_T}
    \vspace{-0.15cm}
	\end{subfigure}
	\caption{\small Numerical evaluation for theoretical results and bounds. The results in (a), (b), and (c) are over 1000 iterations. For ML Upper Bound (UB), $T_{ML} = (1+\epsilon)K\log_{2}N$, for any $\epsilon>0$ \cite{atia2012boolean}. In particular, $\epsilon=0.25$ in the results presented herein.}
	\vspace{-0.5cm}
	\label{fig:ConfNew}
\end{figure*}
\else\fi

\subsection{Gilbert Elliott calculations}\label{subsec:GE}
Consider a \textcolor{black}{Gilbert-Elliott} model \cite{gilbert1960capacity}. It is a Markov process with two states. In the context of \ac{gt}, as also recently considered in \cite{ravi2025fundamental}, we call them infected (state 1) and not infected (state 0). The probability of transition from state 0,1 to state 1,0 is denoted by $q,s$, respectively. The probability that the first state is in state 1 is denoted by $\pi\triangleq\pr{U_i=1}$. Unless stated otherwise, we assume the system is in steady state, that is, the probability of the system to be at state 1 at any point is given by $\pi=\frac{q}{q+s}$.

To evaluate Theorem \ref{direct_theorem}, we need to evaluate
\ifsingle
\begin{equation*}
    H\paren{P_{\Fset|\Tset}}={\sum_{\tpsetijcounter}} \paren{\sum_{\fpsetijcounter}\pw{\Tset,\Fset}\log\frac{\pr{\Tset}}{\pw{\Tset,\Fset}}},
\end{equation*}
\else
\begin{multline*}
    H\paren{P_{\Fset|\Tset}}=\\{\sum_{\tpsetijcounter}} \paren{\sum_{\fpsetijcounter}\pw{\Tset,\Fset}\log\frac{\pr{\Tset}}{\pw{\Tset,\Fset}}},
\end{multline*}
\fi
where $\pr{S_1}=\sum_{\fpsetijcounter}\pw{S_1,S_2}$.

Let $\sw=\paren{S_1,S_2}$ denote a specific set of $K$ items.  Recall that $\pw{\sw}$ calculates the probability to get $\sw$ as the infected set, given that there are $K$ infected items.

Let $\gen$ denote a \textcolor{black}{Gilbert-Elliott} process that runs for $N$ steps. Let $\cbrace{U_i}_{i=1}^{N}$ denote the states of the process during the $N$ steps. Let $\totalones\triangleq\sum_{i=1}^{N}U_{i}$. While our analysis is conditioned on the assumption that exactly $K$ items are infected, a general $\gen$ process may result in any number of infected items between 0 and $N$. We therefore analyze and simulate a special class of \textcolor{black}{Gilbert-Elliott} processes; \textcolor{black}{Gilbert-Elliott} processes that reach the infected state exactly $K$ times after $N$ steps. Such a \textcolor{black}{Gilbert-Elliott} process that is guaranteed to produce exactly $K$ infected items is denoted by $\genk$. Then $\pw{\sw}$ of $\genk$ is given as follows
\begin{eqnarray}\label{eq:P_GE}
    \pw{\sw}&=&
    \pr{\genk\text{ results in }\sw}\nonumber \\
    &\overset{(a)}{=}&\pr{\gen \text{ results in } \sw \lvert \; \totalones=K} \nonumber \\
    &\overset{(b)}{=}&\frac{\pr{\gen \text{ results in } \sw, \totalones=K}}{\pr{\totalones=K}} \nonumber \\
    &\overset{(c)}{=}&\frac{\pr{\gen \text{ results in } \sw}}{\pr{\totalones=K}},
\end{eqnarray}
where (a) follows from the definition of $\genk$, (b) from the definition of conditional probability, and (c) follows since $\sw$ contains exactly $K$ infected items.

\subsubsection{\underline{Numerator of Eq.\eqref{eq:P_GE}}}
The numerator evaluates the probability that a standard \textcolor{black}{Gilbert-Elliott} process results in a specific pattern of states, where state one is achieved only at the states in $\sw$.
Recall that \textcolor{black}{Gilbert-Elliott} is a Markov chain, and so $\pr{\sw}=\pr{U_{1}}\cdot\prod_{i=2}^{N}\pr{\ith\lvert U_{i-1}}$. This can be readily calculated using the definitions of $q,s,\pi$, as described at the beginning of this section
\[
\pr{U_1} =
\begin{cases}
    1-\pi & \text{if } 1\not\in\sw \\
    \pi & \text{if } 1\in\sw
\end{cases},
\]
and
\[
\pr{U_i\lvert U_{i-1}} =
\begin{cases}
    1-q & \text{if } i\not\in\sw, i-1\not\in\sw \\
    q & \text{if } i\in\sw, i-1\not\in\sw \\
    s & \text{if } i\not\in\sw, i-1\in\sw \\
    1-s & \text{if } i\in\sw, i-1\in\sw
\end{cases}.
\]

\subsubsection{\underline{Denominator of Eq.\eqref{eq:P_GE}}}

We create a new Markov chain that represents $\sum_{i=1}^n U_i$ as $n$ advances from $1$ to $N$. This Markov chain keeps track of how many times the \textcolor{black}{Gilbert-Elliott} process was in state 1. Hence this Markov chain has $\numstatesmc$ states. State $n,
n'\in\cbrace{0,\ldots,N}$ denotes a state where the \textcolor{black}{Gilbert-Elliott} process was in the infected state (state 1) $n$ times, and the current state is 0,1, respectively. An example of this Markov chain is shown in Fig.~\ref{fig:markov_chain}. For convenience, the states $0',N$ are also listed, even though they cannot be achieved. This Markov chain's states can be arranged in a vector, where the $j$-th state represents $\sum_{i=1}^{N} U_i=\floor*{j/2}$, and even/odd $j$ represents the current state is not/is $0$. For instance, in the example of Fig.~\ref{fig:markov_chain}, the states $6/7$ represent states $2/2'$ in the figure. The initial state vector is given as by $\paren{1-\pi,0,0, \pi, 0, \ldots, 0}$. The transition matrix is given as follows
\begin{equation}P=
\begin{bmatrix}
    1-q     & 0         & 0         & q         & 0         & 0         & \dots  & 0 \\
    0       & 0         & 0         & 0         & 0         & 0         & \dots  & 0 \\
    0       & 0         & 1-q       & 0         & 0         & q         & \dots  & 0 \\
    0       & 0         & s         & 0         & 1-s         & 0       & \dots  & 0 \\
    \vdots  & \vdots    & \vdots    & \vdots    & \vdots    & \vdots    & \ddots & \vdots \\
    0       & 0         & 0         & 0         & 0         & 0         & \dots     & 0
\end{bmatrix}\nonumber
.\end{equation}

\ifsingle
\begin{figure*}				
	\centering	
	\begin{subfigure}[b]{.47\linewidth}
		\includegraphics[width=\linewidth]{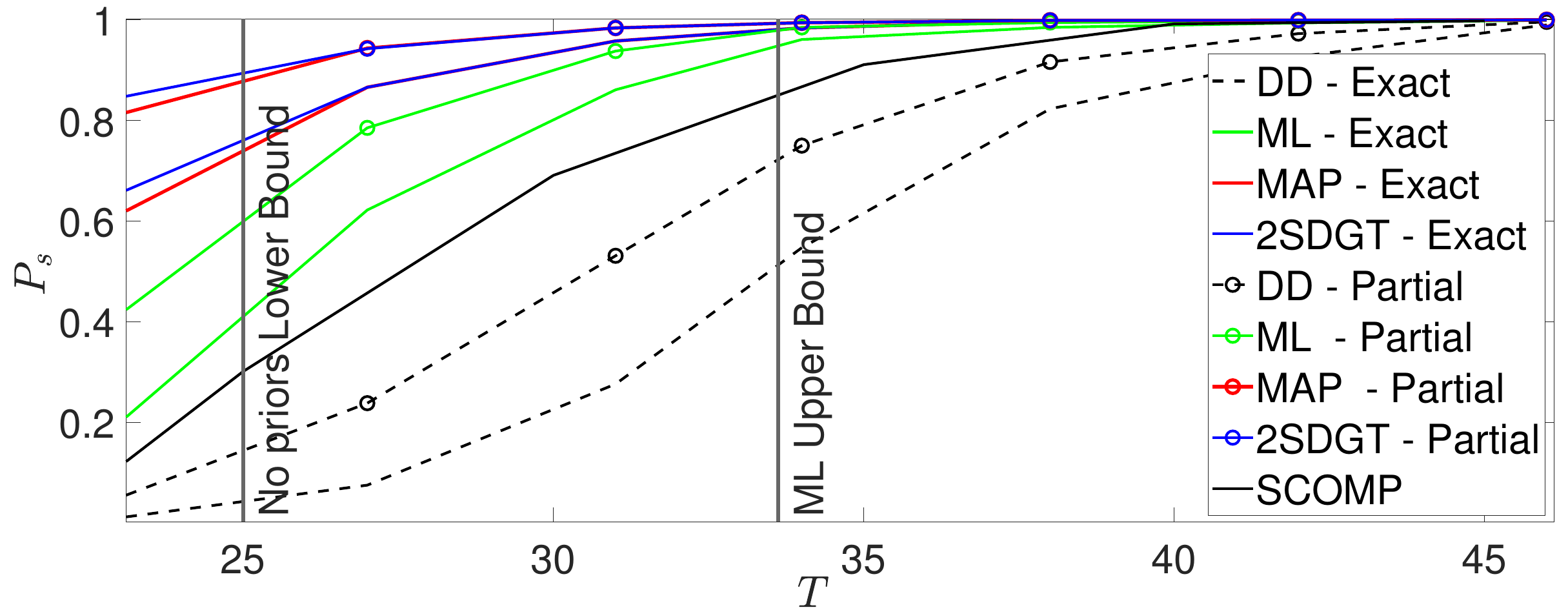}
	\caption{$N = 500$ and $K=3$.}\label{fig:benchmark_K3}
	%\vspace{-0.15cm}
	\end{subfigure}
	$\quad$
	\begin{subfigure}[b]{.47\linewidth}
		\includegraphics[width=\linewidth]{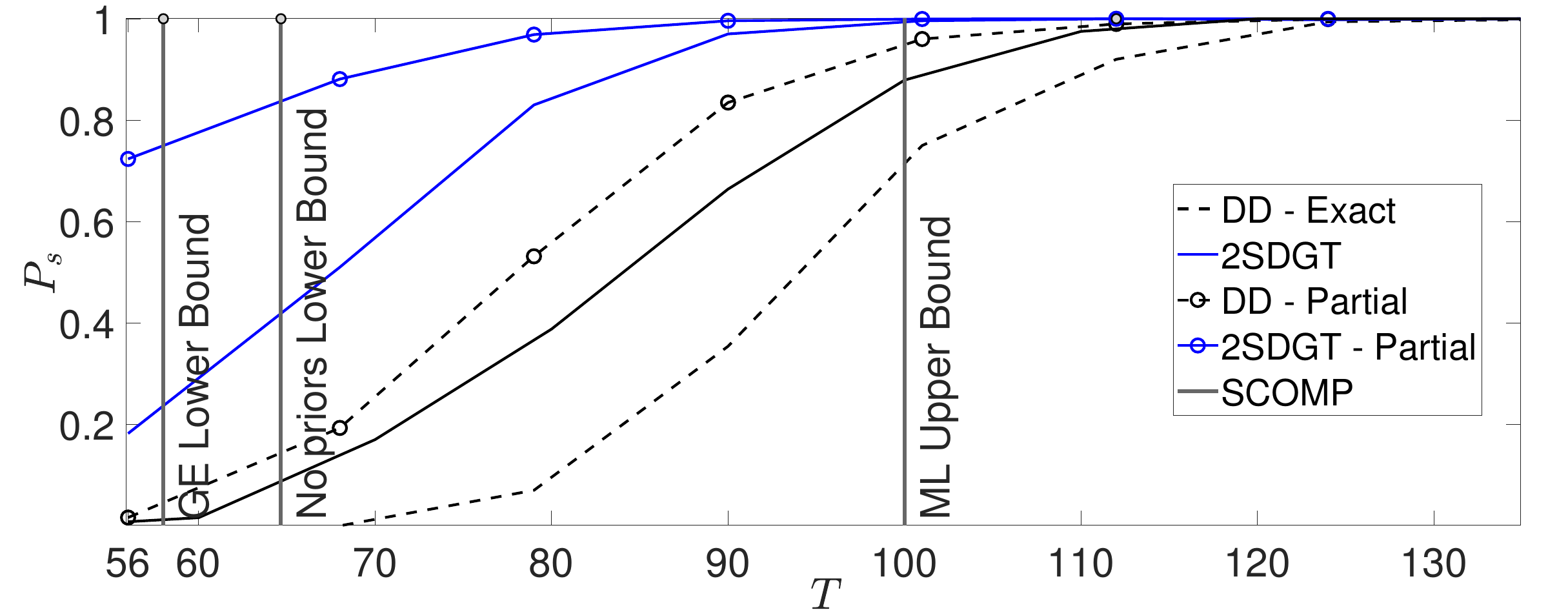}
		\caption{$N = 1024$ and $K=8$.}\label{fig:benchmark_K8}
        %\vspace{-0.15cm}
	\end{subfigure}
	%$\quad$
	\begin{subfigure}[b]{.47\linewidth}
        \vspace{0.2cm}
		\includegraphics[width=\linewidth]{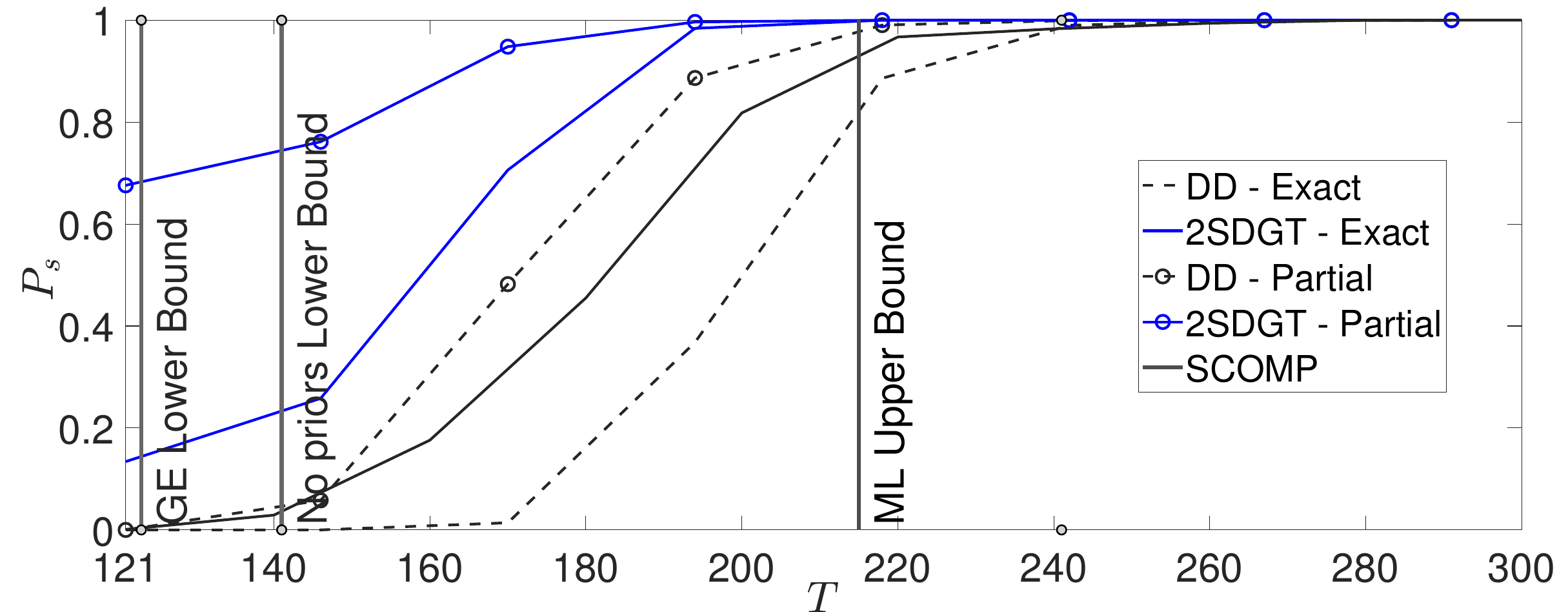}
		\caption{$N = 10000$ and $K=13$.}
		\label{fig:benchmark_K13}
		\vspace{-0.15cm}
	\end{subfigure}	
	\caption{Success probability of \ac{msgt}, MAP, ML, SCOMP, and DD over 1000 iterations. A comparison to ML and MAP is not presented in (b) and (c), as they are not feasible for populations of those sizes due to the computational complexity burden.}
	\vspace{-0.5cm}
	\label{fig:ConfNew1}
\end{figure*}
\else\fi

The probabilities to be in any state after $N$ steps is therefore given by
\begin{equation}\label{eq:ge_markov}
    \paren{1-\pi,0,0,\pi,0,\ldots,0}\cdot P^{N-1}.
\end{equation}
To conclude the calculation, the probability to have exactly $K$ infected items after $N$ steps can be obtained by summing items $2K+1$ and $2K+2$ of the vector in \eqref{eq:ge_markov}.
\begin{figure}
    \centering
    \ifsingle
    \includegraphics[width=0.6\linewidth]{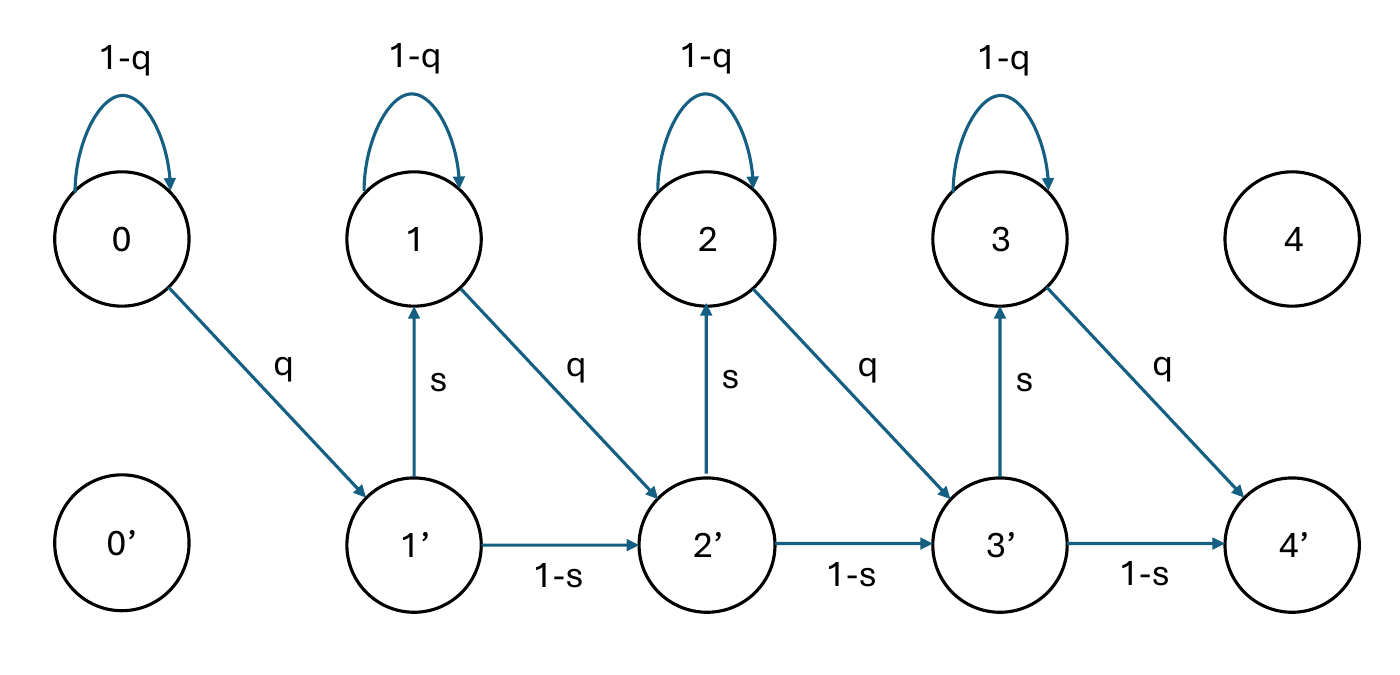}
    \else
    \includegraphics[width=1\linewidth]{markov.pdf}
    \fi
    \caption{Markov chain for the number of infected items of a \textcolor{black}{Gilbert-Elliott} model, $N=4$.}
    \label{fig:markov_chain}
\end{figure}

In the next section, we evaluate the proposed sufficient bound given in Theorem~\ref{direct_theorem} for \ac{map} decoder in the \textcolor{black}{Gilbert-Elliott} setting considered herein.

\subsection{\textcolor{black}{Extension to random-$K$ with tail probability}}\label{subses:map_randomK}
\textcolor{black}{Here we extend the analysis of the sufficient bound for fixed defectives $K=k$ to the model given in Section~\ref{sec_problem_form}, where the defectives $K$ is a random variable. The analyses in the subsection before are conditional on the realization $K=k$ and yield
a conditional MAP error bound
$\Pr(\widehat{\mathcal K}\neq\mathcal K\mid K=k)$.
By the law of total probability, the overall error probability satisfies
\[
\Pr(\widehat{\mathcal K}\neq\mathcal K)
=
\sum_{k=K_{\min}}^{K_{\max}}
\Pr(K=k)\Pr(\widehat{\mathcal K}\neq\mathcal K\mid K=k)
+
\Pr\!\big(K\notin[K_{\min},K_{\max}]\big).
\]
Consequently,
\[
\Pr(\widehat{\mathcal K}\neq\mathcal K)
\le
\max_{K_{\min}\le k\le K_{\max}}
\Pr(\widehat{\mathcal K}\neq\mathcal K\mid K=k)
+
\Pr\!\big(K\notin[K_{\min},K_{\max}]\big).
\]
Hence, if the condition of Theorem~\ref{direct_theorem} holds uniformly for all
$k\in[K_{\min},K_{\max}]$ and the tail probability
$\Pr(K\notin[K_{\min},K_{\max}])$ is negligible, then reliable recovery is guaranteed.}

%%%%%%%%%%%%%%%%%%%%%%%%%%%%%%%%%%%%%%
\section{Numerical Evaluation}\label{sec_numerical_eval}
%%%%%%%%%%%%%%%%%%%%%%%%%%%%%%%%%%%%%%

This section assesses the performance of the proposed \ac{msgt} algorithm by numerical study. First, in Subsection~\ref{subsec:sim_bouns}, we provide a numerical evaluation to support our theoretical results and bounds given in Section~\ref{sec_main_res}. Then, in Subsection~\ref{subsec:eval}, we contrast the performance of \ac{msgt} with those of \ac{dd}, \ac{ml}, and \ac{map} in a practical regime of $N$ and $K$. \textcolor{black}{
Although Section~\ref{sec_problem_form} adopts a random-$K$ model (with a design interval $(K_{\min},K_{\max})$), the simulations in this section are conducted for a fixed realized value of $K$ for comparison with the existing solutions in the GT literature \cite{atia2012boolean,aldridge2019group,aldridge2014group}.
This is also consistent with the theory, since the analytical results can be applied conditionally on the event $\{K= k\}$ for a typical value $k\in[K_{\min},K_{\max}]$, while the probability of atypical realizations $\{K\notin[K_{\min},K_{\max}]\}$ is controlled separately via exponential tail bounds (Chernoff-type concentration) as discussed in Section~\ref{sec_problem_form} and Section~\ref{subsec:main_results_discussion}.} To generate the correlated prior information between adjacent items, we use the special \textcolor{black}{Gilbert-Elliott} process that reaches the infected state exactly $K$ times after $N$ steps, as described and analyzed in Section \ref{subsec:GE}. The \textcolor{black}{Gilbert-Elliott} model is characterized by initial probabilities assigned to these two states, denoted as $\pi_i \in [0, 1]^2$, as well as transition probabilities between them $\mathbf{\Phi}_i \in [0, 1]^{2\times2}$. These characteristics align well with the inputs required by the \ac{msgt} algorithm. In the practical scenarios tested (e.g., in the regime of COVID-19, when the test machine can simultaneously process a fixed small number of measurements \cite{lucia2020ultrasensitive,ben2020large}, or in sparse signal recovery in signal-processing with fixed vector size of input samples \off{\cite{gilbert2008group,tan2014strong,aksoylar2016sparse,eldar2012theory,plan2013one,li2018survey})}\cite{gilbert2008group,tan2014strong,aksoylar2016sparse,eldar2012theory,li2018survey}), we show that the low computational complexity \ac{msgt} algorithm can reduce the number of \textcolor{black}{pooled tests} by at least $25\%$.

\ifsingle
% option 1 - unified figure for complexity
\begin{figure*}				
	\centering	
    \begin{subfigure}[b]{.4\linewidth}
		\includegraphics[width=\linewidth]{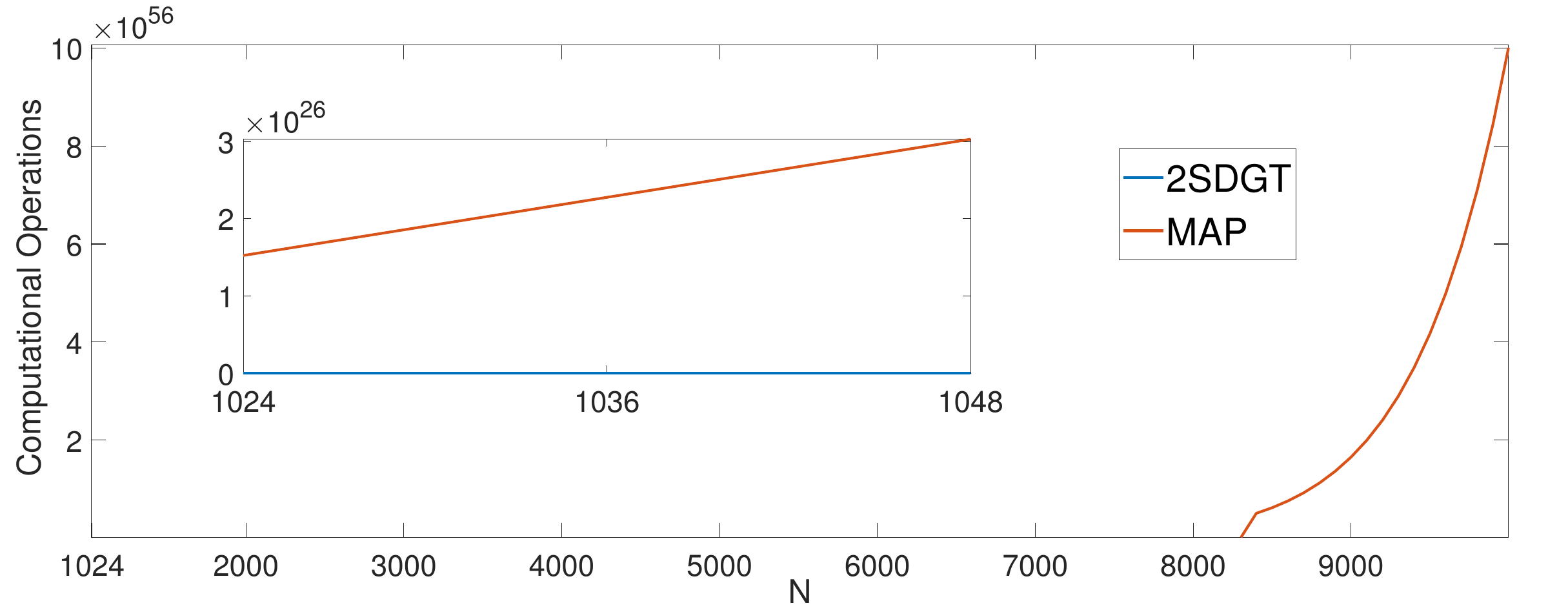}
		\vspace{-0.3cm}
        \caption{\vspace{-0.15cm}Number of computational operations.}
        \label{fig:complexity:number_of_computations_2plots}
        \vspace{0.5cm}
	\end{subfigure}	
    % $\quad$
    \off{\begin{subfigure}[b]{.31\linewidth}
		\includegraphics[width=\linewidth]{phi_example_tau2.jpg}
        \vspace{-0.09cm}
        \caption{Correlation matrix $\mathbf{\Phi}_i$ for $\tau=2$.}
        \label{fig:phi_example}
        \vspace{0.35cm}
	\end{subfigure}}
	\begin{subfigure}[b]{.4\linewidth}
        % \vspace{-5cm}
		\includegraphics[width=\linewidth]{memory_time_step_comparison_1024_20paths_exact1}
        % \caption{Probability of success in terms of exact recovery.}
        \caption{Long memory priors - exact recovery.}
        \label{fig:long_memory_exact}
        \vspace{0.35cm}
	\end{subfigure}
	% $\quad$
        \vspace{-0.4cm}
	\caption{(a) Number of computational operations in \ac{msgt} and \ac{map} as given in Theorem~\ref{thm:complexity} and Remark~\ref{re:complexity}, respectively. (b) Probability of success of \ac{msgt} with exact prior statistics of 3-memory-steps Markov process, and with limited prior statistics assuming the Markov process has only one memory step. $N = 1024, K=8$, 1000 iterations.}
    \vspace{-0.2cm}
	\label{fig:computations_and_complex_models}
\end{figure*}
\else\fi

\subsubsection{Theoretical Analysis}\label{subsec:sim_bouns}
In Fig.~\ref{fig:pd_after_coma_bounds} we show the concentration of $\left|\mathcal{P}^{(S_{1,1})} \right|$, as obtained from the simulation, along with the bound on its expectation and \textcolor{black}{the upper deviation bound} that were calculated in Theorems \ref{thm:avg_coma_bounds} and \ref{thm:upper_bound_PD_coma_bernoulli} respectively. \textcolor{black}{
Under the random-$K$ model, these theoretical bounds are interpreted conditionally on the realized value of $K$ (as done here), while the probability that $K$ falls outside the design interval $(K_{\min},K_{\max})$ is negligible by concentration and is accounted for separately in the overall error analysis.
} Note that the worst-case scenario regarding \ac{msgt} is when \ac{lva} filters the correct set of defective items. That may happen if the number of possibly defective items exceeds the threshold $\gamma K$. Since in \ac{msgt}, we only allow this deviation from the average and ignore the case of exceeding this threshold, our upper bound for the worst-case does not cover all potential realizations of $\left| \mathcal{P}^{(S_{1,1})} \right|$. Similarly, Fig.~\ref{fig:dd_after_dd_bounds} demonstrates the concentration of $\left| \mathcal{P}^{(DD)}\right|$, as acquired through simulation, and the lower bound on its expectation as given in Theorem~\ref{thm:avg_dd_bound}.
Fig.~\ref{fig:gamma_vs_T} illustrates the numerically computed lower bound for $\gamma$, derived from the inequality provided in Proposition~\ref{prop:gamma_cond}. For this simulation, we fix $N=500$, $K=3$, and calculate $\gamma$ value for a specific range of $T$ relative to the upper bound of \ac{ml}. \off{As explained in Section~\ref{sec_main_res}, our conjecture asserts that any value of $\gamma$ surpassing this lower bound guarantees that \ac{msgt} performance will be at least on par with that of \ac{ml}. Therefore, whenever computational resources allow, it is advisable to choose the value of $\gamma$ corresponding to the lower bound. This approach was followed in the subsequent simulations, and the practical outcomes presented in Subsection~\ref{subsec:eval} provide empirical support for our conjecture.}
Fig.\ref{fig:complexity:number_of_computations_2plots} compares the number of potential combinations to be examined in the \ac{map} step, with and without the execution of the \ac{lva} step in \ac{msgt}. It can be observed that the \ac{lva} step performs an extensive filtering process, which allows \ac{msgt} to remain feasible even when executing \ac{map} is no longer possible, especially in a regime below \ac{ml} upper bound.

The converse of the \ac{gt} problem with general prior statistics was developed by Li et al. \cite{li2014group} and according to which any \ac{gt} algorithm with a maximum error probability $P_e$ requires a number of tests that satisfies
\[
\textstyle T\geq \left( 1-P_e\right)H(\mathbf{U}).
\]
%where $H( \cdot)$ denotes entropy.
Using the joint entropy identity we have
\[
\textstyle \left( 1-P_e\right)H(\mathbf{U}) = \sum_{i=1}^NH\left(U_i \lvert U_1,\ldots,U_{i-1}\right).
\]

The \textcolor{black}{Gilbert-Elliott} model considered in our numerical evaluations is a stationary Markov chain with $\tau=1$. Thus,
\[
P\left(U_i \lvert U_0,\ldots,U_{i-1}\right) = P\left( U_i\lvert U_{i-1}\right), \quad \text{for} \quad  i\in\{2,\ldots,N\}.
\]
Substituting those priors, it follows that the converse of our problem is
\begin{equation}\label{eq:ge_lower_bound}
    \textstyle T \geq H(U_1) + \sum_{i=2}^NH\left(U_i\lvert U_{i-1}\right).
\end{equation}
This bound is illustrated in the practical scenarios tested in Fig.~\ref{fig:ConfNew} and Fig.~\ref{fig:ConfNew1}. In Fig~\ref{fig:bounds}, we compare the lower and the upper bounds with \textcolor{black}{Gilbert-Elliott} model. The lower and upper bounds for \ac{gt} with correlated prior statistics are given in \eqref{eq:ge_lower_bound} and in Section~\ref{subsec:GE}, respectively, while for \ac{gt} without prior statistical information, the bounds are as given in \cite{atia2012boolean}.
\begin{figure}
    \centering
    \ifsingle
    \includegraphics[width=0.49\linewidth]{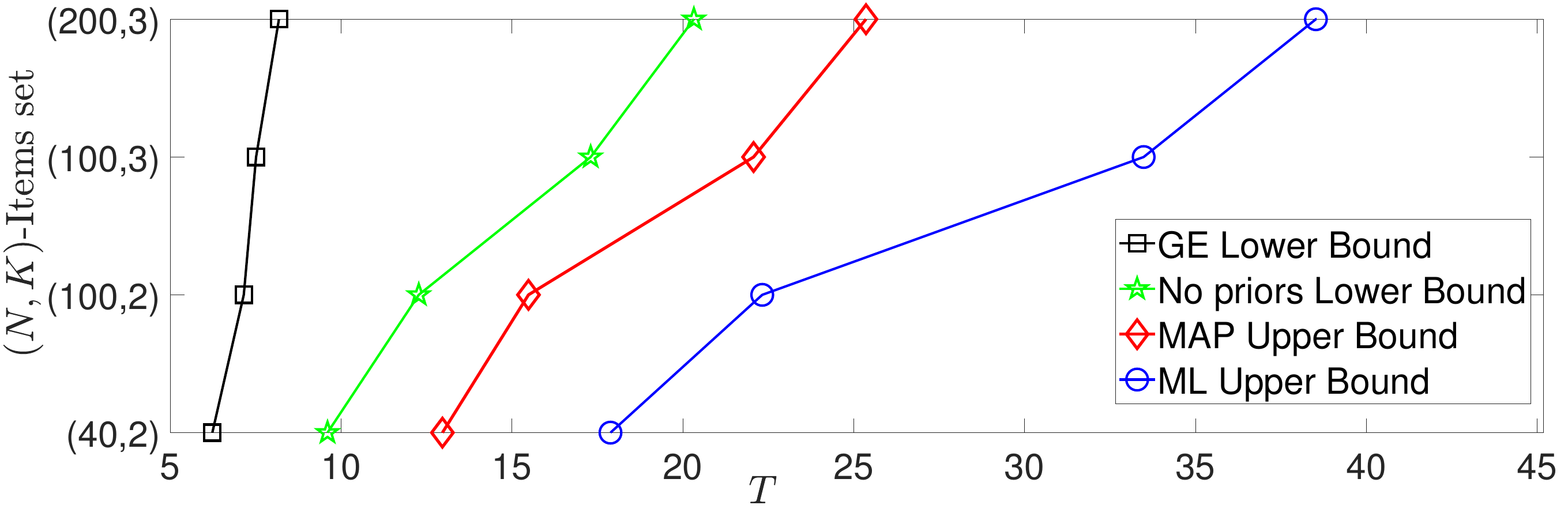}
    \includegraphics[width=0.49\linewidth]{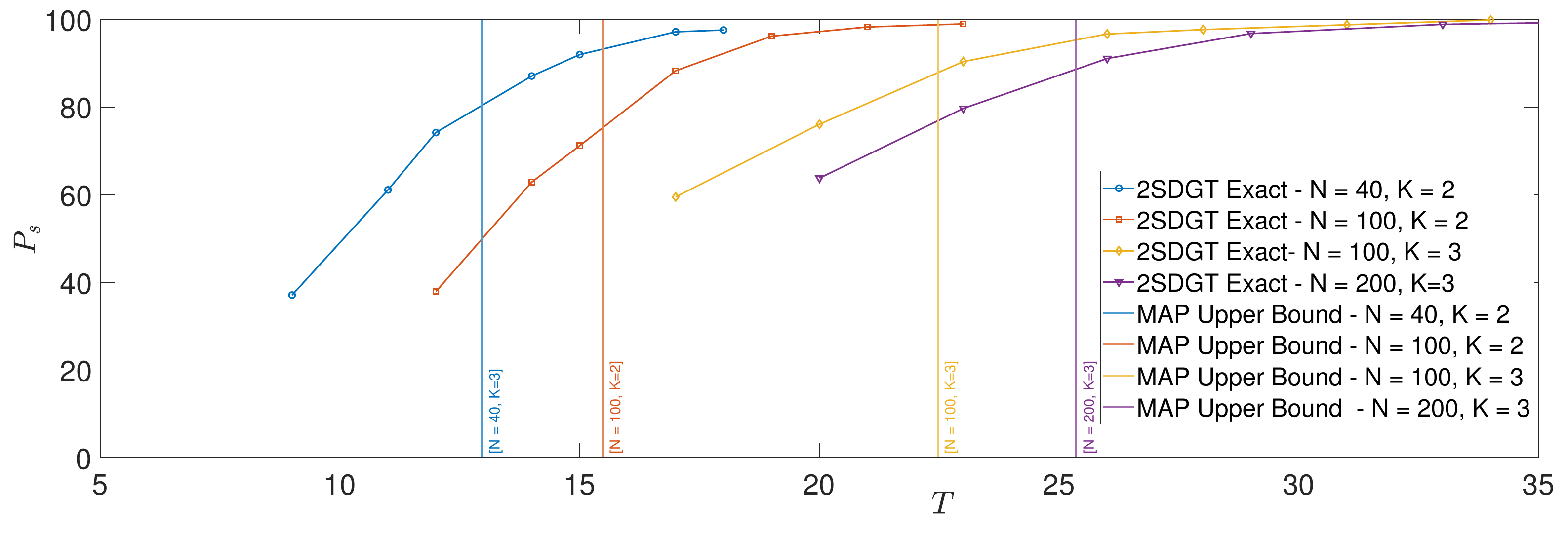}
    \else
    \includegraphics[width=1\linewidth]{bounds_v4}
    \includegraphics[width=1\linewidth]{MAP_v4}
    \fi
    \caption{\small Lower and upper bounds for \ac{gt} with GE model (left). \ac{msgt} and corresponding MAP upper bound for \ac{gt} with GE model (right). For MAP and ML upper bounds $\varepsilon=0.6$.}
    \label{fig:bounds}
    \vspace{-0.4cm}
\end{figure}

\subsubsection{Algorithm Evaluation}\label{subsec:eval}
We demonstrate the performance of \ac{msgt} using simulation. The population is sampled from \textcolor{black}{Gilbert-Elliott} model, and the regime is $\mu=\mathcal{O}\left( N^\theta \right)$ with $\theta\leq 1/3$. The \textcolor{black}{Gilbert-Elliott} parameters serve as our prior statistics, but in practice, we ignore samples where the number of defective items does not match $K$. \textcolor{black}{
Equivalently, the reported success probabilities are conditional on a fixed realized $K$, which matches the way the analytical guarantees are applied on the typical event $K\in[K_{\min},K_{\max}]$ in the random-$K$ formulation.
} In addition, although in the complexity analysis, we considered Bernoulli encoder for simplification, here we use a near-constant column weight encoder that optimizes \ac{dnd}'s performance \cite{johnson2018performance}, with $\frac{\ln{2}}{K}T$ tests sampled randomly for each item.
\textcolor{black}{In all experiments, the parameter $\gamma$ is selected to satisfy a target upper bound $\eta$ on the probability that the number of surviving candidates after Stage~1 exceeds $\gamma K_{\max}$ (see \eqref{eq:gamma_cond}), and $L=500$ was chosen empirically. We fix $\eta = 10^{-2}$, which was found to provide a good tradeoff between computational complexity and recovery performance across all tested regimes. The resulting value of $\gamma$ is computed according to the analytical bounds derived in Section~III-B and is held fixed across Monte Carlo trials.}
%The parameter $\gamma$ was chosen to satisfy \eqref{eq:gamma_cond} and $L=500$ was chosen empirically.
Fig.~\ref{fig:ConfNew1} compares \ac{msgt} to \ac{map}, \ac{ml}\textcolor{black}{, \ac{dd}, and SCOMP (an algorithm that starts with DD and sequentially marks extra items as defective, ensuring the result is a satisfying set \cite{aldridge2014group})} algorithms. We run \ac{dnd} and \ac{dd} before \ac{ml} and \ac{map} for feasible runtime and memory consumption. The population includes $N \in \{500,1024,10000\}$ items and $K \in \{3,8,13\}$ defective items, respectively, and the empirical success probability is the average over $1000$ experiments. Note that for $N=1024,10000$ it is no longer possible to compare the performance since \ac{ml} and \ac{map} become infeasible (Fig.~\ref{fig:benchmark_K8},\ref{fig:benchmark_K13}).

Finally, \ac{msgt} was also tested with more complex probabilistic models. We sample the population based on a 3-memory-steps Markov process ($\mathbf{\Phi}_i$ is a $8\times2$ matrix for all $i$). We execute \ac{msgt} using these prior statistics and also execute it with limited prior statistics, assuming that the process has only one memory step ($\mathbf{\Phi}_i$ is $2\times2$). For example of $\mathbf{\Phi}_i$ representing a 2-memory-steps process, see Fig~\ref{fig:phi_example}. The results are shown in Fig.~\ref{fig:long_memory_exact}. It is evident that utilizing the prior of long memory improves the success probability by $10\%$ in this scenario.
We do note again that for the practical regime tested as in \cite{shental2020efficient,luo2008neighbor,cohen2019serial,jacques2013quantized}, i.e., $N=1024$ and $K=8$, it is no longer possible to compare the performance since, unlike the efficient proposed \ac{msgt} algorithm, \ac{ml} and \ac{map} decoders for \ac{gt} become infeasible.\\

%\ale{Ayelet, after we finish all the other work, I suggest we consider adding at least one simulation result for random K if this does not take too much effort}\\

\vspace{-0.12cm}
\section{Conclusions and Future Work}\label{sec_discussion}
In this work, we propose a \ac{map} decoder for \ac{gt} problems with \textcolor{black}{general correlated priors that admit a trellis representation}. We analyze the performance of the \ac{map} decoders for this case and provide, to the best of our knowledge, the only analysis that goes below $T_{ML}$, as the existing solutions in the literature do not offer an efficient and general solution for this regime.
Our proposed \ac{msgt} can leverage the prior correlated information, within the \ac{lva} step, to significantly reduce the number of potential defective combinations. This approach yields an efficient computational solution that, as demonstrated in our simulation result for practical scenarios, can approach the minimum number of tests as \ac{map} algorithm.

Future work includes the study of the proposed algorithm and analytical bounds in communication systems \cite{slepian1973coding,zhong2006joint,zhong2007joint,campo2011random,campo2012achieving,rezazadeh2019joint,rezazadeh2019error} and signal processing applications \cite{zhang2011sparse,eldar2012theory,7868430}, in which correlation between different frequencies, time signals, or among different sensors, etc., can also be utilized to achieve more precise estimations.

%Future work includes the derivation of an upper bound for this problem, which holds significant importance in comprehending the algorithm's potential.

\section{Acknowledgment}
The authors would like to thank Arezou Rezazadeh for fruitful discussions and suggestions. %\textcolor{black}{We would also thank the anonymous reviewers for the insightful comments which contributed to the final version significantly.}

%%%%%%%%%%%%%%%%%%%%%%%%%%%%%%%%%%%%%%%%
\bibliographystyle{IEEEtran}
\bibliography{ref_short}
%%%%%%%%%%%%%%%%%%%%%%%%%%%%%%%%%%%%%%%%

\ifthenelse{\boolean{full_version}}{

%%%%%%%%%%%%%%%%%%%%%%%%%%%%%%%%%%%%%%%%
\appendices
%%%%%%%%%%%%%%%%%%%%%%%%%%%%%%%%%%%%%%%%

\section{Recovery Algorithms}\label{appendix:recovery_algo}
For the completeness of our proposed Algorithm~\ref{alg:cap}, we present here all the algorithms that we use as integral components within \ac{msgt}, and provide detailed explanations for each.

\subsection{Definitely Not Defective (DND)}
 \ac{dnd} algorithm \cite{kautz1964nonrandom, chen2008survey, chan2011non, aldridge2014group}, which is shown in Algorithm~\ref{alg:cap_coma} takes the testing matrix $\mathbf{X}$ and the results vector $Y$ as inputs. It systematically compares each column of the  $\mathbf{X}$ with $Y^T$. If a specific column contains a value of $1$ while its corresponding test result is $0$,  we say that this column cannot explain the test result. Since each column represents one item in the population, if $\mathbf{X}(i)$ cannot explain $Y^T$, then \ac{dnd} marks this item as definitely non-defective. The output is the set of all such items, denoted by $\mathcal{P}^{(DND)}$.

\begin{algorithm}
\small
\caption{Definite Not Defectives (DND) \cite{kautz1964nonrandom,chen2008survey, chan2011non,aldridge2014group}}\label{alg:cap_coma}
\hspace*{\algorithmicindent} \textbf{Input: $\mathbf{X}, Y$} \\
\hspace*{\algorithmicindent} \textbf{Output: $\mathcal{P}^{(DND)}$}
\begin{algorithmic}[1]
        \State $\mathcal{P}^{(DND)} \gets \emptyset$
        \For{$i$ s.t. $\mathbf{Y}_i=0$}
            $\mathcal{P}^{(DND)} \gets \mathcal{P}^{(DND)}  \cup \{j \lvert X_{i,j}=1\}$
        \EndFor
        \State \Return $\mathcal{P}^{(DND)}$
    % \EndProcedure
\end{algorithmic}
\end{algorithm}
\subsection{Definite Defectives (DD)}
\ac{dd} algorithm \cite{aldridge2014group}, as shown in Algorithm~\ref{alg:cap_dd} takes as inputs $\mathcal{P}^{(S_{1,1})}$, the testing matrix $\mathbf{X}$ and the results vector $Y$. \ac{dd}'s objective is to identify test results that can be explained by a single possibly defective item. In practice, \ac{dd} examines the positive entries of $Y$. For each positive test, if there is only one possibly defective item participating in it, \ac{dd} marks that item as definitely defective. We skip the last step of \ac{dd} in \cite{aldridge2014group}, and in our implementation, \ac{dd} returns only the set of definitely defective items, denoted by $\mathcal{P}^{(DD)}$.

\begin{algorithm}
\small
\caption{Definite Defective (DD) \cite{aldridge2014group}}\label{alg:cap_dd}
\hspace*{\algorithmicindent} \textbf{Input: $\mathbf{X}, Y, \mathcal{P}^{(S_{1,1})}$} \\
\hspace*{\algorithmicindent} \textbf{Output: $\mathcal{P}^{(DD)}$}
\begin{algorithmic}[1]
        \State $\mathcal{P}^{(DD)} \gets \emptyset$
        \For{$i$ s.t. $\mathbf{Y}_i=1$}
            \For{$j \in \mathcal{P}^{(S_{1,1})}$}
                \If{$\mathbf{X}_{i,j}=1$ and $\mathbf{X}_{i,j'}=0,$  $\forall j' \neq j\in \mathcal{P}^{(S_{1,1})}$}
                        \State $\mathcal{P}^{(DD)} \gets  \mathcal{P}^{(DD)} \cup  \left\{j\right\}$
                \EndIf
            \EndFor
        \EndFor
        \State \Return $\mathcal{P}^{(DD)}$
\end{algorithmic}
\end{algorithm}

\subsection{Priors Update}
\textcolor{black}{
The \texttt{updatePriors} procedure incorporates the information obtained in Stage~1 into the prior model.
When Stage~1 determines that an item is certainly defective or non-defective, we treat this as fixed information and update the prior accordingly. If the item lies in the first $\tau$ positions, we modify the initial distribution so that it assigns probability only to configurations consistent with the known value. If the item appears at a later position, we restrict the transition probabilities to this item to the detected state only.
}

\begin{algorithm}
\small
\caption{updatePriors}\label{alg:cap_update_priors}
% \Require Initial distribution $\pi$, transition kernels $\{\Phi_n\}_{n=\tau+1}^N$, index $i$, and observed value $u^\star \in \{0,1\}$ such that $U_i=u^\star$ with probability one.
\hspace*{\algorithmicindent} \textbf{Input: $\mathbf{\pi} ,\{\mathbf{\Phi}_i\}_{i=\tau+1}^N, \mathcal{P}^{(S_{1,2})}, \mathcal{P}^{(DD)}$} \\
\hspace*{\algorithmicindent} \textbf{Output: $\mathbf{\pi} ,\{\mathbf{\Phi}_i\}_{i=\tau+1}^N$}
\begin{algorithmic}[1]

\For{each $i \in \mathcal{P}^{(S_{1,2})} \cup \mathcal{P}^{(DD)}$}
    \State Let $u^\star \gets 0$ if $i \in \mathcal{P}^{(S_{1,2})}$, otherwise $u^\star \gets 1$.
    \If{$i \leq \tau$}
        % \State Update the initial distribution by restricting to configurations with $u_i=u^\star$:
        \textcolor{black}{
        \[
        \pi'(u_1,\ldots,u_\tau) \gets
        \frac{\pi(u_1,\ldots,u_\tau)\,\mathbf{1}\{u_i=u^\star\}}
        {\sum_{(v_1,\ldots,v_\tau)}\pi(v_1,\ldots,v_\tau)\,\mathbf{1}\{v_i=u^\star\}}.
        \]
        }
    \Else
        % \State Replace the $i$-th transition kernel with a deterministic one:
        \textcolor{black}{
        \[
        \Phi'_i(u_i \mid u_{i-1},\ldots,u_{i-\tau}) \gets
        \begin{cases}
        1, & u_i=u^\star,\\
        0, & u_i\neq u^\star,
        \end{cases}
        \]
        }
        % and leave all other kernels unchanged.
    \EndIf
\EndFor
% \State Define the updated joint distribution as
% \[
% P'[U_1=u_1,\ldots,U_N=u_N]
% = \pi'(u_1,\ldots,u_\tau)\prod_{n=\tau+1}^{N}\Phi'_n(u_n \mid u_{n-1},\ldots,u_{n-\tau}).
% \]
\end{algorithmic}
\end{algorithm}

\subsection{List Viterbi Algorithm (LVA)}
Our variations of the \ac{lva} algorithm \cite{seshadri1994list}, outlined in Algorithm~\ref{alg:cap_lva}, is designed to return the $L$ most likely sequences, representing the estimation of the whole population, for a given $L \geq 1$. In the algorithm we suggest herein, the key differences are: (1) the population sequence in \ac{gt} replaces the time sequence in classical \ac{lva} as given in \cite{seshadri1994list},  and (2) we use the aggregated sequence likelihood instead of the general cost function presented in the original paper. In particular, in the suggested algorithm, as we traverse the trellis diagram, we are iteratively maximizing the likelihood of the sequence representing the status of the population, tested with correlated and non-uniform prior statistics, denoted by $\Psi$ (see line~\ref{alg:lva:line:rec1}).

\begin{algorithm}
\small
\caption{List Viterbi Algorithm \cite{seshadri1994list}}\label{alg:cap_lva}
\hspace*{\algorithmicindent} \textbf{Input: $L, \tau, \textcolor{black}{\mathbf{\pi}, \{\mathbf{\Phi}_i \}_{i=\tau +1}^N}$} \\
\hspace*{\algorithmicindent} \textbf{Output: ${\mathbf{Z}}$}
\begin{algorithmic}[1]

\Statex \textbf{\underline{Initialization:}}
\For{$s \gets 1$ to $2^{\tau}$}
    \State $\textcolor{black}{\Psi(\tau,s,1) \gets \pi(s)}$     \Comment{initial probability for each  state} \label{alg:lva:line:ini1}
    \State $\textcolor{black}{\xi(\tau,s,1)} \gets s$  \Comment{previous state of each state}\label{alg:lva:line:ini2}
    \State  $\textcolor{black}{\chi(\tau,s,1) \gets 1}$
    \For{$l \gets 2$ to $L$}
        \State $\Psi(\tau,s,l) \gets 0$
        \State $\textcolor{black}{\xi(\tau,s,l)} \gets s$
        \State \textcolor{black}{$\chi(\tau,s,l) \gets 1$}
    \EndFor
\EndFor

\Statex \textbf{\underline{Recursion:}}
\For{$n \gets \tau+1$ to $N$}
  \For{$s_2 \gets 1$ to $2^{\tau}$}
    \For{$l \gets 1$ to $L$}
      \State \[
      \Psi(n,s_2,l) \;=\;
      \max\nolimits^{(l)}_{\substack{s_1\in\{1,\dots,2^\tau\}\\ k\in\{1,\dots,L\}}}
      \Big\{\Psi(n-1,s_1,k)\;\Phi_n(s_1,s_2)\Big\}
      \]\label{alg:lva:line:rec1}
      \[
      (\,s_1^\ast,\;k^\ast\,) \;=\;
      \arg\max\nolimits^{(l)}_{\substack{s_1\in\{1,\dots,2^\tau\}\\ k\in\{1,\dots,L\}}}
      \Big\{\Psi(n-1,s_1,k)\;\Phi_n(s_1,s_2)\Big\}
      \]
      \State $\xi(n,s_2,l)\gets s_1^\ast$ \label{alg:lva:line:rec2}
      \State $\chi(n,s_2,l)\gets k^\ast$ \label{alg:lva:line:rec3}
    \EndFor
  \EndFor
\EndFor

\Statex \textbf{\underline{Backtracking:}}
\For{$l \gets 1$ to $L$}
  \State $Z_{l,N} \gets \arg\max_{s} \Psi(N,s,l)$ \label{alg:lva:line:back1}
  \State $l_N \gets \textcolor{black}{l}$
  \For{$n \gets N-1$ to \textcolor{black}{$\tau$}}
    \State $Z_{l,n} \gets \xi\big(n+1,\;Z_{l,n+1},\;l_{n+1}\big)$ \label{alg:lva:line:back2}
    \State $l_n \gets \chi\big(n+1,\;Z_{l,n+1},\;l_{n+1}\big)$
  \EndFor
\EndFor
\State \Return $\mathbf{Z}$
\end{algorithmic}
\end{algorithm}

This algorithm operates in three primary steps: first initialization of the setup, then recursion using a trellis structure to compute probabilities for all possible sequences while eliminating unlikely ones in each step, and finally, backtracking to reconstruct the most probable sequences.
The algorithm inputs are the number of the most likely list $L$, the number of memory steps to consider, $\tau$, and the prior statistics $\left\{\mathbf{\pi}_i\right\}_{i=1}^N, \{\mathbf{\Phi}_i \}_{i=1}^N$.

Let $\Psi\in [0,1]^{N\times 2^\tau \times L}$ denote the probabilities of the $L$ most likely states along the trellis. Let $\xi \in \{1,...,2^{\tau}\}^{N\times 2^\tau \times L}$ denote the corresponding previous state of each state along the trellis, and let $\chi \in \{1,...,L\}^{N\times 2^\tau \times L}$ denote the corresponding rank of the current state, among the $L$ most likely options.
In the initialization stage, we fill the given initial probabilities $\left\{\mathbf{\pi}_i\right\}_{i=1}^N$  in the corresponding entries of $\Psi$, and we set each state to be its previous state (lines~\ref{alg:lva:line:ini1}-\ref{alg:lva:line:ini2}).

In the recursion stage, we iterate from the second item to the last, and for each item, compute all the $2^{\tau} \times 2^{\tau}$ possible transition probabilities between all the possible states he could be in and the states of its predecessor. For each item, state, and rank, we set in $\Psi$ the $L$-most likely probabilities of the overall sequences from the first item until the current item and state (line~\ref{alg:lva:line:rec1}), and we set in $\xi$ the corresponding previous state and in $\chi$ the corresponding rank (lines~\ref{alg:lva:line:rec2}-\ref{alg:lva:line:rec3}). Here, $\max^{(l)}$ denotes the $l$-largest value.

In the backtracking stage, we identify the $L$-most likely sequences based on the probabilities in the entries of the $\Psi$ that correspond to the last item (line~\ref{alg:lva:line:back1}), and then backtrack the steps of these sequences using the information in $\xi$ (line~\ref{alg:lva:line:back2}). The algorithm returns a list of these sequences, denoted by
$\mathbf{Z} \in \left\{1,...,2^{\tau}\right\}^{L \times N}$.
If $\tau>1$, and a further processing is applied to map the states $\{1,...,2^{\tau}\}$ to the binary states $\{ 0,1\}$, representing ``defective'' and ``non-defective''.

\subsection{\textcolor{black}{Candidate Enumeration via \texttt{getAllCombinations}}}

\textcolor{black}{In Stage~2 of \ac{msgt} (Algorithm~\ref{alg:cap}), after obtaining a list of candidate population trajectories from the \ac{lva}, we must extract candidate defective sets to be evaluated by the \ac{map} estimator. This task is performed by the procedure \texttt{getAllCombinations}, as shown in Algorithm~\ref{alg:getAllCombinations}. Given a set of indices $\mathbf{V}^{(l)} \subseteq \mathcal{N}$ corresponding to items estimated as defective in the $l$-th \ac{lva} trajectory, the goal of \texttt{getAllCombinations} is to enumerate all subsets whose cardinality lies within a prescribed range.
This enables \ac{msgt} to handle uncertainty in the true number of defectives while keeping the candidate list size computationally feasible.}

\textcolor{black}{
Under the random-$K$ model in Section~\ref{sec_problem_form}, the algorithm does not assume knowledge of the realized $K$.
Instead, Stage~2 operates with a design interval $(K_{\min},K_{\max})$, and \texttt{getAllCombinations} enumerates all subsets whose cardinality satisfies $ K_{\min} \;\le\; |c| \;\le\; K_{\max}, c \subseteq \mathbf{V}^{(l)}$. This modification ensures robustness to fluctuations in the realized number of defectives while preserving correctness on the typical event
$\{K\in[K_{\min},K_{\max}]\}$. We note that the procedure is purely combinatorial and does not rely on probabilistic assumptions and that the upper bound $\textcolor{black}{\lfloor} \gamma K_{\max}\textcolor{black}{\rfloor} $ ensures that the total number of candidate sets remains manageable, as analyzed in Section~\ref{sec_main_res}. In practice, if $|\mathbf{V}^{(l)}| < K_{\min}$, the procedure returns the empty set, and the corresponding trajectory is discarded.}

\begin{algorithm}
\small
\caption{\texttt{getAllCombinations}}\label{alg:getAllCombinations}
\hspace*{\algorithmicindent} \textbf{Input: $\mathbf{V}$, $K_{\min}$, $K_{\max}$} \\
\hspace*{\algorithmicindent} \textbf{Output: $\mathcal{C}$}
\begin{algorithmic}[1]
    \State $\mathcal{C} \gets \emptyset$
    \For{$k = K_{\min}$ to $K_{\max}$}
        \For{each subset $c \subseteq \mathbf{V}$ such that $|c| = k$}
            \State $\mathcal{C} \gets \mathcal{C} \cup \{c\}$
        \EndFor
    \EndFor
    \State \Return $\mathcal{C}$
\end{algorithmic}
\end{algorithm}

\subsection{Maximum A Posteriori (MAP)}
The \ac{map} estimator \cite{gallager1968information}, as shown in Algorithm~\ref{alg:cap_map}, returns the set $c\in \mathbf{C}$ with the highest maximum a posteriori probability among all the sets, that is, the set $c$ that obtains the maximum $P\left( Y \lvert \mathbf{X},c\right) P\left( c\right)$. In this expression, the first probability represents the likelihood of obtaining the results $Y$ from a group test using the testing matrix $\mathbf{X}$ and the given set $c$ as the defective set. If the set $c$ and the given testing matrix $\mathbf{X}$ cannot explain for $Y$, then this probability is equal to zero. The second probability corresponds to the prior probability of the overall defective set $c$, calculated using $\textcolor{black}{\mathbf{\pi}, \{\mathbf{\Phi}_i \}_{i=\tau+1}^N}$. The \ac{map} algorithm returns the estimated defective set, denoted by $\hat{\mathcal{K}}$.
\begin{algorithm}
\small
\caption{Maximum A Posteriori (MAP) \cite{gallager1968information}}\label{alg:cap_map}
\hspace*{\algorithmicindent} \textbf{Input: $\mathbf{X}, Y, \mathbf{C}, \textcolor{black}{\mathbf{\pi}, \{\mathbf{\Phi}_i \}_{i=\tau+1}^N}$} \\
\hspace*{\algorithmicindent} \textbf{Output: $\hat{\mathcal{K}}$}
\begin{algorithmic}[1]
        \State \Return $\arg\max_{c\in \mathbf{C}}{ P\left( Y \lvert \mathbf{X},c\right) P\left( c\right) }$
\end{algorithmic}
\end{algorithm}

}{} %full_version

\section{Proof of Theorem~\ref{thm:avg_dd_bound}}\label{sec:proof:avg_dd_bound}

\textcolor{black}{Let $M_0$ denote the number of negative tests, and let $G$ denote the number of non-defective items that never appear in any negative test.}
\textcolor{black}{
We follow the $(M_0,G)$ conditioning framework for \ac{dd} given in \cite[Appendix A.C]{aldridge2014group}.
Throughout the proof, we condition on a fixed realization of the number of defectives \textcolor{black}{$k$}; the Bernoulli design parameter is $p=\ln(2)/K_{\max}$, but the argument below holds for any $p\in(0,1)$.
}
\textcolor{black}{
For a fixed defective item $i$, let $L_i$ be the number of positive tests that: (i) contain item $i$, and (ii) contain no other item from $\mathcal{P}^{(S_{1,1})}$.}

\textcolor{black}{
We condition on $(M_0,G)=(m_0,g)$ and analyze a fixed defective item $i$.
Let $\mathsf{Pos}$ denote the event that a given test is positive, and let
$\mathsf{SoloDef}(i)$ denote the event that the test contains item $i$ and no
other defective items. Under a Bernoulli$(p)$ design with \textcolor{black}{$k$} defectives, $q_0 \triangleq \Pr(\text{test negative})=(1-p)^{\textcolor{black}{k}}$ denotes the probability that a test is negative, and $q_1 \triangleq \Pr(\mathsf{SoloDef}(i))=p(1-p)^{\textcolor{black}{k}-1}$ denotes the probability that a test contains defective $i$ and no other defectives, so that
\[
a \triangleq \Pr(\mathsf{SoloDef}(i)\mid \mathsf{Pos})
=\frac{\Pr(\mathsf{SoloDef}(i))}{\Pr(\mathsf{Pos})}
=\frac{q_1}{1-q_0},
\]
is the effective probability that a test includes item $i$ and no other defective items, given that the test is positive.
Given $G=g$, the probability that none of the $g$ surviving non-defective
candidates is included in a test is $(1-p)^g$. Hence, for each positive test,
the probability that it contains item $i$ and no other item from the Stage~1
candidate set equals $a(1-p)^g$. Since there are $T-m_0$ positive tests and the Bernoulli design is independent across tests, it follows that
\[
L_i \,\big|\, (M_0=m_0,G=g)\sim \mathrm{Bin}\!\big(T-m_0,\ a(1-p)^g\big).
\]
}
\textcolor{black}{Thus, we have}
\begin{equation*}
\textcolor{black}{P(L_i>0\mid M_0=m_0,G=g)
= 1 - P(L_i=0\mid M_0=m_0,G=g).}
\end{equation*}
\textcolor{black}{For a binomial random variable, $P(L_i=0)=(1-\theta)^{n}$ with $n=T-m_0$ and $\theta=a(1-p)^g$, as detailed in Appendix~\ref{P_eq:Li_success_cond}, we obtain
\begin{equation}
P(L_i>0\mid M_0=m_0,G=g)
\;=\;
1-\bigl(1-a(1-p)^g\bigr)^{T-m_0}.
\label{eq:Li_success_cond}
\end{equation}}
\textcolor{black}{
By symmetry of the items, the expected number of items identified by the \ac{dd} step is
\begin{equation}\label{eq:sym_Li}
\mathbb{E}\big[\,|\mathcal{P}^{(\mathrm{DD})}|\,\big|\,K=\textcolor{black}{k}\big]
\;=\;
\textcolor{black}{k}\,P(L_1>0).
\end{equation}
We now apply Lemma~\ref{lem:binomial_bound}, which states that for any $u \in [0,1]$ and integer $n \ge 1$
\[
1-(1-u)^n \ge nu-\frac{n(n-1)}{2} u^2.
\]
Applying this with $u = a(1-p)^g$ and $n = T - m_0$ to~\eqref{eq:Li_success_cond} yields
\begin{align}
P(L_1>0\mid M_0=m_0,G=g)
&\ge (T-m_0)\,a(1-p)^g
    - \frac{(T-m_0)(T-m_0-1)}{2}\,a^2(1-p)^{2g}.
\label{eq:cond_lower_bound}
\end{align}
\textcolor{black}{
Taking expectation of both sides of \eqref{eq:cond_lower_bound} with respect to $(M_0,G)$, substituting into~\eqref{eq:sym_Li}, and by linearity, we obtain
\begin{align}\label{eq:E_PDD_intermediate}
\mathbb{E}\big[\,|\mathcal{P}^{(\mathrm{DD})}|\,\big|\,K=\textcolor{black}{k}\big]
&= \textcolor{black}{k}\,\mathbb{E}\big[P(L_1>0\mid M_0,G)\big] \nonumber\\
&\ge \textcolor{black}{k}\left\{
\,\mathbb{E}\big[(T-M_0)a(1-p)^G\big]
-\,\mathbb{E}\left[\frac{(T-M_0)(T-M_0-1)}{2}\,a^2(1-p)^{2G}\right]
\right\} \nonumber\\
& = \textcolor{black}{k}\left\{
a\,\mathbb{E}\big[(T-M_0)(1-p)^G\big]
-\frac{a^2}{2}\,\mathbb{E}\big[(T-M_0)(T-M_0-1)(1-p)^{2G}\big]
\right\}.
\end{align}
}
Conditioned on $M_0 = m_0$, each non-defective item appears in none of the $m_0$ negative tests with probability $(1-p)^{m_0}$. Hence, the number $G$ of such non-defectives follows
\[
G \mid M_0 = m_0 \;\sim\; \mathrm{Bin}(N-\textcolor{black}{k},\, (1-p)^{m_0}).
\]
Let $\rho = (1-p)^{m_0}$ denote this probability. Then, the probability generating function of a binomial variable implies
\begin{equation}\label{eq:m0_1}
\mathbb{E}\big[(1-p)^G \mid M_0 = m_0 \big]
= \mathbb{E}\big[(1-p)^{G}\big]
= \left(1 - p \cdot \rho \right)^{N-\textcolor{black}{k}}
= \left(1 - p(1-p)^{m_0}\right)^{N-\textcolor{black}{k}}.
\end{equation}
Similarly, using the identity $(1-p)^{2G} = \left((1-p)^2\right)^G$, we obtain
\begin{equation}\label{eq:m0_2}
\mathbb{E}\big[(1-p)^{2G} \mid M_0 = m_0\big]
= \left(1 - \big[1 - (1-p)^2\big] \cdot \rho \right)^{N-\textcolor{black}{k}}
= \left(1 - (2p - p^2)(1-p)^{m_0}\right)^{N-\textcolor{black}{k}}.
\end{equation}
Taking expectations over $M_0$ (using \eqref{eq:m0_1} and \eqref{eq:m0_2}), we obtain
\[
\mathbb{E}\big[(T - M_0)(1-p)^G\big]
= \mathbb{E}\left[(T - M_0)\left(1 - p(1 - p)^{M_0}\right)^{N-\textcolor{black}{k}}\right],
\]
and
\[
\mathbb{E}\big[(T - M_0)(T - M_0 - 1)(1-p)^{2G}\big]
= \mathbb{E}\left[(T - M_0)(T - M_0 - 1)\left(1 - (2p - p^2)(1 - p)^{M_0}\right)^{N-\textcolor{black}{k}}\right].
\]
}
\textcolor{black}{
% \medskip\noindent
To simplify these expressions, we now apply the following bounds. First, for all $M_0 \ge 0$, since $(1-p)^{M_0} \in [0,1]$ and $p > 0$, we have
\[
1 - p(1-p)^{M_0} \ge 1 - p.
\]
Then, since $x \mapsto x^{N-\textcolor{black}{k}}$ is monotonically increasing on $[0,1]$, it follows that
\[
\left(1 - p(1-p)^{M_0}\right)^{N-\textcolor{black}{k}} \ge (1 - p)^{N-\textcolor{black}{k}}.
\]
Therefore,
\begin{align}
\mathbb{E}\big[(T - M_0)(1-p)^G\big]
&= \mathbb{E}\left[(T - M_0)\left(1 - p(1 - p)^{M_0}\right)^{N-\textcolor{black}{k}}\right] \notag\\
&\ge \mathbb{E}[T - M_0] \cdot (1 - p)^{N-\textcolor{black}{k}}
= T(1 - q_0)(1 - p)^{N-\textcolor{black}{k}},
\label{eq:term1_bound_expanded}
\end{align}
For the second term, we simply upper-bound everything, such that,
$(T - M_0)(T - M_0 - 1) \le T^2$ and $(1 - p)^{2G} \le 1$, so
\begin{equation}
\mathbb{E}\big[(T - M_0)(T - M_0 - 1)(1 - p)^{2G}\big] \le T^2.
\label{eq:term2_bound_expanded}
\end{equation}
Substituting~\eqref{eq:term1_bound_expanded} and~\eqref{eq:term2_bound_expanded} into~\eqref{eq:E_PDD_intermediate}, we conclude the proof.
}

\subsection{Proof of \eqref{eq:Li_success_cond}}\label{P_eq:Li_success_cond}
\textcolor{black}{
Set $n \triangleq T-m_0$ and $\theta \triangleq a(1-p)^g$.
Conditioned on $(M_0,G)=(m_0,g)$, there are exactly $n$ positive tests, and in each positive test, the event ``test witnesses defective item $i$'' occurs with
probability $\theta$, independently across tests. Therefore,
\[
L_i \,\big|\, (M_0=m_0,G=g)\sim \mathrm{Bin}(n,\theta).
\]
}
\textcolor{black}{
By definition, the event $\{L_i=0\}$ means that \emph{none} of the $n$ positive
tests is a witness test for item $i$. Since these $n$ witness events are
independent and each fails with probability $1-\theta$, we obtain
\begin{align*}
&P(L_i=0\mid M_0=m_0,G=g)\\
& \overset{(a)}{=} P(\text{no witness in test 1},\ldots,\text{no witness in test $n$}\mid M_0=m_0,G=g)\\
& \overset{(b)}{=} \prod_{t=1}^{n} P(\text{no witness in test $t$}\mid M_0=m_0,G=g)\\
& \overset{(c)}{=} \prod_{t=1}^{n} (1-\theta)\\
&= (1-\theta)^n\\
&= \bigl(1-a(1-p)^g\bigr)^{T-m_0},
\end{align*}
where (a) follows since this is the definition of $L_i=0$, (b) due to the independence across tests, and (c) as each test is a witness w.p.\ $\theta$.}
\textcolor{black}{
Finally, we have
\[
P(L_i>0\mid M_0=m_0,G=g)
=1-P(L_i=0\mid M_0=m_0,G=g)
=1-\bigl(1-a(1-p)^g\bigr)^{T-m_0},
\]
which yields \eqref{eq:Li_success_cond}.
}

\subsection{Binomial Bound}

\begin{lemma}\label{lem:binomial_bound}
Let $u \in [0,1]$ and $n \in \mathbb{N}$. Then
\[
1-(1-u)^n \;\ge\; nu - \frac{n(n-1)}{2}u^2.
\]
\end{lemma}

\begin{proof}
The function $f(u) = (1 - u)^n$ is analytic on $[0,1]$, and has a Taylor expansion
around $u=0$
\begin{equation}\label{eq:BB_eq}
(1-u)^n = 1-nu + \frac{n(n-1)}{2}u^2 + R_3(u),
\end{equation}
\textcolor{black}{
where $R_3(u)$ denotes the third-order remainder in the Taylor expansion.
For $n\ge 3$, the third derivative of $f$ is
\[
f^{(3)}(u) = -n(n-1)(n-2)(1-u)^{n-3} \le 0
\quad \text{for all } u\in[0,1].
\]
By the Lagrange form of the remainder, there exists $\xi\in(0,u)$ such that
\begin{equation}\label{eq:BB_R3}
R_3(u)=\frac{f^{(3)}(\xi)}{3!}\,u^3 \le 0.
\end{equation}
For $n=1,2$, the expansion is exact and $R_3(u)\equiv 0$.
}
\textcolor{black}{
Now, we multiply the identity in \eqref{eq:BB_eq} by $-1$ and reorder terms to obtain
\[
1-(1-u)^n
=
nu-\frac{n(n-1)}{2}u^2-R_3(u).
\]
Since $R_3(u)\le 0$ for all $u\in[0,1]$ as given in \eqref{eq:BB_R3}, we have $-R_3(u)\ge 0$, and therefore
\[
1-(1-u)^n
\ge
nu-\frac{n(n-1)}{2}u^2,
\]
}
which completes the proof.
\end{proof}

\section{Probability of Error (Proof of Lemma~\ref{lemma:Ei_prob_ub})}\label{appendix:Ei_prob_ub}

Let $i\in\sbrace{K},j\in\allnkchoosei$. Let $\calItildejdiff$ denote the set of all indices $\w$ that result in a the set of \ac{fn} of size $i$ as in $j$, i.e.
\begin{equation}
    \calItildejdiff = \cbrace{\w\in\allnchoosek :\quad \fnfull{\w}{\wstar}=\locglob{i}{j}}.
\end{equation}
Recall $\wstar$ denotes the index of the set of the $K$ actual infected items. To visualize this definition, consider the example in Section~\ref{sub:example}. For $\swstar=\cbrace{1,2,3}$, let $j$ be the index such that $\locglob{2}{j}=\cbrace{2,3}$. In this case, $\calItildejdiff$ has all indices that point to $\cbrace{1,4,5},\cbrace{1,4,6},\cbrace{1,4,7},\cbrace{1,5,6},\cbrace{1,5,7},\cbrace{1,6,7}$.

Let $\zeta_{\wprime},\wprime \in\calItildejdiff$ denote the event that $\swprime$ is more likely than $\swstar$ (see Eq.~\eqref{eq:map_c}).
We use the Union Bound to bound $\pr{\eij\lvert W=\wstar}$, such that %\amit{I think the first inequality can actually be an equality}
\begin{align}\label{eq:i_err_cond_prob}
\pr{\eij \lvert W=\wstar} &\leq \pr{\bigcup_{\wprime \in \calItildejdiff} \errevent} \leq \sum_{\wprime \in \calItildejdiff} \pr{\zeta_{\wprime}}\nonumber\\
\end{align}

For \ac{map} decoder the probability of $\zeta_{\wprime}$ is given by
\ifsingle
\begin{small}
\begin{eqnarray}
%\begin{split}
&&\pr{\errevent} = \sum _{\xswprime: \pr{\xswprime \lvert \yt } \geq \pr{\xswstar \lvert \yt}} \pr{\mathbf{X}_{S_{\wprime}}\lvert \xswstar} \nonumber\\
&&\overset{(a)}{\leq} \sum_{\xfpwprime} \pr{\xfpwprime} \frac{ \pr{\yt\lvert \xswprime}^s \pw{\swprime}^s}{\pr{\yt\lvert \xswstar}^s \pw{\swstar}^s}\nonumber\\
&&= \sum_{\xfpwprime}
\pr{\xfpwprime}\cdot\frac{
\pr{\yt, \xtpwprime \lvert \xfpwprime}^{s} \pr{\xtpwprime\lvert \xfnwprime}^{s} \pw{\wprime}^{s}
}{
\pr{\xtpwprime\lvert \xfpwprime}^{s}  \pr{\yt, \xtpwprime\lvert \xfnwprime}^{s} \pw{\wstar}^{s}
}\nonumber\\
&&\overset{(b)}{\leq} \sum_{\xfpwprime}
\pr{\xfpwprime} \frac{
\pr{\yt, \xtpwprime\lvert \xfpwprime}^{s}\pw{\wprime}^{s}
}{
  \pr{\yt, \xtpwprime\lvert \xfnwprime}^{s} \pw{\wstar}^{s}
}, \hspace{0.2cm} \forall s>0 .\nonumber\\\label{eq:zeta_prob}
%\end{split}
\end{eqnarray}
\end{small}
\else
\begin{small}
\begin{eqnarray}
%\begin{split}
&&\pr{\errevent} = \sum _{\xswprime: \pr{\xswprime \lvert \yt } \geq \pr{\xswstar \lvert \yt}} \pr{\mathbf{X}_{S_{\wprime}}\lvert \xswstar} \nonumber\\
&&\overset{(a)}{\leq} \sum_{\xfpwprime} \pr{\xfpwprime} \frac{ \pr{\yt\lvert \xswprime}^s \pw{\swprime}^s}{\pr{\yt\lvert \xswstar}^s \pw{\swstar}^s}\nonumber\\
&&= \sum_{\xfpwprime}
\pr{\xfpwprime}\cdot\nonumber\\
&&\hspace{0.5cm}\frac{
\pr{\yt, \xtpwprime \lvert \xfpwprime}^{s} \pr{\xtpwprime\lvert \xfnwprime}^{s} \pw{\wprime}^{s}
}{
\pr{\xtpwprime\lvert \xfpwprime}^{s}  \pr{\yt, \xtpwprime\lvert \xfnwprime}^{s} \pw{\wstar}^{s}
}\nonumber\\
&&\overset{(b)}{\leq} \sum_{\xfpwprime}
\pr{\xfpwprime} \frac{
\pr{\yt, \xtpwprime\lvert \xfpwprime}^{s}\pw{\wprime}^{s}
}{
  \pr{\yt, \xtpwprime\lvert \xfnwprime}^{s} \pw{\wstar}^{s}
}, \hspace{0.2cm} \forall s>0 .\nonumber\\\label{eq:zeta_prob}
%\end{split}
\end{eqnarray}
\end{small}
\fi
\hspace{-0.1cm}$(a)$ follows that given $\xswstar$, $P(\xtpwprime)$ is known and then only then probability of the uncommon codewords left: $\pr{\xswprime\lvert \xswstar}=\pr{\xfpwprime}$ and from the fact that $d\triangleq \frac{\pr{\yt\lvert \xswprime}}{\pr{\yt\lvert \xswstar}}\geq 1$, so we upperbound the whole expression by multiplying each term in the sum in $d^s\geq1$ and sum over more test matrices. $(b)$ is due to the fact that $\xtpwprime$ and $\xfnwprime$ are independent, because they contain two distinct sets of codewords, as are $\xtpwprime$ and $\xfpwprime$. Therefore,
\[
\textstyle P\left(\xtpwprime\lvert \xfpwprime\right) = P\left(\xtpwprime\lvert \xfnwprime\right) = P\left(\xtpwprime\right).
\]%\amit{I don't fully understand this proof, but I am pretty convinced that if it is true at Atia, it is true in our case as well.}

By substituting Eq. \eqref{eq:zeta_prob} in Eq.  \eqref{eq:i_err_cond_prob}, it follows that
\ifsingle
\begin{small}
\begin{eqnarray} \label{eq:i_err_cond_prob_deriv}
%\begin{split}
\hspace{-0.3cm}&&\hspace{-0.3cm}P \paren{\eij \lvert W=\wstar, \xswstar, \yt}  \nonumber\\
\hspace{-0.3cm}&&\hspace{-0.3cm}\leq \sum_{\wprime \in \calItildejdiff}  \sum_{\xfpwprime}
P\left(\xfpwprime\right) \frac{
\pr{\yt, \xtpwprime\lvert \xfpwprime}^s  \pw{\wprime}^s
}{
  P\left( \yt, \xtpwprime\lvert \xfnwprime\right)^s \pw{\wstar}^s
} \nonumber\\
\hspace{-0.3cm}&&\hspace{-0.3cm}= \frac{\sum_{\wprime \in \calItildejdiff} \pw{\wprime}^s}{ \pw{\wstar}^s}
 \cdot \frac{
\sum_{\xfpwprime}  P\left(\xfpwprime\right) P\left( \yt, \xtpwprime\lvert \xfpwprime\right)^s
}{
  P\left( \yt, \xtpwprime\lvert \xfnwprime\right)^s
} \nonumber\\
\hspace{-0.3cm}&&\hspace{-0.3cm}\overset{(a)}{\leq} \left[\frac{\left(\sum_{\wprime \in \calItildejdiff}  \pw{\wprime}^s \right) }{\pw{\wstar}^s} \cdot \frac{
\left( \sum_{\xfpwprime}  P\left(\xfpwprime\right) P\left( \yt, \xtpwprime\lvert \xfpwprime\right)^s
\right)
}{
  P\left( \yt, \xtpwprime\lvert \xfnwprime\right)^s
}\right]^\rho \nonumber\\
\hspace{-0.3cm}&&\hspace{-0.3cm}
= \frac{\left(\sum_{\wprime \in \calItildejdiff}   \pw{\wprime}^s \right) ^\rho}{\pw{\wstar}^{s\rho}} \cdot \frac{
\left( \sum_{\xfpwprime}  P\left(\xfpwprime\right) P\left( \yt, \xtpwprime\lvert \xfpwprime\right)^s
\right)^\rho
}{
  P\left( \yt, \xtpwprime\lvert \xfnwprime\right)^{s\rho}
},\nonumber\\
\hspace{-0.3cm}&&\hspace{-0.3cm}\qquad \forall s>0, \rho \in \left[0,1\right]
%\end{split}
\end{eqnarray}
\end{small}
\else
\begin{small}
\begin{eqnarray} \label{eq:i_err_cond_prob_deriv}
%\begin{split}
\hspace{-0.3cm}&&\hspace{-0.3cm}P \paren{\eij \lvert W=\wstar, \xswstar, \yt}  \nonumber\\
\hspace{-0.3cm}&&\hspace{-0.3cm}\leq \sum_{\wprime \in \calItildejdiff}  \sum_{\xfpwprime}
P\left(\xfpwprime\right) \frac{
\pr{\yt, \xtpwprime\lvert \xfpwprime}^s  \pw{\wprime}^s
}{
  P\left( \yt, \xtpwprime\lvert \xfnwprime\right)^s \pw{\wstar}^s
} \nonumber\\
\hspace{-0.3cm}&&\hspace{-0.3cm}= \frac{\sum_{\wprime \in \calItildejdiff} \pw{\wprime}^s}{ \pw{\wstar}^s}
 \cdot \nonumber\\
\hspace{-0.3cm}&&\hspace{-0.3cm}\hspace{2.5cm}\frac{
\sum_{\xfpwprime}  P\left(\xfpwprime\right) P\left( \yt, \xtpwprime\lvert \xfpwprime\right)^s
}{
  P\left( \yt, \xtpwprime\lvert \xfnwprime\right)^s
} \nonumber\\
\hspace{-0.3cm}&&\hspace{-0.3cm}\overset{(a)}{\leq} \Biggl[\frac{\left(\sum_{\wprime \in \calItildejdiff}  \pw{\wprime}^s \right) }{\pw{\wstar}^s} \cdot\nonumber\\
\hspace{-0.3cm}&&\hspace{-0.3cm}\hspace{2.5cm}\frac{
\left( \sum_{\xfpwprime}  P\left(\xfpwprime\right) P\left( \yt, \xtpwprime\lvert \xfpwprime\right)^s
\right)
}{
  P\left( \yt, \xtpwprime\lvert \xfnwprime\right)^s
}\Biggr]^\rho \nonumber\\
\hspace{-0.3cm}&&\hspace{-0.3cm}
= \frac{\left(\sum_{\wprime \in \calItildejdiff}   \pw{\wprime}^s \right) ^\rho}{\pw{\wstar}^{s\rho}} \cdot \nonumber\\
\hspace{-0.3cm}&&\hspace{-0.3cm}\hspace{2.5cm}\frac{
\left( \sum_{\xfpwprime}  P\left(\xfpwprime\right) P\left( \yt, \xtpwprime\lvert \xfpwprime\right)^s
\right)^\rho
}{
  P\left( \yt, \xtpwprime\lvert \xfnwprime\right)^{s\rho}
},\nonumber\\
\hspace{-0.3cm}&&\hspace{-0.3cm}\qquad \forall s>0, \rho \in \left[0,1\right]
%\end{split}
\end{eqnarray}
\end{small}
\fi
where $(a)$ follows from the fact that the probability $P\left(\ei\lvert w^{*}, \xswstar, \yt\right)\leq 1$, hence it holds that if it is upperbounded  by $U$, then $P\left(\ei\lvert w^{*}, \xswstar, \yt\right)\leq U^\rho$ for any $\rho \in [0,1]$. %\amit{I have some problems with this argument. It is true that $\pr{\ei|\ldots}\leq 1$, but we did a union bound, so the RHS is not guaranteed to be bounded by 1 (union bound can give a bound that is above 1).}\ale{Check how this solved in Atia and SGT}

Using the law of total probability, we have
\ifsingle
\begin{small}
\begin{eqnarray}\label{eq:i_err_prob}
%\begin{aligned}
\hspace{-0.3cm}&&\hspace{-0.3cm}\pr{\eij\lvert W=\wstar} \nonumber\\
\hspace{-0.3cm}&&\hspace{-0.3cm}= \sum_{\xswstar} \sum_{\yt} \pr{\xswstar, \yt} \pr{\ei \lvert W=\wstar,\xswstar, \yt} \nonumber\\
\hspace{-0.3cm}&&\hspace{-0.3cm}= \sum_{\yt} \sum_{\xtpwprime} \sum_{\xfnwprime}  \pr{\xtpwprime, \yt\lvert \xfnwprime} \pr{\xfnwprime}\cdot \pr{\ei \lvert W=\wstar,\xswstar, \yt}\nonumber \\
\hspace{-0.3cm}&&\hspace{-0.3cm} \overset{(a)}{\leq} \frac{\paren{\sum_{\wprime \in \calItildejdiff}  \pw{\wprime}^s}^\rho}{\pw{\wstar}^{s\rho}}\cdot \sum_{\yt}\sum_{\xtpwprime} \sum_{\xfnwprime}
\pr{\yt, \xtpwprime\lvert \xfnwprime} \cdot \pr{\xfnwprime}
\frac{
\paren{\sum_{\xfpwprime}  \pr{\xfpwprime} \pr{\yt, \xtpwprime\lvert \xfpwprime}^s
}^\rho
}{
  \pr{\yt, \xtpwprime\lvert \xfnwprime}^{s\rho}
}\nonumber\\
\hspace{-0.3cm}&&\hspace{-0.3cm}\overset{(b)}{\leq}
\frac{\paren{\sum_{\wprime \in \calItildejdiff}  \pw{\wprime}^{\frac{1}{\paren{1+\rho})}}}^\rho}
{\pw{\wstar}^{\frac{\rho}{\left(1+\rho\right)}}}\cdot \sum_{\yt} \sum_{\xtpwprime} \sum_{\xfnwprime}
\pr{\yt, \xtpwprime\lvert \xfnwprime}\pr{\xfnwprime} \cdot \frac{
\left( \sum_{\xfpwprime}  \pr{\xfpwprime} \pr{\yt, \xtpwprime\lvert \xfpwprime}^{\frac{1}{1+\rho}}
\right)^\rho
}{
  \pr{\yt, \xtpwprime\lvert \xfnwprime})^{\frac{\rho}{1+\rho}}
} \nonumber\\
\hspace{-0.3cm}&&\hspace{-0.3cm} =
\frac{\paren{\sum_{\wprime \in \calItildejdiff}  \pw{\wprime}^{\frac{1}{\left(1+\rho\right)}} }^\rho}{\pw{\wstar}^{\frac{\rho}{\paren{1+\rho}}}} \cdot \sum_{\yt} \sum_{\xtpwprime}
\paren{\sum_{\xfpwprime}  \pr{\xfpwprime} \pr{\yt, \xtpwprime\lvert \xfpwprime}^{\frac{1}{1+\rho}}
}^{1+\rho} \nonumber\\
\hspace{-0.3cm}&&\hspace{-0.3cm} =
\sbrace{
\frac{\sum_{\wprime \in \calItildejdiff}  \pw{\wprime}^{\frac{1}{\paren{1+\rho}}}}{\pw{\wstar}^{\frac{1}{\paren{1+\rho}}}}
}^{\rho}\cdot \sum_{\yt} \sum_{\xtpwprime}\paren{\sum_{\xfpwprime}  \pr{\xfpwprime} \pr{\yt, \xtpwprime\lvert \xfpwprime}^{\frac{1}{1+\rho}}
}^{1+\rho}\nonumber\\
\hspace{-0.3cm}&&\hspace{-0.3cm} =
\left[
\frac{\sum_{\wprime\in\calItildejdiff} \pw{\wprime}^{\frac{1}{\paren{1+\rho}}}}{\pw{\wstar}^{\frac{1}{\paren{1+\rho}}}}
\right]^{\rho}\cdot \sum_{\yt} \sum_{\xtpwprime}\paren{\sum_{\xfpwprime}  \pr{\xfpwprime} \pr{\yt, \xtpwprime\lvert \xfpwprime}^{\frac{1}{1+\rho}}
}^{1+\rho},
%\end{aligned}
\end{eqnarray}
\end{small}
\else
\begin{small}
\begin{eqnarray}\label{eq:i_err_prob}
%\begin{aligned}
\hspace{-0.3cm}&&\hspace{-0.3cm}\pr{\eij\lvert W=\wstar} \nonumber\\
\hspace{-0.3cm}&&\hspace{-0.3cm}= \sum_{\xswstar} \sum_{\yt} \pr{\xswstar, \yt} \pr{\ei \lvert W=\wstar,\xswstar, \yt} \nonumber\\
\hspace{-0.3cm}&&\hspace{-0.3cm}= \sum_{\yt} \sum_{\xtpwprime} \sum_{\xfnwprime}  \pr{\xtpwprime, \yt\lvert \xfnwprime} \pr{\xfnwprime}\cdot\nonumber\\
\hspace{-0.3cm}&&\hspace{-0.3cm}\hspace{0.8cm} \pr{\ei \lvert W=\wstar,\xswstar, \yt}\nonumber \\
\hspace{-0.3cm}&&\hspace{-0.3cm} \overset{(a)}{\leq} \frac{\paren{\sum_{\wprime \in \calItildejdiff}  \pw{\wprime}^s}^\rho}{\pw{\wstar}^{s\rho}}\cdot\nonumber\\
\hspace{-0.3cm}&&\hspace{-0.3cm}\hspace{0.8cm}\sum_{\yt}\sum_{\xtpwprime} \sum_{\xfnwprime}
\pr{\yt, \xtpwprime\lvert \xfnwprime} \cdot \nonumber\\
\hspace{-0.3cm}&&\hspace{-0.3cm} \hspace{1.4cm}\pr{\xfnwprime}
\frac{
\paren{\sum_{\xfpwprime}  \pr{\xfpwprime} \pr{\yt, \xtpwprime\lvert \xfpwprime}^s
}^\rho
}{
  \pr{\yt, \xtpwprime\lvert \xfnwprime}^{s\rho}
}\nonumber\\
\hspace{-0.3cm}&&\hspace{-0.3cm}\overset{(b)}{\leq}
\frac{\paren{\sum_{\wprime \in \calItildejdiff}  \pw{\wprime}^{\frac{1}{\paren{1+\rho})}}}^\rho}
{\pw{\wstar}^{\frac{\rho}{\left(1+\rho\right)}}}\cdot\nonumber\\
\hspace{-0.3cm}&&\hspace{-0.3cm}  \hspace{0.8cm} \sum_{\yt} \sum_{\xtpwprime} \sum_{\xfnwprime}
\pr{\yt, \xtpwprime\lvert \xfnwprime}\pr{\xfnwprime} \cdot \nonumber\\
\hspace{-0.3cm}&&\hspace{-0.3cm} \hspace{1.6cm}\frac{
\left( \sum_{\xfpwprime}  \pr{\xfpwprime} \pr{\yt, \xtpwprime\lvert \xfpwprime}^{\frac{1}{1+\rho}}
\right)^\rho
}{
  \pr{\yt, \xtpwprime\lvert \xfnwprime})^{\frac{\rho}{1+\rho}}
} \nonumber\\
\hspace{-0.3cm}&&\hspace{-0.3cm} =
\frac{\paren{\sum_{\wprime \in \calItildejdiff}  \pw{\wprime}^{\frac{1}{\left(1+\rho\right)}} }^\rho}{\pw{\wstar}^{\frac{\rho}{\paren{1+\rho}}}} \cdot\nonumber\\
\hspace{-0.3cm}&&\hspace{-0.3cm} \hspace{0.8cm} \sum_{\yt} \sum_{\xtpwprime}
\paren{\sum_{\xfpwprime}  \pr{\xfpwprime} \pr{\yt, \xtpwprime\lvert \xfpwprime}^{\frac{1}{1+\rho}}
}^{1+\rho} \nonumber\\
\hspace{-0.3cm}&&\hspace{-0.3cm} =
\sbrace{
\frac{\sum_{\wprime \in \calItildejdiff}  \pw{\wprime}^{\frac{1}{\paren{1+\rho}}}}{\pw{\wstar}^{\frac{1}{\paren{1+\rho}}}}
}^{\rho}\cdot\nonumber\\
\hspace{-0.3cm}&&\hspace{-0.3cm}\hspace{0.8cm} \sum_{\yt} \sum_{\xtpwprime}\paren{\sum_{\xfpwprime}  \pr{\xfpwprime} \pr{\yt, \xtpwprime\lvert \xfpwprime}^{\frac{1}{1+\rho}}
}^{1+\rho}\nonumber\\
\hspace{-0.3cm}&&\hspace{-0.3cm} =
\left[
\frac{\sum_{\wprime\in\calItildejdiff} \pw{\wprime}^{\frac{1}{\paren{1+\rho}}}}{\pw{\wstar}^{\frac{1}{\paren{1+\rho}}}}
\right]^{\rho}\cdot
 \nonumber\\
\hspace{-0.3cm}&&\hspace{-0.3cm}\hspace{0.8cm} \sum_{\yt} \sum_{\xtpwprime}\paren{\sum_{\xfpwprime}  \pr{\xfpwprime} \pr{\yt, \xtpwprime\lvert \xfpwprime}^{\frac{1}{1+\rho}}
}^{1+\rho},
%\end{aligned}
\end{eqnarray}
\end{small}
\fi
in $(a)$ we substitute Eq. \eqref{eq:i_err_cond_prob_deriv}, and $(b)$ follows as $\xfnwprime$ and $\xfpwprime$ are two independent matrices of $i$ codewords with the same size and distribution. Thus, $\pr{ \xfnwprime} =\pr{\xfpwprime} $ and $\pr{\yt, \xtpwprime\lvert \xfnwprime} = \pr{\yt, \xtpwprime\lvert \xfpwprime}$, and the inequality holds for any $s>0$. In particular, for $s=1/\paren{1+\rho}$.

We first provide the following lemma to then continue with the error probability analysis using Eq.~\eqref{eq:i_err_prob}.
\begin{lemma}\label{lemma:w_split}
Let $i\in\sbrace{K},j\in\allnchoosei,\alpha\geq 0$. Then
\begin{equation}\label{eq:set_partition}
\begin{split}
\sum_{\w\in\allnchoosek} \pw{\w}^{\alpha} &=\repsfact \sum_{\tpsetijcounter}\sum_{\fpsetijcounter}\pw{\w}^{\alpha}
\end{split}
\end{equation}
where
\begin{equation}
\begin{split}
&\tpseti=\cbrace{S_{1}:\abs{S_{1}}=K-i}, \nonumber \\
&\fpseti{S_1}=\cbrace{S_{2}:\abs{S_{2}}=i,\;S_{2}\cap S_{1}=\emptyset},\\
&\hspace{-1.3cm}\text{and} \\
&\w\in\allnchoosek\; s.t.\; S_{\w}=S_{1}\cup S_{2} \nonumber.
\end{split}
\end{equation}
\end{lemma}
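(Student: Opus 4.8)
The plan is to establish \eqref{eq:set_partition} by a straightforward double-counting argument: the right-hand side is nothing but an over-counted rewriting of the left-hand side, and the only substantive point is to identify the exact multiplicity with which each $K$-subset is counted.

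First I would note that the double sum on the right of \eqref{eq:set_partition} runs over all ordered pairs $\paren{S_1,S_2}$ of disjoint subsets of $\sbrace{N}$ with $\abs{S_1}=K-i$ and $\abs{S_2}=i$, and that each such pair determines a single $K$-subset $\sw=S_1\cup S_2$. The key step is then to count, for a fixed $K$-subset $S$ with $\abs{S}=K$, how many ordered pairs $\paren{S_1,S_2}$ with $\tpsetijcounter$ and $\fpsetijcounter$ satisfy $S_1\cup S_2=S$: such a pair is determined precisely by the choice of the $i$-element block $S_2\subseteq S$, since then $S_1=S\setminus S_2$ is forced, so this count is exactly $\kchoosei$ and does not depend on $S$ (nor on $\alpha$). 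Grouping the terms of the double sum according to the value of $\sw$ therefore gives
\[
\sum_{\tpsetijcounter}\sum_{\fpsetijcounter}\pw{\sw}^{\alpha}=\kchoosei\sum_{\w\in\allnchoosek}\pw{\w}^{\alpha},
\]
and multiplying both sides by $\repsfact$ yields \eqref{eq:set_partition}.

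There is no real obstacle here; the statement is an elementary combinatorial identity and the argument above is essentially complete. The only point deserving care in the write-up is notational bookkeeping: keeping the index $\w$ distinct from the set $\sw$ it labels, and verifying that the constraints $\abs{S_1}=K-i$, $\abs{S_2}=i$, $S_1\cap S_2=\emptyset$ indeed give a clean $\kchoosei$-to-one correspondence between such pairs and $K$-subsets. I would also remark that $\alpha$ plays no role in the computation, so the same identity holds with $\pw{\cdot}^{\alpha}$ replaced by any function of $\sw$; this is how the lemma will later be applied (for $\alpha=1$ and $\alpha=\frac{1}{1+\rho}$) to rewrite the sums appearing in $\esj$.
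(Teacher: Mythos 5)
Your proof is correct and follows essentially the same double-counting approach as the paper: both arguments reduce to showing that each $K$-subset $\sw$ is counted exactly $\kchoosei$ times in the double sum. The only cosmetic difference is that you compute the multiplicity of each fixed $\sw$ directly (via the choice of $S_2\subseteq S$), whereas the paper counts the total number of terms, $\nchooseki\nkichoosei=\nchoosek\kchoosei$, and then invokes symmetry to conclude each index appears equally often; your version is, if anything, slightly more self-contained.
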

\begin{proof}
The \ac{lhs} of Eq.~\eqref{eq:set_partition} in Lemma~\ref{lemma:w_split} has $\nchoosek$ elements. We count the number of elements of $\sum_{\tpsetijcounter}\sum_{\fpsetijcounter}$ in the \ac{rhs}. The set $\tpseti$ contains $\abs{\tpseti}=\nchooseki$ items. For a given $\tpsetijcounter$, the set $\fpsetis$ contains $\abs{\fpsetis}=\nkichoosei$ items. The overall number of combinations is given by
\begin{eqnarray}
    &&\hspace{-0.6cm}\nchooseki\cdot\nkichoosei \nonumber \\
    &=&\frac{N!}{\fact{K-i}\fact{N-\paren{K-i}}}\cdot \frac{\fact{N-\paren{K-i}}}{i!\fact{N-K}} \nonumber \\
    &=&\frac{N!}{K!\fact{N-K}} \cdot \frac{K!}{\fact{K-i}i!} \nonumber \\
    &=&\nchoosek\cdot\kchoosei \nonumber
\end{eqnarray}
By symmetry, every index $\w$ in the \ac{rhs} of \eqref{eq:set_partition} in Lemma \ref{lemma:w_split} appears the same number of times, so an equality is achieved by dividing by the \ac{rhs} by $\kchoosei$.
\end{proof}

One way to interpret the sets $\tpseti$ and $\fpsetis$ can be thought of as sets that iterate over \ac{tp}s and \ac{fp}s and the \ac{fn} for a fixed set of \ac{fn} items. Following the example in Section~\ref{sub:example}, the set number of errors is $i=2$. In this case, $\tpseti=\cbrace{\cbrace{1},\cbrace{4},\cbrace{5},\cbrace{6},\cbrace{7}}$. For a given item $\tpsetijcounter$, $\fpsetijcounter$ runs over the remaining $i$ items that do not overlap with $S_{1}$. For instance, for $S_{1}=\cbrace{1}$, we get $\fpsetis=\cbrace{\cbrace{2,3},\ldots,\cbrace{6,7}}$. Note that a specific set $S_{\w}=S_{1}\cup S_{2}$ can be chosen $\kchoosei=3$ different times. For instance, $\sw=\cbrace{1,2,3}$ can be chosen $3$ times. Each $S_{1}\in\cbrace{\cbrace{1},\cbrace{2},\cbrace{3}}$ yields an $\fpsetijcounter$ that results in $S_{1}\cup S_{2}=\cbrace{1,2,3}$. Thus, as demonstrated in the proof of Lemma \ref{lemma:w_split}, equality in \eqref{eq:set_partition} is achieved by dividing the \ac{rhs} by  $\kchoosei=3$.

Now, by substituting Eq. \eqref{eq:i_err_prob} in Eq. \eqref{eq:P(Ei)}, it follows that
\ifsingle
\begin{small}
\begin{eqnarray*}
%\begin{split}
\hspace{-0.3cm}&&\hspace{-0.3cm} \pr{\eij} = \sum_{\wstar\in \allnchoosek} \pw{\wstar} \pr{\eij|W=\wstar}\notag\\
\hspace{-0.3cm}&&\hspace{-0.3cm} \leq  \sum_{\wstar\in\allnchoosek} \pw{\wstar} \sbrace{
\frac{\sum_{\wprime\in\calItildejdiff} \pw{\wprime}^{\frac{1}{\left(1+\rho\right)}}}{\pw{\wstar}^{\frac{1}{\left(1+\rho\right)}}}}^{\rho} \sum_{\yt} \sum_{\xtpwprime} \notag\left( \sum_{\xfpwprime}  P\left(\xfpwprime\right) P\left( \yt, \xtpwprime\lvert \xfpwprime\right)^{\frac{1}{1+\rho}}
\right)^{1+\rho}
 \notag\\
\hspace{-0.3cm}&&\hspace{-0.3cm} =  \sum_{\wstar\in\allnchoosek} \pw{\wstar}^{1-\frac{\rho}{1+\rho}} {\left[
\sum_{\wprime\in\calItildejdiff}  \pw{\wprime}^{\frac{1}{\left(1+\rho\right)}}
\right]^{\rho}} \sum_{\yt} \sum_{\xtpwprime} \notag \left( \sum_{\xfpwprime}  P\left(\xfpwprime\right) P\left( \yt, \xtpwprime\lvert \xfpwprime\right)^{\frac{1}{1+\rho}}
\right)^{1+\rho} \\
\hspace{-0.3cm}&&\hspace{-0.3cm} =  \sum_{\wstar\in\allnchoosek} \pw{\wstar}^{\frac{1}{1+\rho}} {\left[\sum_{\wprime\in\calItildejdiff}  \pw{\wprime}^{\frac{1}{\left(1+\rho\right)}}
\right]^{\rho}} \sum_{\yt} \sum_{\xtpwprime} \notag \left( \sum_{\xfpwprime}  P\left(\xfpwprime\right) P\left( \yt, \xtpwprime\lvert \xfpwprime\right)^{\frac{1}{1+\rho}}
\right)^{1+\rho}\notag\\
\hspace{-0.3cm}&&\hspace{-0.3cm} \overset{(a)}{=} \repsfact \sum_{\tpsetijcounter} \left[\sum_{\fpsetijcounter} \pw{\wstar}^{\frac{1}{1+\rho}}\right]{\left[\sum_{\wprime\in\calItildejdiff}  \pw{\wprime}^{\frac{1}{\left(1+\rho\right)}}
\right]^{\rho}}\sum_{\yt} \sum_{\xtpwprime} \left( \sum_{\xfpwprime}  P\left(\xfpwprime\right) P\left( \yt, \xtpwprime\lvert \xfpwprime\right)^{\frac{1}{1+\rho}}
\right)^{1+\rho}\\
\hspace{-0.3cm}&&\hspace{-0.3cm} \overset{(b)}{\leq}  \repsfact \sum_{\tpsetijcounter}{\sbrace{\sum_{\fpsetijcounter} \pw{\wstar}^{\frac{1}{\left(1+\rho\right)}}
}^{1+\rho}} \sum_{\yt} \sum_{\xtpwprime} \left( \sum_{\xfpwprime}  \pr{\xfpwprime} \pr{\yt, \xtpwprime\lvert \xfpwprime}^{\frac{1}{1+\rho}}
\right)^{1+\rho} \\
\hspace{-0.3cm}&&\hspace{-0.3cm}\overset{(c)}{=}
\repsfact \sum_{\tpsetijcounter}{\sbrace{\sum_{\fpsetijcounter}  \pw{\wstar}^{\frac{1}{\paren{1+\rho}}}
}^{1+\rho}} \sbrace{\sum_{\yt}\sum_{\xtpwprimevec}\paren{\sum_{\xfpwprimevec}  \pr{\xfpwprimevec} \pr{\yt, \xtpwprimevec\lvert \xfpwprimevec}^{\frac{1}{1+\rho}}
}^{1+\rho}}^{T}
\nonumber\\
\hspace{-0.3cm}&&\hspace{-0.3cm}\overset{(d)}{=}
2^{-T\left(E_0(\rho) - \frac{\repsfact E_{s,j}(\rho,P_W)}{T}\right)},
%\end{split}
\end{eqnarray*}
\end{small}
\else
\begin{small}
\begin{eqnarray*}
%\begin{split}
\hspace{-0.3cm}&&\hspace{-0.3cm} \pr{\eij} = \sum_{\wstar\in \allnchoosek} \pw{\wstar} \pr{\eij|W=\wstar}\notag\\
\hspace{-0.3cm}&&\hspace{-0.3cm} \leq  \sum_{\wstar\in\allnchoosek} \pw{\wstar} \sbrace{
\frac{\sum_{\wprime\in\calItildejdiff} \pw{\wprime}^{\frac{1}{\left(1+\rho\right)}}}{\pw{\wstar}^{\frac{1}{\left(1+\rho\right)}}}}^{\rho}\\
\hspace{-0.3cm}&&\hspace{-0.3cm}\hspace{0.8cm} \sum_{\yt} \sum_{\xtpwprime} \notag\left( \sum_{\xfpwprime}  P\left(\xfpwprime\right) P\left( \yt, \xtpwprime\lvert \xfpwprime\right)^{\frac{1}{1+\rho}}
\right)^{1+\rho}
 \notag\\
\hspace{-0.3cm}&&\hspace{-0.3cm} =  \sum_{\wstar\in\allnchoosek} \pw{\wstar}^{1-\frac{\rho}{1+\rho}} {\left[
\sum_{\wprime\in\calItildejdiff}  \pw{\wprime}^{\frac{1}{\left(1+\rho\right)}}
\right]^{\rho}}\\
\hspace{-0.3cm}&&\hspace{-0.3cm}\hspace{0.8cm} \sum_{\yt} \sum_{\xtpwprime} \notag \left( \sum_{\xfpwprime}  P\left(\xfpwprime\right) P\left( \yt, \xtpwprime\lvert \xfpwprime\right)^{\frac{1}{1+\rho}}
\right)^{1+\rho} \\
\hspace{-0.3cm}&&\hspace{-0.3cm} =  \sum_{\wstar\in\allnchoosek} \pw{\wstar}^{\frac{1}{1+\rho}} {\left[\sum_{\wprime\in\calItildejdiff}  \pw{\wprime}^{\frac{1}{\left(1+\rho\right)}}
\right]^{\rho}}\\
\hspace{-0.3cm}&&\hspace{-0.3cm}\hspace{0.8cm} \sum_{\yt} \sum_{\xtpwprime} \notag \left( \sum_{\xfpwprime}  P\left(\xfpwprime\right) P\left( \yt, \xtpwprime\lvert \xfpwprime\right)^{\frac{1}{1+\rho}}
\right)^{1+\rho}\notag\\
\hspace{-0.3cm}&&\hspace{-0.3cm} \overset{(a)}{=} \repsfact \sum_{\tpsetijcounter} \left[\sum_{\fpsetijcounter} \pw{\wstar}^{\frac{1}{1+\rho}}\right]\\
\hspace{-0.3cm}&&\hspace{-0.3cm}\hspace{0.8cm}{\left[\sum_{\wprime\in\calItildejdiff}  \pw{\wprime}^{\frac{1}{\left(1+\rho\right)}}
\right]^{\rho}}\\
\hspace{-0.3cm}&&\hspace{-0.3cm}\hspace{0.8cm}\sum_{\yt} \sum_{\xtpwprime} \left( \sum_{\xfpwprime}  P\left(\xfpwprime\right) P\left( \yt, \xtpwprime\lvert \xfpwprime\right)^{\frac{1}{1+\rho}}
\right)^{1+\rho}\\
\hspace{-0.3cm}&&\hspace{-0.3cm} \overset{(b)}{\leq}  \repsfact \sum_{\tpsetijcounter}{\sbrace{\sum_{\fpsetijcounter} \pw{\wstar}^{\frac{1}{\left(1+\rho\right)}}
}^{1+\rho}}\\
\hspace{-0.3cm}&&\hspace{-0.3cm}\hspace{0.8cm} \sum_{\yt} \sum_{\xtpwprime} \left( \sum_{\xfpwprime}  \pr{\xfpwprime} \pr{\yt, \xtpwprime\lvert \xfpwprime}^{\frac{1}{1+\rho}}
\right)^{1+\rho} \\
\hspace{-0.3cm}&&\hspace{-0.3cm}\overset{(c)}{=}
\repsfact \sum_{\tpsetijcounter}{\sbrace{\sum_{\fpsetijcounter}  \pw{\wstar}^{\frac{1}{\paren{1+\rho}}}
}^{1+\rho}}
\\
\hspace{-0.3cm}&&\hspace{-0.3cm} \hspace{0.7cm} \sbrace{\sum_{\yt}\sum_{\xtpwprimevec}\paren{\sum_{\xfpwprimevec}  \pr{\xfpwprimevec} \pr{\yt, \xtpwprimevec\lvert \xfpwprimevec}^{\frac{1}{1+\rho}}
}^{1+\rho}}^{T}
\nonumber\\
\hspace{-0.3cm}&&\hspace{-0.3cm}\overset{(d)}{=}
2^{-T\left(E_0(\rho) - \frac{\repsfact E_{s,j}(\rho,P_W)}{T}\right)},
%\end{split}
\end{eqnarray*}
\end{small}
\fi
\ifsingle
\hspace{-0.2cm}where (a) follows from Lemma \ref{lemma:w_split} with $\alpha=\fraconerho$, (b) follows from the fact that for a fixed
\[
\tpsetijcounter, \quad\calItildejdiff\subset\cbrace{\w=\paren{S_{1},S_{2}}:\fpsetijcounter},
\]
(c) follows since $\xfpwprimevec,\xtpwprimevec$ represent a single column of $\xfpwprime,\xtpwprime$ respectively, as the elements of $\x$ are drawn in i.i.d. fashion, and (d) follows for
\else
\hspace{-0.2cm}where (a) follows from Lemma \ref{lemma:w_split} with $\alpha=\fraconerho$, (b) follows from the fact that for a fixed $\tpsetijcounter$, $\calItildejdiff\subset\cbrace{\w=\paren{S_{1},S_{2}}:\fpsetijcounter}$, (c) follows since $\xfpwprimevec,\xtpwprimevec$ represent a single column of $\xfpwprime,\xtpwprime$ respectively, as the elements of $\x$ are drawn in i.i.d. fashion, and (d) follows for
\fi
\ifsingle
\begin{small}
\begin{equation*}\label{eq:E_0}
\begin{split}
&E_0(\rho) =-\log \sum_{\yt}\sum_{\xtpwprimevec}\paren{ \sum_{\xfpwprimevec}  \pr{\xfpwprimevec} \pr{\yt, \xtpwprimevec\lvert \xfpwprimevec}^{\frac{1}{1+\rho}}
}^{1+\rho},
\end{split}
\end{equation*}
\end{small}
\else
\begin{small}
\begin{equation*}\label{eq:E_0}
\begin{split}
&E_0(\rho) = \\
&-\log \sum_{\yt}\sum_{\xtpwprimevec}\paren{ \sum_{\xfpwprimevec}  \pr{\xfpwprimevec} \pr{\yt, \xtpwprimevec\lvert \xfpwprimevec}^{\frac{1}{1+\rho}}
}^{1+\rho},
\end{split}
\end{equation*}
\end{small}
\fi
and
\begin{small}
\begin{equation*}\label{eq:E_s}
\begin{split}
\esj = \log \sum_{\tpsetijcounter}{\sbrace{\sum_{\fpsetijcounter}  \pw{\wstar}^{\frac{1}{\paren{1+\rho}}}
}^{1+\rho}}.
\end{split}
\end{equation*}
\end{small}
This completed the error probability proof.

\section{Derivative of $\esj$}\label{appendix:divE_s}
In this section, we show the derivative of $\esj$ for $\rho=0$. This result is obtained directly from \cite{slepian1973noiseless} as also elaborated in \cite[Chap. 1]{rezazadeh2019error}.
Let $\ftilderhos\triangleq \sum_{\fpsetijcounter} \pw{\wstar}^{\frac{1}{\paren{1+\rho}}}=\sum_{\fpsetijcounter} \pw{\Tset,\Fset}^{\frac{1}{\paren{1+\rho}}}$, $A=\ftilderhos^{1+\rho}=\exp{\log\paren{{\ftilderhos^{1+\rho}}}}$. Thus,
\ifsingle
\begin{small}
\begin{eqnarray}\label{eq:div_Es}
\hspace{-0.3cm}&&\hspace{-0.3cm} \derrho{A}  = A\derrho{\log\paren{A}}\nonumber\\
\hspace{-0.3cm}&&\hspace{-0.3cm}= A\derrho{\paren{1+\rho}\log\paren{\ftilderhos}}\nonumber\\
\hspace{-0.3cm}&&\hspace{-0.3cm}= A\paren{\log\paren{\ftilderhos}+(1+\rho)\frac{\partial \ftilderhos}{\partial\rho}}\nonumber\\
\hspace{-0.3cm}&&\hspace{-0.3cm} \overset{(a)}{=}  \ftilderhos^{1+\rho}\left(\log(\ftilderhos)+(1+\rho)\frac{\partial \ftilderhos}{\partial\rho}\right)\nonumber\\
\hspace{-0.3cm}&&\hspace{-0.3cm} \overset{(b)}{=} \ftilderhos^{1+\rho}\Biggr(\log(\ftilderhos)-\frac{1}{(1+\rho)} \sum_{\fpsetijcounter} \pw{\Tset,\Fset}^{\frac{1}{\paren{1+\rho}}} \log\paren{\pw{\Tset,\Fset}}\Biggr)\nonumber\\
\hspace{-0.3cm}&&\hspace{-0.3cm} \overset{(c)}{=} \paren{\sum_{\fpsetijcounter} \pw{\Tset,\Fset}^{\frac{1}{\paren{1+\rho}}}}^{1+\rho}\Biggr(\log(\ftilderhos) -\frac{1}{(1+\rho)} \sum_{\fpsetijcounter}  \pw{\Tset,\Fset}^{\frac{1}{\paren{1+\rho}}} \log\left(\pw{\Tset,\Fset}\right) \Biggr)\nonumber\\
\hspace{-0.3cm}&&\hspace{-0.3cm} \overset{(d)}{=} \left(\sum_{\fpsetijcounter} \pw{\Tset,\Fset}^{\frac{1}{\paren{1+\rho}}}\right)^{\rho}\Biggr(\sum_{\fpsetijcounter} \pw{\Tset,\Fset}^{\frac{1}{\paren{1+\rho}}}\log\frac{\sum_{\fpsetijcounter}  \pw{\Tset,\Fset}^{\frac{1}{\paren{1+\rho}}}}{\pw{\Tset,\Fset}^{\frac{1}{\paren{1+\rho}}}} \Biggr).\nonumber\\
\hspace{-0.3cm}&&\hspace{-0.3cm}
\end{eqnarray}
\end{small}
\else
\begin{small}
\begin{eqnarray}\label{eq:div_Es}
\hspace{-0.3cm}&&\hspace{-0.3cm} \derrho{A}  = A\derrho{\log\paren{A}}= A\derrho{\paren{1+\rho}\log\paren{\ftilderhos}}\nonumber\\
\hspace{-0.3cm}&&\hspace{-0.3cm}= A\paren{\log\paren{\ftilderhos}+(1+\rho)\frac{\partial \ftilderhos}{\partial\rho}}\nonumber\\
\hspace{-0.3cm}&&\hspace{-0.3cm} \overset{(a)}{=}  \ftilderhos^{1+\rho}\left(\log(\ftilderhos)+(1+\rho)\frac{\partial \ftilderhos}{\partial\rho}\right)\nonumber\\
\hspace{-0.3cm}&&\hspace{-0.3cm} \overset{(b)}{=} \ftilderhos^{1+\rho}\Biggr(\log(\ftilderhos)\nonumber\\
\hspace{-0.3cm}&&\hspace{-0.3cm} \hspace{0.6cm}-\frac{1}{(1+\rho)} \sum_{\fpsetijcounter} \pw{\Tset,\Fset}^{\frac{1}{\paren{1+\rho}}} \log\paren{\pw{\Tset,\Fset}}\Biggr)\nonumber\\
\hspace{-0.3cm}&&\hspace{-0.3cm} \overset{(c)}{=} \paren{\sum_{\fpsetijcounter} \pw{\Tset,\Fset}^{\frac{1}{\paren{1+\rho}}}}^{1+\rho}\Biggr(\log(\ftilderhos)\nonumber\\
\hspace{-0.3cm}&&\hspace{-0.3cm}\hspace{0.6cm} -\frac{1}{(1+\rho)} \sum_{\fpsetijcounter}  \pw{\Tset,\Fset}^{\frac{1}{\paren{1+\rho}}} \log\left(\pw{\Tset,\Fset}\right) \Biggr)\nonumber\\
\hspace{-0.3cm}&&\hspace{-0.3cm} \overset{(d)}{=} \left(\sum_{\fpsetijcounter} \pw{\Tset,\Fset}^{\frac{1}{\paren{1+\rho}}}\right)^{\rho}\Biggr(\sum_{\fpsetijcounter} \nonumber\\
\hspace{-0.3cm}&&\hspace{-0.3cm}\hspace{0.6cm}\pw{\Tset,\Fset}^{\frac{1}{\paren{1+\rho}}}\log\frac{\sum_{\fpsetijcounter}  \pw{\Tset,\Fset}^{\frac{1}{\paren{1+\rho}}}}{\pw{\Tset,\Fset}^{\frac{1}{\paren{1+\rho}}}} \Biggr).\nonumber\\
\hspace{-0.3cm}&&\hspace{-0.3cm}
\end{eqnarray}
\end{small}
\fi
\hspace{-0.1cm}where (a) follows replacing $A$, (b) by derivative of $\ftilderhos$, and (c) replacing $\ftilderhos$. (d) follows by applying that $\log a-\frac{1}{\paren{1+\rho}} \log b=  \log \frac{a}{b^{\frac{1}{\paren{1+\rho}}}}$.%, where $a=\ftilderhos$, $b=\pw{\Tset,\Fset}$, and $c=\sum_{\fpsetijcounter}  \pw{\Tset,\Fset}^{\frac{1}{\paren{1+\rho}}}$.

Now, we recall that $E_{s,j}(\rho,P_W)=\log\paren{\sum_{\fpsetijcounter}\ftilderhos^{1+\rho}}$, such that $E_{s,j}(\rho,P_W)=\log\sum_{\tpsetijcounter}A$. Hence. from Eq.\eqref{eq:div_Es}, we have
\ifsingle
\begin{small}
\begin{eqnarray}\label{eq:div_EsA}
\hspace{-2.0cm} && \hspace{-0.8cm} \frac{\partial E_{s,j}(\rho,P_W)}{\partial\rho}  = \frac{\partial\log\sum_{\tpsetijcounter}A}{\partial\rho}=\frac{\partial\sum_{\tpsetijcounter}A}{\partial\rho}\nonumber\\
\hspace{-2.5cm} &=& \hspace{-0.0cm}\frac{1}{\sum_{\tpsetijcounter}A}\sum_{\tpsetijcounter} \left(\sum_{\fpsetijcounter} \pw{\Tset,\Fset}^{\frac{1}{\paren{1+\rho}}}\right)^{\rho} \nonumber\\
&&\Biggr(\sum_{\fpsetijcounter} \pw{\Tset,\Fset}^{\frac{1}{\paren{1+\rho}}}\log\frac{\sum_{\fpsetijcounter} \pw{\Tset,\Fset}^{\frac{1}{\paren{1+\rho}}}}{\pw{\Tset,\Fset}^{\frac{1}{\paren{1+\rho}}}}\Biggr) .\nonumber\\
\hspace{-2.5cm}&&
\end{eqnarray}
\end{small}
\else
\begin{small}
\begin{eqnarray}\label{eq:div_EsA}
\hspace{-2.0cm} && \hspace{-0.8cm} \frac{\partial E_{s,j}(\rho,P_W)}{\partial\rho}  = \frac{\partial\log\sum_{\tpsetijcounter}A}{\partial\rho}=\frac{\partial\sum_{\tpsetijcounter}A}{\partial\rho}\nonumber\\
\hspace{-2.5cm} &=& \hspace{-0.0cm}\frac{1}{\sum_{\tpsetijcounter}A}\sum_{\tpsetijcounter} \left(\sum_{\fpsetijcounter} \pw{\Tset,\Fset}^{\frac{1}{\paren{1+\rho}}}\right)^{\rho} \nonumber\\
&&\Biggr(\sum_{\fpsetijcounter} \pw{\Tset,\Fset}^{\frac{1}{\paren{1+\rho}}}\nonumber\\
&&\log\frac{\sum_{\fpsetijcounter} \pw{\Tset,\Fset}^{\frac{1}{\paren{1+\rho}}}}{\pw{\Tset,\Fset}^{\frac{1}{\paren{1+\rho}}}}\Biggr) .\nonumber\\
\hspace{-2.5cm}&&
\end{eqnarray}
\end{small}
\fi
Next, setting $\rho=0$ in Eq.~\eqref{eq:div_EsA}, and since
\[
\sum_{\tpsetijcounter}A|_{\rho=0}=\sum_{\tpsetijcounter}\sum_{\fpsetijcounter}  \pw{\Tset,\Fset}=1,
\]
we have
\ifsingle
\begin{small}
\begin{eqnarray}\label{eq:div_Es_0}
&&\hspace{-0.5cm}\frac{\partial E_{s,j}(\rho,P_W)}{\partial\rho} |_{\rho=0} \nonumber\\
&=& \sum_{\tpsetijcounter} \Biggr(\sum_{\fpsetijcounter} \pw{\Tset,\Fset}\log\frac{\sum_{\fpsetijcounter}  \pw{\Tset,\Fset}}{\pw{\Tset,\Fset}}\Biggr) \nonumber\\
&=& \sum_{\tpsetijcounter} \paren{\sum_{\fpsetijcounter}  \pw{\Tset,\Fset}\log\frac{\pr{\Tset}}{\pw{\Tset,\Fset}}} \nonumber\\
&=& H\paren{P_{\Fset|\Tset}}.\nonumber
\end{eqnarray}
\end{small}
\else
\begin{small}
\begin{eqnarray}\label{eq:div_Es_0}
&&\hspace{-0.5cm}\frac{\partial E_{s,j}(\rho,P_W)}{\partial\rho} |_{\rho=0} \nonumber\\
&=& \sum_{\tpsetijcounter} \Biggr(\sum_{\fpsetijcounter} \pw{\Tset,\Fset}\nonumber\\
&&\log\frac{\sum_{\fpsetijcounter}  \pw{\Tset,\Fset}}{\pw{\Tset,\Fset}}\Biggr) \nonumber\\
&=& \sum_{\tpsetijcounter} \paren{\sum_{\fpsetijcounter}  \pw{\Tset,\Fset}\log\frac{\pr{\Tset}}{\pw{\Tset,\Fset}}} \nonumber\\
&=& H\paren{P_{\Fset|\Tset}}.\nonumber
\end{eqnarray}
\end{small}
\fi
\end{document}